\documentclass[a4paper,USenglish,cleveref,autoref]{lipics-v2021}
\hideLIPIcs
\nolinenumbers

\usepackage{amsmath,amssymb,amsthm,mathtools}
\usepackage{algorithm}
\usepackage{algorithmic}
\usepackage{aliascnt}
\usepackage{subcaption}
\usepackage{hyperref}
\usepackage{xcolor}
\usepackage{microtype}
\usepackage{framed}
\hypersetup{
  colorlinks=true,
  linkcolor=blue!60!black,
  citecolor=blue!60!black,
  urlcolor=blue!60!black
}
\makeatletter
\let\lipics@original@ps@headings\ps@headings
\def\ps@headings{%
  \lipics@original@ps@headings
  \ifx\@hideLIPIcs\@undefined\else
    \let\@oddfoot\@empty
  \fi
}
\makeatother
\usepackage{tikz}
\usetikzlibrary{patterns,arrows.meta,calc}

\newaliascnt{question}{theorem}
\newtheorem{question}[question]{Question}
\aliascntresetthe{question}

\crefname{theorem}{Theorem}{Theorems}
\Crefname{theorem}{Theorem}{Theorems}
\crefname{corollary}{Corollary}{Corollaries}
\Crefname{corollary}{Corollary}{Corollaries}
\crefname{definition}{Definition}{Definitions}
\Crefname{definition}{Definition}{Definitions}
\crefname{proposition}{Proposition}{Propositions}
\Crefname{proposition}{Proposition}{Propositions}
\crefname{observation}{Observation}{Observations}
\Crefname{observation}{Observation}{Observations}
\crefname{lemma}{Lemma}{Lemmas}
\Crefname{lemma}{Lemma}{Lemmas}
\crefname{claim}{Claim}{Claims}
\Crefname{claim}{Claim}{Claims}
\crefname{problem}{Problem}{Problems}
\Crefname{problem}{Problem}{Problems}
\crefname{conjecture}{Conjecture}{Conjectures}
\Crefname{conjecture}{Conjecture}{Conjectures}
\crefname{question}{Question}{Questions}
\Crefname{question}{Question}{Questions}
\crefname{example}{Example}{Examples}
\Crefname{example}{Example}{Examples}
\crefname{fact}{Fact}{Facts}
\Crefname{fact}{Fact}{Facts}
\crefname{remark}{Remark}{Remarks}
\Crefname{remark}{Remark}{Remarks}
\crefname{section}{Section}{Sections}
\Crefname{section}{Section}{Sections}
\crefname{subsection}{Section}{Sections}
\Crefname{subsection}{Section}{Sections}
\crefname{subsubsection}{Section}{Sections}
\Crefname{subsubsection}{Section}{Sections}
\crefname{equation}{Equation}{Equations}
\Crefname{equation}{Equation}{Equations}
\crefname{algorithm}{Algorithm}{Algorithms}
\Crefname{algorithm}{Algorithm}{Algorithms}

\newcommand{\LOCAL}{\mathsf{LOCAL}}
\newcommand{\CONGEST}{\mathsf{CONGEST}}
\newcommand{\Vol}{\mathsf{VOLUME}}

\newcommand{\xor}{\oplus}
\newcommand{\insub}{_{\mathrm{in}}}
\newcommand{\outsub}{_{\mathrm{out}}}
\newcommand{\LeafColoring}{\mathsf{LeafColoring}}
\newcommand{\HTHC}{\mathsf{Hierarchical}\text{-}\mathsf{THC}}

\newcommand{\PathToLeafK}{\Pi_\mathsf{PTL}^k}
\newcommand{\PathToLeaf}{\Pi_\mathsf{PTL}}
\newcommand{\HC}{\Pi_{\mathsf{HC}}}
\renewcommand{\L}{\mathsf{L}}
\newcommand{\R}{\mathsf{R}}
\renewcommand{\P}{\mathrm{P}}
\newcommand{\D}{\mathsf{D}}
\newcommand{\X}{\mathsf{X}}
\newcommand{\B}{\mathsf{B}}
\newcommand{\U}{\mathsf{U}}
\newcommand{\Cyc}{\mathsf{Cyc}}
\newcommand{\LC}{\mathrm{LC}}
\newcommand{\RC}{\mathrm{RC}}
\newcommand{\NC}{\mathrm{NC}}
\newcommand{\DC}{\mathrm{DC}}
\newcommand{\val}{\operatorname{val}}
\newcommand{\child}{\operatorname{child}}

\title{New Complexity Classes in Locally Checkable Labeling for Local Computation Algorithms}
\titlerunning{New LCL Complexity Classes for LCAs}

\author{Sijin Peng}{CSAIL, MIT, Cambridge, United States}{sijinp@mit.edu}{https://orcid.org/0009-0007-4377-1437}{}

\authorrunning{S. Peng}

\Copyright{Sijin Peng}
\DOIPrefix{}

\ccsdesc[300]{Theory of computation~Distributed algorithms}
\ccsdesc[500]{Theory of computation~Streaming, sublinear and near linear time algorithms}

\keywords{Local Computation Algorithms, Volume Model, Locally Checkable Labeling}

\acknowledgements{The LCL construction, lower bound analysis, and high-level ideas for upper bound algorithms in this paper were discovered by the author. At the same time, most of the proof details in \cref{sec:log,sec:poly}, including \cref{def:poly-strong-algo}, and figures in \cref{sec:Intro,sec:tech-overview} were assisted by the use of OpenAI Codex. All AI outputs were reviewed and edited by the author, who takes full responsibility for the correctness, originality, and integrity of this paper.}

\begin{document}
\maketitle

\begin{abstract}
Local Computation Algorithms (LCAs), introduced by Rubinfeld, Tamir, Vardi, and Xie (2011), are a special type of sublinear algorithms that, given probing access to a possibly massive input, are required to provide query access to a consistent solution, without maintaining a state between different queries. In this paper, we try to understand LCA through the lens of complexity classifications, described by the following question: Given a target complexity function $f(n)$, is there a problem whose local computation complexity is $f(n)$, up to polylogarithmic factors?

We restrict our focus to Locally Checkable Labeling (LCL) problems, which can be seen as constant-degree constraint satisfaction problems. Possible complexity classes of this problem family have been extensively studied in various distributed computation models, including the $\Vol$ model proposed by Rosenbaum and Suomela (2020), which is an invariant of local computation algorithms with additional locality requirements.

In this paper, we provide new LCL complexity constructions in the $\Vol$ model, and generalize the results to LCAs. Specifically, we show that there are LCLs whose probe complexities in the $\Vol$ and LCA models are $\Theta(\log^k n)$ and $\tilde \Theta(n^{p/q})$ for any positive integer $k \ge 1$ and rational $p/q \in (0,1]$. Our approach, completely different from the approach to a similar result in the distributed $\LOCAL$ model by Balliu et al. (2018), is to \emph{stack} instances of complexity $\Theta(\log n)$ and $\tilde \Theta(n^{1/k})$ in the $\Vol$ model constructed by Rosenbaum and Suomela (2020).
\end{abstract}

\section{Introduction}
\label{sec:Intro}

Local computation algorithms (LCAs), proposed in \cite{alon2012space, rubinfeld2011fast}, are a special type of sublinear algorithms typically applied to graph problems involving multiple output bits. An LCA is expected to find only a small part of the solution each time, specified by \emph{queries}, through a small number of \emph{probes} that give the algorithm access to the input and adjacency list of a vertex in the graph. Following most of the LCA literature, in this paper, the input graph has a constant maximum degree, so it is the same up to a constant factor for a probe to return the complete adjacency list of a vertex as to return one entry of the adjacency list.

The main challenge in designing a local computation algorithm is consistency: when there are multiple feasible solutions to the instance, multiple queries to different parts of the solution should provide consistent local solutions that point to a \emph{single} globally feasible solution without maintaining state across queries. For graph coloring, for example, a query may ask for the color of one vertex, and the answers to multiple queries must together form a single feasible coloring. To make this possible, if an LCA needs randomness, the same randomness must be shared across queries.

The motivation behind LCA is to provide an efficient and parallelizable way to gain access to a small part of a large solution to a massive problem. This shares the spirit with distributed computing and property testing. See \cite{levi2017centralized} for a survey on common methods in LCA and its connection to property testing and various distributed models. LCAs have also been connected to several other fields of theoretical computer science, including online algorithms \cite{mansour2012converting}, graph sparsification \cite{azarmehr2025stochastic}, distributed optimization \cite{london2017distributed}, coding theory \cite{li2025constructing, rubinfeld2011fast}, and algorithmic game theory \cite{hassidim2016local}. However, most existing literature on LCA takes an algorithm-designer perspective, seeking the optimal probe complexity of concrete problems. In this paper, we ask the question the other way around: Given a target complexity $f(n)$, is there a problem whose probe complexity is exactly or roughly $f(n)$ up to a polylogarithmic factor?

This question becomes meaningless if we take arbitrary problems into consideration. For example, one can take a problem with $\Theta(n)$ probe complexity and add a promise to the input instance that each connected component only has size $f(n)$, likely leading to a problem of probe complexity $\Theta(f(n))$. Another candidate with probe complexity $\Theta(f(n))$ is to give each vertex $v$ a radius $r_v \le \log_\Delta f(n)$ and let each vertex collect the exact number of vertices within distance $r_v$ from $v$. From the examples above, it is essential to select a suitable family of problems that excludes trivial issues, while also striving to include a large class of meaningful problems.

Our approach, as is standard in distributed graph algorithms, is to restrict ourselves to the family of locally checkable labeling (LCL) problems. To talk about LCL, we should first define the graph family these problems are defined on. In this paper, we will analyze LCLs on the family of all bounded-degree graphs and the family of all bounded-degree trees. We cannot make additional promise about the input structure, so the first example above does not fall into our consideration.

In a typical LCL setting, each vertex in a \emph{constant}-degree graph receives its unique ID and possibly an input label from a \emph{constant} number of possible inputs, and each vertex should provide an output label from a \emph{constant} number of choices. Furthermore, the feasibility of a solution is \emph{locally checkable}. This means that each vertex can check whether its local constraint is satisfied with information from a constant-hop neighborhood, and a global solution satisfies the requirement of the problem if all vertices are locally satisfied. One may also think of LCLs as CSPs on constant-degree constraint graphs. The second example given above is not an LCL because its label sets or checkability radius are not constant. At the same time, LCLs still include classic problems, such as graph coloring and maximal independent set.

Since the work by Naor and Stockmeyer \cite{naor1993can}, the possible complexity of LCLs has been extensively studied in the context of distributed graph algorithms \cite{balliu2026distributed2,balliu2026distributed, bousquet2024local,chang2019exponential,chang2023distributed2,dhar2024local}. There is now a substantial understanding of LCL complexities in the $\LOCAL$ model on bounded-degree graphs (see \cite{balliu2020much} for the almost-complete complexity spectrum), trees \cite{balliu2022efficient, balliu2021locally, chang2020complexity, chang2023distributed, grunau2022landscape}, and other special types of graphs \cite{balliu2019distributed, berlow2025separating, brandt2017lcl, grebik2023local}. This research paradigm has also extended to other models and settings \cite{akbari2021locality, balliu2021locally2, chang2023distributed, schmid2026lcls}, among which the $\Vol$ model proposed by Rosenbaum and Suomela \cite{rosenbaum2020seeing} has a close connection to local computation algorithms.

The $\Vol$ model is motivated to provide a more fine-grained resource analysis compared to other common distributed models such as $\LOCAL$ and $\CONGEST$. This model shares a similar spirit with LCA, using probes to measure the cost of an algorithm, but they are different in two aspects: 
\begin{itemize}
    \item First, any time in the execution of a $\Vol$ algorithm, the set of vertices probed by the algorithm should always be a connected component that includes the current vertex being queried. In other words, $\Vol$ algorithms cannot perform a far probe with low cost.
    \item Another difference lies in the randomness: As mentioned above, in local computation algorithms, different queries share the single random string $r$, and the algorithm can freely access each bit of $r$. In the $\Vol$ model, different queries still share the same randomness, but the algorithm only has restricted access to it. Specifically, each vertex $v$ in the graph has an independent random string $r_v$ that jointly constitutes $r$, and for a query, the algorithm can access $r_v$ only after it probes $v$.
\end{itemize}
In these two dimensions, the $\Vol$ model more closely resembles common distributed models in which information only propagates through network channels abstracted as edges in the graph, and the algorithm needs to use additional resources to get information from vertices far away.

\subsection{Main Results}

In this paper, we consider LCLs with probe complexity $\Omega(\log n)$ in both the LCA and $\Vol$ models for randomized algorithms only. To the best of the author's knowledge, no prior work addresses the LCL complexity classification for LCAs. Several existing papers have studied randomized $\Vol$ complexities of LCLs \cite{brandt2021randomized,grunau2022landscape,rosenbaum2020seeing}. See \cref{fig:rand-vol-complexity} for an illustration of the status of the randomized $\Vol$ complexity landscape for bounded-degree graphs before our work. We note here that \cite{rosenbaum2020seeing} also includes results for the deterministic $\Vol$ model, but current knowledge on this model is largely limited by the fact that it is unknown whether there is either a gap theorem or an explicit LCL construction between $\Omega(\log^\star n)$ and $o(n)$ for the deterministic $\Vol$ model.

\begin{figure}[htbp]
    \centering
    \begin{tikzpicture}[
    x=1cm,y=1cm,
    axis/.style={line width=0.65pt},
    tick/.style={line width=0.65pt},
    construction/.style={blue!70!black, fill=blue!70!black},
    gapmark/.style={
        blue!70!black,
        line width=2.8pt,
        line cap=round
    },
    ticklabel/.style={font=\small},
    openmark/.style={blue!70!black, font=\bfseries\Large, inner sep=0pt},
    legendlabel/.style={font=\scriptsize, anchor=west}
]

\coordinate (one)      at (0,0);
\coordinate (lstar)    at (1.35,0);
\coordinate (sqrtlog)  at (2.85,0);
\coordinate (logn)     at (4.35,0);
\coordinate (subpoly)  at (5.50,0);
\coordinate (nthird)   at (7.15,0);
\coordinate (nhalf)    at (8.55,0);
\coordinate (n)        at (9.90,0);

\draw[axis] (-0.35,0) -- (10.25,0);

\foreach \p/\lab in {
    one/{$1$},
    lstar/{$\log^\star n$},
    sqrtlog/{$\sqrt{\log n}$},
    logn/{$\log n$},
    subpoly/{$n^{o(1)}$},
    nthird/{$n^{1/3}$},
    nhalf/{$n^{1/2}$},
    n/{$n$}
}{
    \draw[tick] ($(\p)+(0,-0.32)$) -- ($(\p)+(0,0.32)$);
    \node[ticklabel,below=0.32cm] at (\p) {\lab};
}

\foreach \p in {one,lstar,logn,nthird,nhalf,n}{
    \fill[construction] (\p) circle (3.0pt);
}

\foreach \x in {0.25,0.375,0.50,0.60,0.75}{
    \fill[construction] ($(subpoly)!\x!(nthird)$) circle (2.4pt);
}

\newcommand{\volgapcross}[1]{%
    \draw[gapmark] ($(#1)+(-0.14,-0.14)$) -- ($(#1)+(0.14,0.14)$);
    \draw[gapmark] ($(#1)+(-0.14,0.14)$) -- ($(#1)+(0.14,-0.14)$);
}

\coordinate (gapone)      at ($(one)!0.5!(lstar)$);
\coordinate (gaplstar)    at ($(lstar)!0.5!(sqrtlog)$);
\coordinate (opensqrt)    at ($(sqrtlog)!0.5!(logn)$);
\coordinate (openpolylog) at ($(logn)!0.5!(subpoly)$);
\coordinate (openpolyone) at ($(nthird)!0.5!(nhalf)$);
\coordinate (openpolytwo) at ($(nhalf)!0.5!(n)$);

\volgapcross{gapone}
\volgapcross{gaplstar}

\node[openmark] at ($(opensqrt)+(0,-0.04)$) {?};
\node[openmark] at ($(openpolylog)+(0,-0.04)$) {?};
\node[openmark] at ($(openpolyone)+(0,-0.04)$) {?};
\node[openmark] at ($(openpolytwo)+(0,-0.04)$) {?};

\node[ticklabel,below=0.32cm] at ($(subpoly)!0.50!(nthird)$) {$\cdots$};

\node[legendlabel,anchor=center] at ($(one)!0.5!(n)+(0,0.78)$) {%
    {\color{blue!70!black}\raisebox{0.12ex}{\large$\bullet$}} known construction
    \hspace{0.65cm}
    {\color{blue!70!black}\raisebox{-0.12ex}{\Large\bfseries$\times$}} known gap theorem
    \hspace{0.65cm}
    {\color{blue!70!black}\bfseries\normalsize ?} unknown interval%
};

\end{tikzpicture}
    \caption{The LCL complexity landscape of randomized $\Vol$ algorithms from previous works for bounded-degree graphs. The blue dots between $n^{o(1)}$ and $n^{1/3}$ correspond to constructions for $n^{1/k}$ for any integer $k \ge 3$.}
    \label{fig:rand-vol-complexity}
\end{figure}

Before our work, one can construct LCLs with randomized $\Vol$ complexity $\Theta(1)$, $\Theta(\log^\star n)$, $\Theta(\log n)$, and $\tilde \Theta(n^{1/k})$ for any positive integer $k$, where $\tilde \Theta$ ignores polylogarithmic factors. The constructions for $\Theta(1)$ and $\Theta(\log^\star n)$ complexity come from constructions in the $\LOCAL$ model, while other constructions are given in \cite{rosenbaum2020seeing}. It turns out that all LCL constructions in \cref{fig:rand-vol-complexity} have the same complexity in the $\Vol$ model when we restrict the instance to bounded-degree trees. Two gap theorems in the randomized $\Vol$ model, $\omega(1) - o(\log^\star n)$ gap and $\omega(\log^\star n) - o(\sqrt{\log n})$ gap, are given in \cite{grunau2022landscape} and \cite{brandt2021randomized} respectively. Unfortunately, these constructions and gap theorems do not directly transfer to LCAs.

At the same time, there are still regions where we do not know either a separation theorem or an LCL construction. It is explicitly asked in \cite{rosenbaum2020seeing} whether more randomized $\Vol$ complexities exist in the polylogarithmic regime, and whether the set of possible complexities in the polynomial regime is ``dense'', meaning that for any two reals $0 < a < b \le 1$, there is a problem with complexity $\Omega(n^a)$ and $o(n^b)$.

In this paper, we provide affirmative answers to both questions, and extend the constructions and analysis further to the LCA setting and bounded-degree trees, as stated in \cref{thm:log-main} and \cref{thm:poly-main}. Adding our results gives the new complexity landscape for the $\Vol$ model shown in \cref{fig:rand-vol-complexity-new}.

\begin{theorem}\label{thm:log-main}
    For each positive integer $k$, there exists an LCL whose randomized LCA and $\Vol$ complexities are both $\Theta(\log^k n)$ for both bounded-degree graphs and trees.
\end{theorem}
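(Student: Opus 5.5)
The plan is to stack $k$ copies of the $\Theta(\log n)$ construction of Rosenbaum and Suomela. I recall it as the $\LeafColoring$ problem on (pseudo-)binary trees: a vertex either outputs the color of some leaf reachable by following right/left child pointers, or certifies that its local structure is invalid. A random-walk / random-descent argument gives $O(\log n)$ probes, and a balanced-tree instance forces $\Omega(\log n)$.

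\textbf{Construction.} Build hierarchical instances, in the spirit of $\HTHC$ and the $\PathToLeafK$ notation. Level-$1$ gadgets are $\LeafColoring$ trees. A level-$i$ instance is a $\LeafColoring$-type tree in which every ``leaf'' is replaced by the root of a level-$(i-1)$ instance. The color that a level-$i$ leaf must propagate is the output computed at the root of its attached level-$(i-1)$ instance. All of this is encoded with constant-size input labels: a level index in $\{1,\dots,k\}$ and child-direction bits. Local checkability is preserved because each level's constraints are the $\LeafColoring$ constraints plus a constant-radius consistency rule at interface vertices. Malformed instances are handled by adding an error label $\X$ for locally detectable inconsistencies, so the problem is well defined on all bounded-degree graphs, and the hard instances are trees.

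\textbf{Upper bound.} Prove by induction on $i$ that a query at a level-$i$ vertex is answered with $O(\log^i n)$ probes in the $\Vol$ model, and hence in the LCA model, which is at least as powerful.
\begin{itemize}
\item The level-$i$ random descent visits $O(\log n)$ level-$i$ vertices in expectation, or with high probability after amplification using the shared per-vertex randomness.
\item The only expensive steps are reaching a leaf and reading its color. That requires one recursive level-$(i-1)$ query, costing $O(\log^{i-1} n)$.
\end{itemize}
Multiplying gives $O(\log^i n)$.

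The descent must be consistent across queries. Following the original $\LeafColoring$ algorithm, each vertex's choice of child depends only on its own random bits, and the output is the color at the end of the canonical path. Different queries on the same path then agree. Paths of length more than $c\log n$ are cut by an escape option (the ``no leaf nearby'' case), with the same subtree-size argument as in the original.

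\textbf{Lower bound.} I expect this to be the main obstacle, especially against LCAs, which may probe far vertices and share global randomness. The hard instance is a complete binary tree of depth about $\log n/k$ at each level, with level-$(i-1)$ trees hanging at the leaves, giving $n^{\Theta(1)}$ vertices overall.

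The argument goes by induction on levels. The root's correct output depends on the color of a leaf that a correct, consistent solution must effectively select. Following the Rosenbaum--Suomela indistinguishability argument, an algorithm using $o(\log^{i} n)$ probes leaves unexplored, with constant probability, $\Omega(\log n)$ distinct level-$(i-1)$ subinstances along any candidate path. Each of these requires $\Omega(\log^{i-1} n)$ probes to resolve. The adversary then flips the input colors deep inside an unresolved one, which changes the required answer without changing the probed view.

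For LCAs, far probes do not help. The IDs are a random permutation, so a far probe lands on a uniformly random vertex that, with high probability, lies outside the relevant $O(\log^k n)$-size region, and it can be simulated away with a small loss in success probability.

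I would first prove the $\Vol$ lower bound cleanly by a direct-sum argument: the root needs $\Omega(\log n)$ independent level-$(i-1)$ answers, each of which costs $\Omega(\log^{i-1} n)$ by induction. I would then lift it to LCAs by this random-ID simulation. Handling the correlation between subproblems in the direct-sum step is the delicate part.
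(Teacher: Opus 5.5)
\paragraph{The construction does not reach $\log^k n$.} You attach a level-$(i-1)$ instance only at the leaves of each level-$i$ tree, and you keep the $\LeafColoring$ rule that a vertex copies the color of one child. A root-to-leaf descent therefore meets exactly \emph{one} lower-level instance. Your own upper-bound reasoning then gives the additive recurrence $T_i = O(\log n) + T_{i-1}$, not a product. So your stacked problem is solvable with $O(k\log n)=O(\log n)$ probes. For the same reason, your lower-bound claim that $\Omega(\log n)$ unresolved level-$(i-1)$ sub-instances lie along every candidate path is false for this problem, since each path ends in only one. No $\Omega(\log^k n)$ bound can hold for it when $k\ge 2$.

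\paragraph{The missing idea, and how the paper handles the rest.} You need to attach a fresh lower-level instance to \emph{every} vertex (the $\NC$ port in the paper). You also need to replace ``copy some child's color'' by a rule that forces the root to depend on all attached instances along its path. The paper uses the parity rule $b\outsub(v)=b\insub(v)\xor b\outsub(\NC(v))\xor b\outsub(\child(v,d(v)))$ of $\PathToLeafK$, which yields the recurrence $T(h)\le C\log n\cdot T(h-1)+C\log n$.

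For the lower bound, the adaptivity issue you call ``delicate'' is resolved exactly, not by a probabilistic direct sum. The $\LC$ and $\RC$ subtrees of every vertex are drawn from instances forcing the same root bit, so the root bit $b_I$ is identical for every legal choice of $d$. Induction on $\sum_i a_i$ then gives, for every vertex set $V_0$ with $|V_0|<\prod_i a_i$, an input-preserving bijection between $\mathcal{S}^0$ and $\mathcal{S}^1$. The bijection flips the $\NC$ subtree if it is lightly probed, and otherwise flips the $\LC$ and $\RC$ subtrees simultaneously. Since $V_0$ is an arbitrary set rather than a connected one, Yao's principle gives the LCA bound directly, with no random-ID argument for far probes.

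Finally, an escape label for ``no leaf within $c\log n$'' is not checkable with constant radius, and it is not needed. In-degree at most one in the consistency graph already bounds the probability that a random descent survives $t$ steps by $n2^{-t}$.
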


\begin{theorem}\label{thm:poly-main}
    For each $x \in \mathbb{Q} \cap (0,1]$, there exists an LCL whose randomized LCA and $\Vol$ complexities are both $\tilde \Theta(n^x)$ for both bounded-degree graphs and trees.
\end{theorem}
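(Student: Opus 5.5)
We prove \cref{thm:poly-main} by stacking the Rosenbaum--Suomela instances of complexity $\tilde\Theta(n^{1/k})$, just as \cref{thm:log-main} stacks $\Theta(\log n)$ instances. Write $x = p/q$ with $1 \le p \le q$.

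\textbf{Base gadget.} We recall the $\Vol$ LCL $\Pi_q$ of \cite{rosenbaum2020seeing} with complexity $\tilde\Theta(n^{1/q})$. Its instances are hierarchical trees of depth $q$, where each level consists of long paths. The hard configuration is the balanced one, in which each level has length about $n^{1/q}$. There, any algorithm must in effect explore one full level-path of length $\approx n^{1/q}$ to learn a consistent orientation or value. Conversely, an algorithm using $\tilde O(n^{1/q})$ probes exists, because some level in any instance has length $\le n^{1/q}$ and can be walked.

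\textbf{Construction.} We build $\Pi_{p/q}$ by composing $p$ copies of this gadget hierarchically. A ``virtual vertex'' of the $(i{+}1)$-st layer is an entire $\Pi_q$-instance of the $i$-th layer. Its output is obtained by solving that lower instance, and it is linked to the next layer through designated endpoints. All of this is enforced by constant-radius checks on input labels, with an escape label that lets malformed structures be handled locally.

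\textbf{Upper bound.} Consider the balanced hard instance, with $n^{1/q}$ vertices per path in each of $q$ levels for each of the $p$ layers. This does not quite give the right exponent, so the sizes must be chosen carefully. The cleaner route is this: the cost of solving one layer-$i$ instance is roughly $N^{1/q}$ times the cost of simulating each of its virtual vertices. We want the total to be $\tilde O(n^{p/q})$, and we achieve it by letting the top-level instance occupy only part of the $n$ budget. The algorithm is recursive. To answer a query, it walks the shortest level of the current layer, paying $\tilde O(m^{1/q})$ virtual probes, where $m$ is that layer's virtual size. Each virtual probe is answered by recursively solving the layer below. Since the sizes $m_1 \cdots m_p \le n$, AM--GM (or Hölder) gives $\prod_i m_i^{1/q} \le n^{p/q}$ up to polylog factors, provided the algorithm never needs more than $m_i^{1/q}$ work at layer $i$. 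The polylog factors come from the $\Vol$ randomness and the hierarchical-path routines. Moving to LCAs only helps the upper bound, and the instances are trees, so trees are covered too.

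\textbf{Lower bound (main obstacle).} We take the balanced instance with each layer of size $n^{1/p}$, so the top walk costs $(n^{1/p})^{1/q}$ per layer and the product is $n^{p/q \cdot 1/p}\cdots$. The lower bound has to be calibrated so that the $p$ layers multiply to exactly $n^{p/q}$ probes. I expect the real difficulty to be proving that the lower bound survives \emph{far probes} and \emph{global randomness} in the LCA model. The plan is a symmetry/indistinguishability argument. First, we randomize IDs and the placement of the relevant level-path inside a uniformly random copy among many isomorphic gadgets, so that far probes land in irrelevant copies with high probability. Second, we show that any algorithm with $o(n^{p/q}/\mathrm{polylog})$ probes fails, with constant probability, to see both ends of some required path at some layer. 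Two queries at opposite ends then give inconsistent outputs, as in the single-layer argument of \cite{rosenbaum2020seeing}, applied inductively layer by layer. A hybrid argument shows the failure probability accumulates across the layers.
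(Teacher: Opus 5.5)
There is a genuine gap: your construction cannot reach the exponent $p/q$, and your own calculations show it. Suppose the $p$ layers have virtual sizes $m_1,\dots,m_p$ with $\prod_i m_i\le n$, and layer $i$ costs $\tilde O(m_i^{1/q})$ simulated probes. Then the total is $\prod_i m_i^{1/q}=(\prod_i m_i)^{1/q}\le n^{1/q}$. The bound $n^{p/q}$ you cite is true but very loose. Your balanced lower-bound instance likewise gives $\bigl((n^{1/p})^{1/q}\bigr)^p=n^{1/q}$, which is exactly where that calculation trails off.

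Under composition the sizes multiply. Costs that are the same sublinear power of each layer's size therefore multiply to that same power of $n$. No choice of layer sizes turns a homogeneous stack of $\tilde\Theta(n^{1/q})$ gadgets into an $n^{p/q}$ problem. This is precisely why naive stacking fails in the polynomial regime; in the polylogarithmic regime it works only because $\log(m^2)=\Theta(\log m)$. Once the adversary picks non-uniform sub-instance sizes, the product formula is not even the right accounting. For example, with the paper's xor semantics, a two-level $\HTHC$ whose top-path vertices carry two-level $\HTHC$ sub-instances and whose bottom paths carry none has complexity $\tilde\Theta(n^{2/3})$, not $n^{1/2}$. Note also that with the original two-coloring semantics the head's output depends only on the tail's input. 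So nothing in your description forces solving every lower instance walked past; the paper switches to xor'ed bits precisely to enforce this.

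The missing idea is an \emph{asymmetric} stacking with two kinds of attachment, which is what $\HC^T$ encodes. Each vertex of a top $\DC$ path carries two attached instances:
\begin{itemize}
\item an $\LC$-attached instance of complexity $n^{p'}$, whose bit enters the xor at \emph{every} traversed vertex;
\item an $\RC$-attached instance of complexity $n^{q'}$ with $p'<q'$, serving as an escape via color $\X$, of which only \emph{one} must be solved.
\end{itemize}
Without escaping, the top path must be walked in full, so the cost is $\tilde\Theta(\min(\ell(n/\ell)^{p'},(n/\ell)^{q'}))$. This balances at $\ell=n^{(q'-p')/(1+q'-p')}$ to $n^{q'/(1+q'-p')}$. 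The linear factor $\ell$ is what lifts the exponent above those of the parts. \cref{lem:poly-rational-to-ordered-tree} then reaches every $a/b\in(0,1]$ by writing $a/b=q'/(1+q'-p')$ with $p'=(a-1)/(b-1)$ and $q'=a/(b-1)$, and recursing to get a good ordered binary tree $T$ with $\val(T)=a/b$.

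The lower bound also needs no random IDs or two-query inconsistency argument. The inputs are arranged so that the root's output bit is forced whatever coloring a solution uses. For any probe set of size $o(n^{\val(T)})$, a bijection flips that bit while preserving all probed inputs (\cref{lem:poly-hard-root-bit}). Yao's principle with a single query then handles far probes and shared randomness at once.

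Finally, your upper-bound sketch skips the real difficulty: answering sub-instance queries of unknown size consistently across independent queries. The paper handles this with the ideal and maximum size parameters $N_1,N_2$ and binary lifting.
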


\begin{figure}[htbp]
    \centering
    \begin{tikzpicture}[
    x=1cm,y=1cm,
    axis/.style={line width=0.65pt},
    tick/.style={line width=0.65pt},
    construction/.style={blue!70!black, fill=blue!70!black},
    denseconstruction/.style={
        blue!70!black,
        line width=3.2pt,
        line cap=round
    },
    gapmark/.style={
        blue!70!black,
        line width=2.8pt,
        line cap=round
    },
    ticklabel/.style={font=\small},
    openmark/.style={blue!70!black, font=\bfseries\Large, inner sep=0pt},
    legendlabel/.style={font=\scriptsize, anchor=west}
]

\coordinate (one)      at (0,0);
\coordinate (lstar)    at (1.20,0);
\coordinate (sqrtlog)  at (2.40,0);
\coordinate (logn)     at (3.55,0);
\coordinate (logtwo)   at (4.70,0);
\coordinate (logthree) at (5.85,0);
\coordinate (superlog) at (7.55,0);
\coordinate (subpoly)  at (9.35,0);
\coordinate (n)        at (10.65,0);

\draw[axis] (-0.35,0) -- (10.95,0);

\foreach \p/\lab in {
    one/{$1$},
    lstar/{$\log^\star n$},
    sqrtlog/{$\sqrt{\log n}$},
    logn/{$\log n$},
    logtwo/{$\log^2 n$},
    logthree/{$\log^3 n$},
    n/{$n$}
}{
    \draw[tick] ($(\p)+(0,-0.32)$) -- ($(\p)+(0,0.32)$);
    \node[ticklabel,below=0.32cm] at (\p) {\lab};
}

\draw[tick] ($(superlog)+(0,-0.32)$) -- ($(superlog)+(0,0.32)$);
\node[ticklabel,below=0.32cm] at (superlog) {$2^{\omega(\log\log n)}$};

\draw[tick] ($(subpoly)+(0,-0.32)$) -- ($(subpoly)+(0,0.32)$);
\node[ticklabel,below=0.32cm] at (subpoly) {$n^{o(1)}$};

\draw[denseconstruction] ($(subpoly)+(0.05,0)$) -- (n);

\foreach \p in {one,lstar,logn,logtwo,logthree}{
    \fill[construction] (\p) circle (3.0pt);
}

\foreach \x in {0.143,0.286,0.429,0.571,0.714,0.857}{
    \fill[construction] ($(logthree)!\x!(superlog)$) circle (2.4pt);
}

\newcommand{\volgapcross}[1]{%
    \draw[gapmark] ($(#1)+(-0.14,-0.14)$) -- ($(#1)+(0.14,0.14)$);
    \draw[gapmark] ($(#1)+(-0.14,0.14)$) -- ($(#1)+(0.14,-0.14)$);
}

\coordinate (gapone)     at ($(one)!0.5!(lstar)$);
\coordinate (gaplstar)   at ($(lstar)!0.5!(sqrtlog)$);
\coordinate (opensqrt)   at ($(sqrtlog)!0.5!(logn)$);
\coordinate (openlogone) at ($(logn)!0.5!(logtwo)$);
\coordinate (openlogtwo) at ($(logtwo)!0.5!(logthree)$);
\coordinate (opensubpoly) at ($(superlog)!0.5!(subpoly)$);

\volgapcross{gapone}
\volgapcross{gaplstar}

\node[openmark] at ($(opensqrt)+(0,-0.04)$) {?};
\node[openmark] at ($(openlogone)+(0,-0.04)$) {?};
\node[openmark] at ($(openlogtwo)+(0,-0.04)$) {?};
\node[openmark] at ($(opensubpoly)+(0,-0.04)$) {?};

\node[ticklabel,above=0.22cm] at ($(logthree)!0.50!(superlog)$) {$\cdots$};

\node[legendlabel,anchor=center] at ($(one)!0.5!(n)+(0,0.80)$) {%
    {\color{blue!70!black}\raisebox{0.12ex}{\large$\bullet$}} known construction
    \hspace{0.35cm}
    {\color{blue!70!black}\rule[0.45ex]{0.42cm}{1.1pt}} dense constructions
    \hspace{0.35cm}
    {\color{blue!70!black}\raisebox{-0.12ex}{\Large\bfseries$\times$}} known gap theorem
    \hspace{0.35cm}
    {\color{blue!70!black}\bfseries\normalsize ?} unknown interval%
};

\end{tikzpicture}
    \caption{The current LCL complexity landscape of randomized $\Vol$ algorithms for bounded-degree graphs. The blue dots between $\log^3 n$ and $2^{\omega(\log\log n)}$ correspond to constructions for $\log^k n$ for any integer $k \ge 3$; The solid blue segment in the polynomial regime represents constructions with complexity $\tilde\Theta(n^x)$ for every rational $x \in (0,1]$.}
    \label{fig:rand-vol-complexity-new}
\end{figure}

\subsection{Comparison with Results in the \texorpdfstring{$\LOCAL$}{LOCAL} Model}

It is worth noting the difference between our results in the LCA and $\Vol$ model and similar results in the $\LOCAL$ model. In the $\LOCAL$ model proposed by Linial \cite{linial1992locality}, the complexity is measured by the number of communication rounds, equivalently, the radius of the neighborhood that determines the output for each vertex.

For bounded-degree trees, it is shown in \cite{chang2020complexity,chang2019time} that no LCL has complexity in the range $\omega(\log n)-n^{o(1)}$ or the range $\omega(n^{1/(k+1)})-o(n^{1/k})$ for any integer $k \ge 1$. Our results indicate that the complexity landscape of LCLs in bounded-degree trees is more complex for the $\Vol$ model and LCAs compared to the $\LOCAL$ model in polylogarithmic and polynomial regimes. This is consistent with the intuition that probe complexity looks like a more fine-grained complexity measure compared to distance complexity.

Yet this intuition does not hold for general graphs, as it is shown in \cite{balliu2018almost,balliu2018new} that, for bounded-degree graphs, the possible LCL complexity for the $\LOCAL$ model is dense in the interval ranging from $\log n$ to $n$. More specifically, for any positive rational $r/s \le 1$ and $p/q \ge 1$, one can construct LCLs with distance complexity $\Theta(\log^{p/q}n)$, $2^{\Theta(\log^{r/s}n)}$ and $\Theta(n^{r/s})$. Our result exhibits a similar density in the polynomial regime, but the techniques are completely different. The constructions in \cite{balliu2018almost,balliu2018new} use a set of carefully crafted constraints to encode the execution of special Turing machines into the instance. Instead, our constructions use the constructions in \cite{rosenbaum2020seeing} with $\Vol$ complexity $\Theta(\log n)$ and $\tilde \Theta(n^{1/k})$ and provide a way to \emph{stack} multiple instances of these problems to create more complexity classes. We will briefly discuss how the stacking works in \cref{sec:tech-overview}.

\subsection{Organization of the Paper}

In \cref{sec:tech-overview}, we provide an overview of our approach. In \cref{sec:prelim}, we define basic notations and also models and problems of our interest. In \cref{sec:log} and \cref{sec:poly}, we provide the LCL construction and analysis in the polylogarithmic complexity setting and the polynomial complexity setting, respectively. Finally, in \cref{sec:discussion}, we discuss several aspects of our paper at a higher level and provide some future directions.

\section{Technical Overview}
\label{sec:tech-overview}

\subsection{The Construction in the Polylogarithmic Regime}

In the polylogarithmic regime, we use the $\LeafColoring$ LCL constructed in \cite{rosenbaum2020seeing} with $\Vol$ complexity $\Theta(\log n)$, and stack its instances for $k$ levels to provide an LCL whose $\Vol$ and LCA complexities are $\Theta(\log^k n)$ for any positive integer $k$. We first review the $\LeafColoring$ construction and then show how the stacking works. Finally, we briefly demonstrate how to generalize the lower bound analysis from $\Vol$ to LCAs.

\subsubsection{Review of the \texorpdfstring{$\LeafColoring$}{LeafColoring} LCL}

The main idea behind the $\LeafColoring$ LCL is that, on all possible rooted trees of size $n$ where each vertex has out-degree zero or two, the root needs $\Omega(\log n)$ steps to reach a leaf, while a random walk takes $O(\log n)$ steps from the root to a leaf with high probability for any such tree.

To make sure that any $\Vol$ algorithm working on the output label for the root must discover a path from the root to a leaf, we impose the following constraints: Each leaf is given a red-blue color as an input and required to output the same color as its input, while each internal vertex needs to output a color appearing as the output color of one of the two children. These locally checkable constraints together require the root to output the input color of one of the leaves, and if an algorithm does not probe any leaf, it cannot distinguish between cases where all leaves are red and all leaves are blue.

Finally, to generalize the construction to all bounded-degree trees and graphs, one needs to rule out graphs not following a rooted tree of out-degree zero or two without an explicit promise on the input structure. To address this issue, Rosenbaum and Suomela \cite{rosenbaum2020seeing} design a special set of input labels for each vertex and constraints called binary tree labeling. The binary tree labeling is a part of the LCL. If a vertex does not satisfy the locally checkable constraints for the binary tree labeling, we will output a special $\perp$ label for this vertex and remove it from the rest of the computation. For vertices satisfying all constraints for the binary tree labeling, we can show that the induced subgraph will be almost a rooted forest of out-degree zero or two, and we require these vertices to solve the coloring problem. We will use the same approach in \cref{def:log-binary-tree-labeling,def:log-tree-labeling-consistency} to generalize our construction to all bounded-degree trees and graphs. For ease of understanding, in this section, we can just assume that all vertices pass the binary tree labeling check.

\subsubsection{The Stacking}

First, we slightly generalize the $\LeafColoring$ construction by using the following set of constraints: Instead of the red/blue color, we give for each vertex an input bit $b\insub(v)$, and each vertex needs to output two symbols. The first symbol $d(v) \in \{\L, \R\}$ indicates the path from each vertex to a leaf. Define $\mathrm{child}(v,d(v))$ to be the left child of $v$ when $d(v) = \L$ and the right child of $v$ when $d(v) = \R$, then the path from some vertex $u$ will be $p_0 = u, p_1 = \child(p_0, d(p_0)), p_2 = \child(p_1, d(p_1)), \cdots,$ all the way down to a leaf $w$ where $d(w)$ should be a placeholder $\perp$. The second output symbol $b\outsub(v)$ is to ensure that the algorithm probes all vertices along the path, and we add the constraint that $b\outsub(u) = \xor_{i \ge 0} b\insub(p_i)$ for each vertex $u$. This constraint involves $\Omega(\log n)$ labels, so it is not local, but we notice that we can use the locally checkable constraint $b\outsub(v) = b\outsub(p_1) \oplus b\insub(v)$ on each internal vertex $v$ in the instance to jointly make sure that $b\outsub(u)$ equals the xor of input bits along the path. 

We denote the previous LCL as $\PathToLeaf^1$, where $\mathrm{PTL}$ abbreviates ``path to leaf''. Compared to $\LeafColoring$, the benefit of $\PathToLeaf^1$ is that the output of the root is always related to $\Theta(\log n)$ input bits rather than a single input color on some leaf. As a result, we would expect that any randomized LCA would also need $\Omega(\log n)$ probes.

We will now demonstrate the way to \emph{stack} $\PathToLeaf^1$ instances. Consider, for each vertex $v$ in a $\PathToLeaf^1$ instance of size $m$ called the base instance, we give it a ``twisted'' input bit $b\insub'(v)$ and additionally attach a fresh $\PathToLeaf^1$ instance of size $m$. Different vertices receive disjoint instances, so the final instance includes $(m+1)$ independent $\PathToLeaf^1$ instances where one of them is the base instance. For a vertex in the base instance, the actual input bit $b\insub(v)$ will be the xor of the twisted input bit $b'\insub(v)$ and the output bit of the root of the attached instance. The $\PathToLeaf^1$ constraints for each of the $(m+1)$ instances are the same as before.

Now, to get the output bit for the root of the base instance, apart from $\Theta(\log m)$ input bits along the path, the algorithm needs to additionally get the root output label for $\Theta(\log m)$ independent attached instances. This suggests that the LCL should provide a $\Theta(\log^2 m)$ probe complexity construction. The total instance size will blow up by a polynomial factor, but it does not affect the complexity since we are in the polylogarithmic regime. We call this problem $\PathToLeaf^2$, and we can recursively do the attachment to define $\PathToLeafK$ for any positive integer $k$.

\subsubsection{Upper Bound and Lower Bound}

The $\Theta(\log^2 n)$ $\Vol$ and LCA upper bound for $\PathToLeaf^2$ follows by generalizing the random-walk algorithm in \cite{rosenbaum2020seeing}. For the lower bound, one may first approach it using uniform random bits for $b\insub$, hoping that the output bit at the root of the base instance is then the xor sum of $\Theta(\log^2 n)$ uniform random bits. The issue is that the algorithm can choose $d(v)$ based on the input bits. For example, an algorithm may use a few probes to find a $d(v)$ assignment adaptively such that the correct $b\outsub$ value at the root is zero with high probability.

To resolve this issue, we construct a special distribution over the instances, so that each instance $I$ in the support is associated with a bit $b_I \in\{0,1\}$, and no matter how the output label $d$ is chosen for each vertex, the output bit at the root of the base instance in a globally correct solution will always be $b_I$. This removes the effect of $d$ outputs, and we can use an induction argument to show that, conditioned on the input bit of any set of $o(\log^2 n)$ vertices, the probability that the associated instance bit $b_I$ equals $1$ will be exactly $1/2$. This, along with Yao's minimax principle, proves the lower bound.

\subsection{The Construction in the Polynomial Regime}

In the polynomial regime, we utilize the $\HTHC$ LCL constructed in \cite{rosenbaum2020seeing} with $\Vol$ complexity $\tilde \Theta(n^{1/k})$ for any positive integer $k$, and stack its instances in a nontrivial way to provide an LCL of randomized LCA and $\Vol$ complexities $\tilde \Theta(n^{p/q})$ for any positive rational $0<p/q \le 1$. We first review the $\HTHC$ construction and then show how the stacking works. Finally, we demonstrate the difficulties in the upper and lower bound analysis.

\subsubsection{Review of the \texorpdfstring{$\HTHC$}{Hierarchical-THC} LCL}

Our presentation of the $\HTHC$ LCL will be different from previous works \cite{chang2019time,rosenbaum2020seeing}, and the purpose is to show that $\HTHC$ is constructed by stacking path two-coloring instances in a different way compared to the stacking in the polylogarithmic regime. This allows us to integrate two different stacking strategies to provide new complexity classes.

The $\HTHC$ construction is parameterized by levels of stacking. To start with, let us consider the construction with two levels, leading to an LCL with $\tilde \Theta(n^{1/2})$ probe complexity in the $\Vol$ model. In the construction, we have a directed \emph{top} path of length $\ell$, denoted by $v_1,v_2,\dots,v_\ell$, with each vertex $v_i$ in the top path possibly having a directed \emph{bottom} path $P_i$ attached. The total number of vertices in the top path and all bottom paths is $n$. See \cref{fig:hthc-top-bottom-paths}(a) for an illustration.

The constraints of the $\HTHC$ LCL will guarantee the following: The algorithm needs to either visit all vertices in the top path to get the output for $v_1$, or visit all vertices in one of the bottom paths to get the output of the head of the bottom path. This immediately implies $\Omega(n^{1/2})$ lower bound from the instance where $\ell = \Theta(\sqrt{n})$ and $|P_i| = \Theta(\sqrt n)$ for each $1 \le i \le \ell$. The matching upper bound requires additional work when bottom paths have varying lengths, and we will discuss this after introducing our LCL construction.

\newcommand{\HTHCPathTemplate}{%
    \begin{tikzpicture}[
        vertex/.style={circle,draw,inner sep=0pt,minimum size=6mm,font=\scriptsize},
        red vertex/.style={vertex,fill=red!35},
        blue vertex/.style={vertex,fill=blue!35},
        path edge/.style={->,thick,>=Stealth},
        attach edge/.style={->,densely dotted,thick,>=Stealth},
        note/.style={font=\scriptsize}
    ]
        \node[red vertex] (v1) at (0,0) {$v_1$};
        \node[red vertex] (v2) at (1.65,0) {$v_2$};
        \node[red vertex] (v3) at (3.3,0) {$v_3$};
        \node[red vertex] (v4) at (4.95,0) {$v_4$};

        \draw[path edge] (v1) -- (v2);
        \draw[path edge] (v2) -- (v3);
        \draw[path edge] (v3) -- (v4);

        \node[note] at ($(v1)+(0,0.45)$) {head};
        \node[note] at ($(v4)+(0,0.45)$) {tail};

        \node[blue vertex] (p11) at (0,-2.25) {$p_{1,1}$};
        \node[blue vertex] (p12) at (1.65,-2.25) {$p_{1,2}$};
        \node[blue vertex] (p13) at (3.3,-2.25) {$p_{1,3}$};
        \draw[path edge] (p11) -- (p12);
        \draw[path edge] (p12) -- (p13);
        \draw[attach edge] (v1) -- (p11);

        \node[blue vertex] (p21) at (1.65,-1.25) {$p_{2,1}$};
        \draw[attach edge] (v2) -- (p21);

        \node[blue vertex] (p31) at (3.3,-1.25) {$p_{3,1}$};
        \node[blue vertex] (p32) at (4.95,-1.25) {$p_{3,2}$};
        \draw[path edge] (p31) -- (p32);
        \draw[attach edge] (v3) -- (p31);
    \end{tikzpicture}
}

\newcommand{\HTHCPathTemplateBare}{%
    \begin{tikzpicture}[
        vertex/.style={circle,draw,inner sep=0pt,minimum size=6mm,font=\scriptsize},
        path edge/.style={->,thick,>=Stealth},
        attach edge/.style={->,densely dotted,thick,>=Stealth}
    ]
        \node[vertex] (v1) at (0,0) {};
        \node[vertex] (v2) at (1.65,0) {};
        \node[vertex] (v3) at (3.3,0) {};
        \node[vertex] (v4) at (4.95,0) {};

        \draw[path edge] (v1) -- (v2);
        \draw[path edge] (v2) -- (v3);
        \draw[path edge] (v3) -- (v4);

        \node[vertex] (p11) at (0,-2.25) {};
        \node[vertex] (p12) at (1.65,-2.25) {};
        \node[vertex] (p13) at (3.3,-2.25) {};
        \draw[path edge] (p11) -- (p12);
        \draw[path edge] (p12) -- (p13);
        \draw[attach edge] (v1) -- (p11);

        \node[vertex] (p21) at (1.65,-1.25) {};
        \draw[attach edge] (v2) -- (p21);

        \node[vertex] (p31) at (3.3,-1.25) {};
        \node[vertex] (p32) at (4.95,-1.25) {};
        \draw[path edge] (p31) -- (p32);
        \draw[attach edge] (v3) -- (p31);
    \end{tikzpicture}
}

\newcommand{\HTHCPathDXTemplate}[2]{%
    \begin{tikzpicture}[
        vertex/.style={circle,draw,inner sep=0pt,minimum size=6mm,font=\scriptsize},
        path edge/.style={->,thick,>=Stealth},
        attach edge/.style={->,densely dotted,thick,>=Stealth}
    ]
        \node[vertex] (v1) at (0,0) {#1};
        \node[vertex] (v2) at (1.65,0) {#1};
        \node[vertex] (v3) at (3.3,0) {#1};
        \node[vertex] (v4) at (4.95,0) {#1};

        \draw[path edge] (v1) -- (v2);
        \draw[path edge] (v2) -- (v3);
        \draw[path edge] (v3) -- (v4);

        \node[vertex] (p11) at (0,-2.25) {#2};
        \node[vertex] (p12) at (1.65,-2.25) {#2};
        \node[vertex] (p13) at (3.3,-2.25) {#2};
        \draw[path edge] (p11) -- (p12);
        \draw[path edge] (p12) -- (p13);
        \draw[attach edge] (v1) -- (p11);

        \node[vertex] (p21) at (1.65,-1.25) {#2};
        \draw[attach edge] (v2) -- (p21);

        \node[vertex] (p31) at (3.3,-1.25) {#2};
        \node[vertex] (p32) at (4.95,-1.25) {#2};
        \draw[path edge] (p31) -- (p32);
        \draw[attach edge] (v3) -- (p31);
    \end{tikzpicture}
}

\newcommand{\HTHCPathXBlueHeadsDTemplate}{%
    \begin{tikzpicture}[
        vertex/.style={circle,draw,inner sep=0pt,minimum size=6mm,font=\scriptsize},
        blue vertex/.style={vertex,fill=blue!45},
        path edge/.style={->,thick,>=Stealth},
        attach edge/.style={->,densely dotted,thick,>=Stealth}
    ]
        \node[vertex] (v1) at (0,0) {$\X$};
        \node[vertex] (v2) at (1.65,0) {$\X$};
        \node[vertex] (v3) at (3.3,0) {$\X$};
        \node[vertex] (v4) at (4.95,0) {$\X$};

        \draw[path edge] (v1) -- (v2);
        \draw[path edge] (v2) -- (v3);
        \draw[path edge] (v3) -- (v4);

        \node[blue vertex] (p11) at (0,-2.25) {$\B$};
        \node[vertex] (p12) at (1.65,-2.25) {$\D$};
        \node[vertex] (p13) at (3.3,-2.25) {$\D$};
        \draw[path edge] (p11) -- (p12);
        \draw[path edge] (p12) -- (p13);
        \draw[attach edge] (v1) -- (p11);

        \node[blue vertex] (p21) at (1.65,-1.25) {$\B$};
        \draw[attach edge] (v2) -- (p21);

        \node[blue vertex] (p31) at (3.3,-1.25) {$\B$};
        \node[vertex] (p32) at (4.95,-1.25) {$\D$};
        \draw[path edge] (p31) -- (p32);
        \draw[attach edge] (v3) -- (p31);
    \end{tikzpicture}
}

\newcommand{\HTHCPathBadTwoColoringTemplate}{%
    \begin{tikzpicture}[
        vertex/.style={circle,draw,inner sep=0pt,minimum size=6mm,font=\scriptsize},
        red vertex/.style={vertex,fill=red!35},
        blue vertex/.style={vertex,fill=blue!45},
        path edge/.style={->,thick,>=Stealth},
        attach edge/.style={->,densely dotted,thick,>=Stealth}
    ]
        \node[red vertex] (v1) at (0,0) {$\R$};
        \node[blue vertex] (v2) at (1.65,0) {$\B$};
        \node[red vertex] (v3) at (3.3,0) {$\R$};
        \node[red vertex] (v4) at (4.95,0) {$\R$};

        \draw[path edge] (v1) -- (v2);
        \draw[path edge] (v2) -- (v3);
        \draw[path edge] (v3) -- (v4);

        \node[blue vertex] (p11) at (0,-2.25) {$\B$};
        \node[red vertex] (p12) at (1.65,-2.25) {$\R$};
        \node[blue vertex] (p13) at (3.3,-2.25) {$\B$};
        \draw[path edge] (p11) -- (p12);
        \draw[path edge] (p12) -- (p13);
        \draw[attach edge] (v1) -- (p11);

        \node[red vertex] (p21) at (1.65,-1.25) {$\R$};
        \draw[attach edge] (v2) -- (p21);

        \node[blue vertex] (p31) at (3.3,-1.25) {$\B$};
        \node[red vertex] (p32) at (4.95,-1.25) {$\R$};
        \draw[path edge] (p31) -- (p32);
        \draw[attach edge] (v3) -- (p31);
    \end{tikzpicture}
}

\newcommand{\DrawHTHCPlainLabeled}[1]{%
    \node[vertex] (#1v1) at (0,0) {$v_1$};
    \node[vertex] (#1v2) at (1.65,0) {$v_2$};
    \node[vertex] (#1v3) at (3.3,0) {$v_3$};
    \node[vertex] (#1v4) at (4.95,0) {$v_4$};

    \draw[path edge] (#1v1) -- (#1v2);
    \draw[path edge] (#1v2) -- (#1v3);
    \draw[path edge] (#1v3) -- (#1v4);

    \node[vertex] (#1p11) at (0,-2.25) {$p_{1,1}$};
    \node[vertex] (#1p12) at (1.65,-2.25) {$p_{1,2}$};
    \node[vertex] (#1p13) at (3.3,-2.25) {$p_{1,3}$};
    \draw[path edge] (#1p11) -- (#1p12);
    \draw[path edge] (#1p12) -- (#1p13);
    \draw[attach edge] (#1v1) -- (#1p11);

    \node[vertex] (#1p21) at (1.65,-1.25) {$p_{2,1}$};
    \draw[attach edge] (#1v2) -- (#1p21);

    \node[vertex] (#1p31) at (3.3,-1.25) {$p_{3,1}$};
    \node[vertex] (#1p32) at (4.95,-1.25) {$p_{3,2}$};
    \draw[path edge] (#1p31) -- (#1p32);
    \draw[attach edge] (#1v3) -- (#1p31);
}

\newcommand{\DrawHTHCPlainBare}[1]{%
    \node[vertex] (#1v1) at (0,0) {};
    \node[vertex] (#1v2) at (1.65,0) {};
    \node[vertex] (#1v3) at (3.3,0) {};
    \node[vertex] (#1v4) at (4.95,0) {};

    \draw[path edge] (#1v1) -- (#1v2);
    \draw[path edge] (#1v2) -- (#1v3);
    \draw[path edge] (#1v3) -- (#1v4);

    \node[vertex] (#1p11) at (0,-2.25) {};
    \node[vertex] (#1p12) at (1.65,-2.25) {};
    \node[vertex] (#1p13) at (3.3,-2.25) {};
    \draw[path edge] (#1p11) -- (#1p12);
    \draw[path edge] (#1p12) -- (#1p13);
    \draw[attach edge] (#1v1) -- (#1p11);

    \node[vertex] (#1p21) at (1.65,-1.25) {};
    \draw[attach edge] (#1v2) -- (#1p21);

    \node[vertex] (#1p31) at (3.3,-1.25) {};
    \node[vertex] (#1p32) at (4.95,-1.25) {};
    \draw[path edge] (#1p31) -- (#1p32);
    \draw[attach edge] (#1v3) -- (#1p31);
}

\newcommand{\DrawHTHCMiniBox}[3]{%
    \coordinate (#3center) at (#1,#2);
    \coordinate (#3attach) at ($(#3center)+(0,-0.52)$);
    \draw[instance box] ($(#3center)+(-0.62,-0.52)$) rectangle ($(#3center)+(0.62,0.42)$);
    \begin{scope}[
        shift={($(#3center)+(-0.45,0.18)$)},
        scale=0.18,
        transform shape,
        path edge/.style={-{Stealth[length=1.1pt,width=1pt]},line width=0.2pt},
        attach edge/.style={-{Stealth[length=1.1pt,width=1pt]},dash pattern=on 0.3pt off 0.45pt,line width=0.25pt}
    ]
        \DrawHTHCPlainBare{#3mini}
    \end{scope}
}

\newcommand{\HTHCFigureThree}{%
    \begin{figure}[t]
        \centering
        \begin{subfigure}[t]{0.31\textwidth}
            \centering
            \resizebox{\linewidth}{!}{\HTHCPathTemplate}
            \caption{}
        \end{subfigure}\hfill
        \begin{subfigure}[t]{0.31\textwidth}
            \centering
            \resizebox{\linewidth}{!}{\HTHCPathDXTemplate{$\X$}{$\X$}}
            \caption{}
        \end{subfigure}\hfill
        \begin{subfigure}[t]{0.31\textwidth}
            \centering
            \resizebox{\linewidth}{!}{\HTHCPathDXTemplate{$\D$}{$\D$}}
            \caption{}
        \end{subfigure}

        \vspace{1ex}

        \begin{subfigure}[t]{0.31\textwidth}
            \centering
            \resizebox{\linewidth}{!}{\HTHCPathDXTemplate{$\X$}{$\D$}}
            \caption{}
        \end{subfigure}\hfill
        \begin{subfigure}[t]{0.31\textwidth}
            \centering
            \resizebox{\linewidth}{!}{\HTHCPathXBlueHeadsDTemplate}
            \caption{}
        \end{subfigure}\hfill
        \begin{subfigure}[t]{0.31\textwidth}
            \centering
            \resizebox{\linewidth}{!}{\HTHCPathBadTwoColoringTemplate}
            \caption{}
        \end{subfigure}
        \caption{An $\HTHC$ instance with a top path of length four and attached bottom paths of lengths three, one, two, and zero from head to tail, respectively. Solid left-to-right arrows denote path connections, and dotted downward arrows denote attachments. (a) shows the input color for each vertex: all vertices in the top path have red input color, and all vertices in the bottom path have blue input color. (b)(c)(d)(e)(f) are possible output colorings for (a) that violate constraints 1,2,3,4,5 in $\HTHC$, respectively.}
        \label{fig:hthc-top-bottom-paths}
    \end{figure}
}

\newcommand{\HTHCFigureFour}{%
    \begin{figure}[t]
        \centering
        \begin{tikzpicture}[
            vertex/.style={circle,draw,inner sep=0pt,minimum size=5.5mm,font=\scriptsize},
            path edge/.style={->,thick,>=Stealth},
            attach edge/.style={->,densely dotted,thick,>=Stealth},
            instance box/.style={draw,thick}
        ]
            \DrawHTHCPlainLabeled{base}

            \DrawHTHCMiniBox{0}{1.7}{attone}
            \DrawHTHCMiniBox{1.65}{1.7}{atttwo}
            \DrawHTHCMiniBox{3.3}{1.7}{attthree}
            \DrawHTHCMiniBox{4.95}{1.7}{attfour}

            \draw[attach edge] (basev1) -- (attoneattach);
            \draw[attach edge] (basev2) -- (atttwoattach);
            \draw[attach edge] (basev3) -- (attthreeattach);
            \draw[attach edge] (basev4) -- (attfourattach);
        \end{tikzpicture}
        \caption{Fine-grained stacking construction. The large two-level $\HTHC$ instance is the base instance. Each vertex in the top path has one two-level $\HTHC$ instance attached, shown as a small boxed copy above the vertex.}
        \label{fig:hthc-top-attachments}
    \end{figure}
}

\HTHCFigureThree

Now we show how to craft the constraints to generate the effect described above. First, we follow the narrative of \cite{rosenbaum2020seeing} and add a variant of two-coloring constraints for each top and bottom path: Each vertex receives a red/blue input color, and the tail of a path must output its input color. Other vertices should output a color that is the same as their out-neighbors along the path. These constraints together require the head of a path to output the input color of the tail.

At the same time, we want an algorithm to be able to escape from the top path or some bottom paths when they are too long; otherwise, the complexity will be a clear $\Theta(n)$. We introduce two additional colors for each vertex: color $\D$ for decline, and color $\X$ for exempt. These two colors will be used when the algorithm thinks a path is too long. To make sure that an algorithm cannot abuse these colors, we add the following constraints to $\HTHC$: \begin{itemize}
    \item[1.] A vertex in any bottom path cannot be colored $\X$;
    \item[2.] A vertex in the top path cannot be colored $\D$;
    \item[3.] If a top vertex is colored $\X$, then either no bottom path is attached to it, or the head of its attached bottom path is not $\D$;
    \item[4.] If a vertex is not colored $\D$, then its descendant in the directed path is not colored $\D$.
    \item[5.] After removing vertices with colors $\D$ and $\X$, the set of directed paths follows the two-coloring constraints.
\end{itemize}

The first two constraints prevent the colorings in \cref{fig:hthc-top-bottom-paths}(b) and (c) that color all vertices with $\X$ or color all vertices with $\D$. The third constraint prevents the coloring in \cref{fig:hthc-top-bottom-paths}(d): if an algorithm tries to escape from the top path by putting an $\X$ color on some top-level vertex $v_i$, then the algorithm is required not to color the head of $P_i$ with $\D$ if it exists. With this in mind, constraint 4 then says that, if some vertex is not colored $\D$, so is its out-neighbor. This constraint propagates from the head of $P_i$ along the directed edges, and finally implies that any vertex in $P_i$ should not be colored $\D$ or $\X$. Finally, we need to color the path following the two-coloring constraints to satisfy constraint 5, so the head of $P_i$ must travel along the path to fetch the input color of the tail. 

The previous argument holds only when the top path uses color $\X$, but if it does not use $\X$, it cannot use $\D$ either from constraint 2, so according to constraint 5, the head of the top path needs to travel along the top path to fetch the input color on the tail. This gives the guarantee that we want: The algorithm needs to either visit all vertices in the top path to get the output for $v_1$, or visit all vertices in one of the bottom paths to get the output of the head of the bottom path.

To generalize the result to larger $k$, instead of a bottom path, one can attach an $\HTHC$ instance to each vertex in the top path. Following a similar argument, the algorithm is required to either solve the two-coloring along the top path or solve one of the attached $\HTHC$ instances. For instance, if we replace the bottom path with a two-level $\HTHC$ instance, we will end up with a three-level $\HTHC$ instance. In this case, when the top path has length $\ell$, the complexity will be $\tilde \Theta(\min(\ell, (n/\ell)^{1/2}))$, and the final complexity is $\tilde \Theta(n^{1/3})$ attained by $\ell = \Theta(n^{1/3})$.

As in the polylogarithmic setting, \cite{rosenbaum2020seeing} additionally constructs an input labeling called colored tree labeling to generalize the LCL to general bounded-degree trees and graphs. We follow the same approach in \cref{def:poly-t-labeling,def:poly-t-labeling-consistency}.

\subsubsection{The Stacking}

From the previous review, we can see how the stacking strategy in $\HTHC$ differs from that in $\PathToLeafK$: In the $\HTHC$ stacking, an algorithm only needs to solve one instance among all attached instances along the path, whereas in $\PathToLeafK$, the algorithm is required to solve all attached instances for each vertex along the path.

Same as the approach in the polylogarithmic setting, before the stacking, we first change the input and output format of $\HTHC$ to make sure that the output bit on the head of the top path $v_1$ must depend on inputs on $n^{\Omega(1)}$ vertices. We provide an additional input bit $b\insub(v)$ for each vertex in the instance, and we require any vertex with color other than $\D$ to output a bit $b\outsub(v)$ which is the xor of $b\insub(w)$ for all vertices $w$ in the following path starting from $v$: Suppose currently we are at some vertex $u$ with color other than $\D$. \begin{itemize}
    \item If $u$ is of output color $\X$, go to the head of its attached bottom path. If it has no bottom path attached, stop.
    \item If $u$ is of output color red or blue, go to its out-neighbor in the directed path. If it is the tail, stop.
\end{itemize}
According to constraints 3 and 4, this path will never visit vertices of color $\D$. At the same time, the length of this path from $v_1$ will be $\Omega(n^{1/2})$ for the $\HTHC$ construction above, since the path either never visits vertices of color $\X$ and visits all vertices in the top path, or visits some vertex with color $\X$ and visits all vertices of the attached bottom path. With the input $b\insub(v)$, we can remove the two-coloring constraint and the red/blue color input without changing the complexity. In \cref{sec:poly} we will identify $\R$ and $\B$ as a single color $\U$.

Now we try to stack $\HTHC$ instances. A possible first approach, similar to the stacking in the polylogarithmic setting, is to attach a $\HTHC$ instance, say of complexity $\tilde\Theta(m^{1/3})$, to every vertex in a base $\HTHC$ instance of complexity $\tilde\Theta(m^{1/2})$ of size $m$, and hope that it generates an LCL of probe complexity $\Theta(m^{1/2 + 1/3}) = \Theta(m^{5/6})$. Unfortunately, since the instance has size blown up to $n = m^2$ when each attached instance has size $m$, this is not true. By balancing the size of the base instance and attached instances, it turns out that the complexity is still $\tilde \Theta(m^{1/2})$.

Our approach is to deploy a more fine-grained stacking strategy, stacking different $\HTHC$ instances for different levels. Next, we will give one concrete example to show this idea. In the example, there will be a base instance, which is a two-level $\HTHC$ instance, to which other instances are attached. Instead of attaching an instance for each vertex in the base instance, we only attach one instance, specifically a fresh two-level $\HTHC$ instance, for each vertex in the top path of the base instance. Vertices in the bottom path of the base instances have no instance attached. See \cref{fig:hthc-top-attachments} for an illustration. As the construction in the polylogarithmic regime, the input bit on the vertices of the top path in the base instance will be a twisted one $b'\insub(v)$, and the algorithm needs to recover the actual input bit $b\insub(v)$ by xoring $b'\insub(v)$ and the output bit of the root of the attached instance, namely the head of the top path.

\HTHCFigureFour

Denote $\ell$ as the length of the top path in the base instance, and we assume that the algorithm knows the size of the attached two-level $\HTHC$ instances, denoted as $s_1,s_2,\dots,s_\ell$, and also $|P_i|$ for $1 \le i \le \ell$. Consider two possibilities for the output:

\begin{itemize}
    \item If the top path in the base instance has no color $\X$, then apart from $\ell$ input bits, the algorithm is required to solve $\ell$ two-level $\HTHC$ instances. The total number of probes will be $\tilde \Theta(\sum \sqrt{s_i})$. By Jensen's inequality, an adversary that wants to maximize the probe complexity will make $s_i$ roughly equal, leading to $\tilde\Theta(\ell (n/\ell)^{1/2})$ probe complexity;
    \item Otherwise, the algorithm is required to solve one of the bottom paths in the base instance. The algorithm can choose the shortest one to solve, and an adversary will make the bottom paths of equal length, leading to $\Theta(n/\ell)$ probe complexity.
\end{itemize}

Finally, an algorithm can choose one of the above two choices, so an adversary will take $\ell$ to maximize $\tilde \Theta(\min(\ell (n/\ell)^{1/2}, n/\ell))$. The maximum value is $\tilde \Theta(n^{2/3})$ taken by $\ell = \Theta(n^{1/3})$, and we would expect that the complexity of such an LCL is $\tilde \Theta(n^{2/3})$.

For the general case, we can replace the bottom path in the base instance with an instance of complexity $\tilde \Theta(n^q)$, and replace the attached two-level $\HTHC$ instance with an instance of complexity $\tilde \Theta(n^p)$ for $p < q$. Then, following the above analysis, the best algorithm would use $\tilde\Theta(\min(\ell (n/\ell)^p, (n/\ell)^q))$ probes, and an adversary maximizing the probe complexity will take $\ell = n^{(q-p)/(1+q-p)}$, leading to complexity $\tilde \Theta(n^{q/(1+q-p)})$. This suffices to generate all rational powers, as we will show in \cref{lem:poly-rational-to-ordered-tree}.

\subsubsection{Upper Bound and Lower Bound}

The lower bound analysis for the polynomial setting is much closer to the preceding discussion. We can follow the adversarial choice in the argument above to generate the underlying graph for the lower bound instances. There is still the issue where an algorithm can choose output colors after reading several input bits to be in favor of the output distribution for the root, yet we can use the same approach as in the polylogarithmic setting: we construct the distribution where each instance in the distribution has a definite output bit at the root, no matter how the underlying coloring is chosen. And then we use an induction to prove that for any vertex set of size $o(n^{q/(1+q-p)})$, revealing the input bits on those vertices does not provide any advantage to predict the output bit at the root.

The main obstacle for the upper bound proof is to remove the assumption that the algorithm knows the size of each attached instance and bottom path. To see why this is the case, we first provide a sketch of the matching $\tilde O(n^{1/2})$ upper bound algorithm for two-level $\HTHC$: For each vertex $v_i$ in the top path, consider the sub-path $v_i,v_{i+1},\dots,v_{i+\lceil 2\sqrt{n}\rceil}$. When $i+\lceil 2\sqrt{n}\rceil > \ell$, we can just color it with the input color at the tail of the top path with $O(\sqrt{n})$ probes. For the other case, at least half of the vertices in the sub-path have a bottom path of length at most $\sqrt{n}$ attached. In the algorithm, each vertex in the top path uses its local randomness to add itself to a sample set $S$ with probability $\Theta(\log n / \sqrt{n})$. Define $S_i = \{v_i,v_{i+1},\dots,v_{i+\lceil 2\sqrt{n}\rceil}\} \cap S$. By Chernoff bound, with high probability, $|S_i| = O(\log n)$ and some vertex in $S_i$ has a bottom path of length at most $\sqrt{n}$ attached. The algorithm finds the first vertex in $S_i$ that has an attached bottom path of length at most $\sqrt{n}$, which can be tested within $\tilde O(\sqrt{n})$ probes, and colors it with $\X$, generating the output accordingly. This uses $\tilde O(\sqrt{n})$ probes and concludes the algorithm.

Returning to the $\tilde \Theta(n^{2/3})$ example above, for the bottom paths in the base instance, we can use a similar approach with an alternate parameter: Each vertex in the top path of the base instance inspects $2n^{1/3}$ descendants, and if this does not hit the tail, use the sample set $S$ to find a vertex with an attached bottom path of length at most $n^{2/3}$. For the general case where the bottom paths are replaced by instances of complexity $\tilde \Theta(n^q)$, we will run the algorithm for attached instances with a \emph{maximum possible size} parameter set to $n^{1/(1+q-p)}$, and we require the algorithm to fail within $\tilde \Theta(n^{q/(1+q-p)})$ probes when the actual instance size is larger than the maximum possible size.

For the attached $\HTHC$ instances, things become more complicated since the algorithm described for $\tilde \Theta(n^{1/2})$ upper bound depends heavily on knowing the instance size. We use a standard binary lifting to guess the size of each instance, and this introduces an additional problem: To get the output for each vertex in the attached instance, each query will start an independent binary lifting guessing procedure, and different vertices may become satisfied and stop at different size guesses, raising potential inconsistency issues. To resolve this issue, we will give an additional size parameter to the algorithm, called the \emph{ideal size} of the instance, and most of the algorithm's decisions will rely on the ideal size parameter rather than the size guess in the binary lifting. This maximizes consistency between independent queries and allows us to make additional changes to the algorithm to provide full consistency within the desired probe complexity budget.

\section{Preliminaries}
\label{sec:prelim}

In this section, we provide basic definitions and formally define models and problems of interest.

\subsection{Graphs}

In this paper, we denote a graph as $G = (V, E)$, where $V$ is the vertex set and $E$ is the edge set. We will normally use $n$ to denote the number of vertices in a graph. Unless explicitly stated, all graphs are undirected without self-loops and parallel edges. For each vertex $v \in V$, we denote its degree by $\deg_G(v)$, its neighborhood by $\Gamma_G(v)$, and its $r$-hop neighborhood, defined as the set of vertices with distance at most $r$ to $v$, as $\Gamma^r_G(v)$. For a vertex set $V' \subseteq V$, define the induced subgraph of $V'$ in $G$ as $G[V'] = (V', \{(u,v) \in E: u, v \in V'\})$.

For all graph problems we will consider in the rest of the paper, we assume that each vertex $v \in V$ is given a unique identifier from the range $\{1,2,\dots,n^\alpha\}$ for some arbitrary fixed $\alpha \geq 1$, and all vertices have degree at most $\Delta$ for some fixed constant $\Delta \in \mathbb{N}$. Additionally, we assume that the input graph is equipped with a \emph{port ordering}. This means that, for each vertex $v$ and incident edge $(v, w)$, there is an associated number $p(v, w) \in \{1,2,\dots,\deg(v)\}$, the \emph{port number} of $(v, w)$ for $v$, such that $p$ is a bijection between edges incident to $v$ and $\{1,2,\dots,\deg(v)\}$. Notice that $p(v,w)$ and $p(w,v)$ may not be equal.

\subsection{Locally Checkable Labeling}

Here, we give the formal definition of a locally checkable labeling problem.

\begin{definition}[Locally Checkable Labeling]\label{def:LCL}
    For a fixed $\Delta$, a locally checkable labeling (LCL) problem $\Pi$ is a graph problem that can be specified by a tuple $(\Sigma\insub,\Sigma\outsub,r,C)$ such that
\begin{itemize}
	\item $\Sigma_{\mathrm{in}}$ and $\Sigma_{\mathrm{out}}$ are finite sets of labels;
	\item $r$ is an arbitrary positive integer, called the \emph{checkability radius} of $\Pi$;
	\item $C$ is the set of allowed local configurations, denoted by a finite set of tuples $(H = (V^H,E^H),x,\varphi\insub, \varphi\outsub)$, where:
	\begin{itemize}
		\item $H$ is a graph of degree at most $\Delta$, $x$ is a vertex of $H$, and each vertex has distance at most $r$ from $x$ in $H$;
		\item $\varphi\insub$ is a function from $V^H$ to $\Sigma\insub$, and $\varphi\outsub$ is a function from $V^H$ to $\Sigma\outsub$.
	\end{itemize}
\end{itemize}

    An \emph{instance} of an LCL $\Pi(\Sigma\insub, \Sigma\outsub, r, C)$ is a graph $G=(V,E)$ with $n$ vertices and maximum degree $\Delta$, and additionally an input labeling $\phi\insub: V \to\Sigma_{\mathrm{in}}$. The graph $G$ may not be connected. A \emph{solution} to the instance $(G, \phi\insub)$ is an output labeling $\phi\outsub: V \to \Sigma_{\mathrm{out}}$ such that for every vertex $v \in V$, there exists $(H=(V^H,E^H), x, \varphi\insub, \varphi\outsub) \in C$ and a bijection $f$ between $\Gamma_G^r(v)$ and $V^H$ with all the following properties:
    \begin{itemize}
        \item $f$ specifies a graph isomorphism between $G[\Gamma_G^r(v)]$ and $H$;
        \item $\forall u \in \Gamma_G^r(v)$, $\phi\insub(u) = \varphi\insub(f(u))$ and $\phi\outsub(u) = \varphi\outsub(f(u))$.
    \end{itemize}
\end{definition}

\begin{remark}
    In the rest of the paper, instead of specifying the set of allowed local configurations $C$ for an LCL explicitly, we will demonstrate a $\LOCAL$ algorithm $\mathcal{A}$ that collects the information for the $r$-hop neighborhood, including the graph structure, input labels, output labels and port numbers, but \emph{excluding the unique ID on each vertex}, and checks whether the output labeling satisfies the local constraints for the LCL. This implicitly defines $C$ as the set of local configurations that pass the check in $\mathcal{A}$.
\end{remark}

\subsection{Computation Models}

In the following, we formally define the models we consider in this paper, the randomized LCA and $\Vol$ model.

\begin{definition}[Randomized LCA]\label{def:rand-LCA}
    A randomized local computation algorithm (LCA) $\mathcal{A}$ for an LCL is given access to a deterministic graph oracle $\mathcal{O}^G$ for the input graph $G$ and input labeling $\phi\insub$, a tape $r$ of random bits shared across queries, and local read-write computation memory independent across queries. When given a query vertex $v$, $\mathcal{A}$ must assign an output label for $v$ using $r$, $\mathcal{O}^G$, and the knowledge of the size $n$ of graph $G$. The algorithm can access $\mathcal{O}^G$ by giving the unique identifier of a vertex $u$, and $\mathcal{O}^G(u)$ will return the input label $\phi\insub(u)$ along with its adjacency list and port number labeling in $G$. The algorithm is said to \emph{solve} an LCL $\Pi$ if for any instance $(G, \phi\insub)$ of $\Pi$, with probability at least $1-1/n$ over the random tape $r$, the algorithm generates an output label assignment to each vertex that forms a solution to the instance.
\end{definition}

\begin{definition}[Randomized $\Vol$]\label{def:rand-Vol}
    A randomized $\Vol$ algorithm $\mathcal{A}$ for an LCL is given access to a deterministic graph oracle $\mathcal{O}^G$ for the input graph $G$, input labeling $\phi\insub$, and an independent random string $r_v$ for each vertex $v$. The random strings are shared across queries. The algorithm is also equipped with local read-write computation memory independent across queries. When given a query vertex $v$, $\mathcal{A}$ must assign an output label for $v$ using $\mathcal{O}^G$ and the knowledge of the size $n$ of graph $G$. The algorithm can access $\mathcal{O}^G$ by giving the unique identifier of a vertex $u$, and $\mathcal{O}^G(u)$ will return the input label $\phi\insub(u)$, the random string $r_u$, along with its adjacency list and port number labeling in $G$, only when the following holds: \begin{itemize}
        \item Either $u$ is the vertex being queried, or some previous oracle access $\mathcal{O}^G(w)$ returns an adjacency list containing $u$.
    \end{itemize}
    The algorithm is said to \emph{solve} an LCL $\Pi$ if for any instance $(G, \phi\insub)$ of $\Pi$, with probability at least $1-1/n$ with respect to the local randomness $r_v$ for each vertex $v$, the algorithm generates an output labeling assignment to each vertex that forms a solution to the instance.
\end{definition}

The complexities of randomized LCA and $\Vol$ algorithms can be measured in several different ways. In this paper, we will use only the probe complexity, the maximum number of oracle accesses an algorithm uses to provide the output label for a single query, to measure an LCA or $\Vol$ algorithm. One reason is that it simplifies the analysis and enables us to focus on the effect of limited knowledge on the input graph in the algorithm design. Furthermore, it turns out that all algorithms given in this paper can also be implemented with time and space complexities roughly a factor of $O(\log n)$ larger than the probe complexities.

From the definition, we can see that a $\Vol$ algorithm is always an LCA, formalized by the following observation, but it is not the case in the reverse. It is still open whether $\Vol$ models and LCA are equivalent up to polynomial factors; see \cref{subsec:lca-vs-vol} for a related discussion.

\begin{observation}\label{obs:vol-weaker-than-lca}
    If a randomized $\Vol$ algorithm solves an LCL $\Pi$ with probe complexity $f(n)$ for some function $f$, then there is a randomized LCA that solves $\Pi$ with probe complexity $O(f(n))$.
\end{observation}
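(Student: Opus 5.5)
The plan is a direct simulation. Given a randomized $\Vol$ algorithm $\mathcal{A}$ for $\Pi$ with probe complexity $f(n)$, we build an LCA $\mathcal{A}'$ that runs $\mathcal{A}$ step by step on the query vertex $v$. The same LCA oracle answers every probe that $\mathcal{A}$ makes. The only missing ingredient is the per-vertex random string $r_u$, which $\mathcal{A}'$ must produce from the single shared tape $r$. Every probe $\mathcal{A}$ issues is automatically legal in the LCA model, since the LCA oracle has no locality restriction. So the connectivity requirement of \cref{def:rand-Vol} only restricts $\mathcal{A}$ and needs no simulation.

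To generate the random strings, we partition $r$ into disjoint blocks indexed by vertex identifiers. Identifiers lie in $\{1,\dots,n^\alpha\}$, so we let block $u$ be the segment of $r$ assigned to identifier $u$. If $\mathcal{A}$ reads only finitely many bits of $r_u$ in any execution, a fixed-length block per identifier is enough. To avoid assuming a length bound, we can instead interleave the bits: bit $j$ of $r_u$ is bit $\langle u,j\rangle$ of $r$ for some fixed pairing function $\langle\cdot,\cdot\rangle$. This makes the strings $\{r_u\}$ mutually independent and uniformly random, exactly as in the $\Vol$ model.

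Crucially, the map from $(r,u)$ to $r_u$ is identical across queries, because it depends only on the shared tape. Hence for a fixed tape $r$, the joint output of $\mathcal{A}'$ over all queries equals the joint output of $\mathcal{A}$ under the induced strings $\{r_u\}$.

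When $\mathcal{A}$ probes $u$ and expects $(\phi\insub(u), r_u, \text{adjacency list})$, $\mathcal{A}'$ makes one oracle call $\mathcal{O}^G(u)$ and attaches access to block $u$ of $r$. Reading the random tape is not a probe, so each probe of $\mathcal{A}$ costs exactly one probe of $\mathcal{A}'$, giving probe complexity $f(n)$. Correctness follows because the distribution of $\{r_u\}$ induced by a uniform $r$ matches the $\Vol$ distribution. Therefore the probability over $r$ that the combined labeling is a valid solution equals the $\Vol$ success probability, which is at least $1-1/n$.

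The only delicate step is defining $r_u$ so that the strings are independent, consistent across queries, and of unbounded length. The pairing-function indexing handles all three.
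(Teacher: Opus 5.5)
Your proposal is correct and follows essentially the same route as the paper. Both simulate the $\Vol$ algorithm probe-for-probe, generate each $r_u$ from a fixed identifier-indexed infinite subsequence of the shared tape (your pairing function is one concrete way to fix the paper's ``reservation''), and observe that the $\Vol$ access rule only restricts the simulated algorithm, so the transcript distribution, success probability, and probe count carry over exactly.
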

\begin{proof}
    Let $\mathcal{A}$ be the randomized $\Vol$ algorithm. We construct an LCA $\mathcal{B}$ that simulates $\mathcal{A}$ on the same query vertex. The LCA uses its shared random tape to generate the local random strings used by the $\Vol$ algorithm: for each vertex identifier $u$, reserve an infinite subsequence of bits on the shared tape for $r_u$. Since vertex identifiers come from a polynomial range in $n$, this reservation can be fixed in advance as a deterministic function of the identifier to make sure that the randomness is independent for every vertex.

    During the simulation, whenever $\mathcal{A}$ probes a vertex $u$, the LCA probes the same vertex through its graph oracle and additionally reads the block of random bits assigned to $u$. The $\Vol$ access rule only restricts which probes $\mathcal{A}$ is allowed to make; an LCA can make all of these probes as well. Therefore, the transcript of $\mathcal{B}$ has the same distribution as the transcript of $\mathcal{A}$ on every query and every instance. In particular, the output labels form a feasible solution with the same success probability. The number of graph-oracle probes is exactly the number of probes made by $\mathcal{A}$, so the probe complexity is $O(f(n))$.
\end{proof}

\begin{remark}\label{rem:las-vegas-to-monte-carlo}
    The algorithms described in the rest of this paper will be Las Vegas, with a guarantee of success, and the number of probes will vary with randomness. However, the randomized $\Vol$ model and randomized LCA are stated in Monte Carlo form, with a fixed probe budget and a small probability of failure. One can change the algorithm to Monte Carlo by running the algorithm with a fixed probing budget and output arbitrarily when the algorithm uses up its budget.
\end{remark}

\DeclareRobustCommand{\WalkToLeaf}{\ifmmode\mathop{\text{\normalfont\textsc{WalkToLeaf}}}\nolimits\else\textnormal{\textsc{WalkToLeaf}}\fi}
\DeclareRobustCommand{\SolvePTL}{\ifmmode\mathop{\text{\normalfont\textsc{SolvePTL}}}\nolimits\else\textnormal{\textsc{SolvePTL}}\fi}

\section{Construction in Polylogarithmic Complexity Regime}
\label{sec:log}

In this section, we construct LCLs $\PathToLeafK$ whose randomized probe complexities in both the LCA and $\Vol$ models are $\Theta(\log^k n)$ for every positive integer $k$. Here, $\mathsf{PTL}$ is an abbreviation of ``path to leaf''.

\subsection{Description of the LCL}

Before defining the LCL, we first define an input labeling called ``multi-level binary tree labeling'' in \cref{def:log-binary-tree-labeling} and define its associated local constraints in \cref{def:log-tree-labeling-consistency}. 
The purpose of this input labeling is to ``escape'' from the case where the input graph is irregular: As we will show in \cref{lem:log-pseudoforest}, all vertices satisfying the local constraints for the multi-level binary tree labeling, along with their outgoing edges, form a pseudo-forest in which each vertex has out-degree zero or two.

\begin{definition}[Multi-level binary tree labeling]\label{def:log-binary-tree-labeling}
Let $G=(V,E)$ be a graph of maximum degree at most $\Delta$, and let $\mathcal{P}=[\Delta]\cup\{\perp\}$, where $\perp$ is a special placeholder symbol. For a given positive integer $k$, a \emph{level-$k$ binary tree labeling} consists of a height labeling $h\colon V\to[k]$ and the following four labels for each $v\in V$: A parent $\P(v)\in\mathcal{P}$, a left child $\LC(v)\in\mathcal{P}$, a right child $\RC(v)\in\mathcal{P}$, and a new-instance child $\NC(v)\in\mathcal{P}$.
\end{definition}

\begin{remark}\label{rem:log-child-port-notation}
To make the set of possible inputs at each vertex constant, $\P(v),\LC(v),\RC(v)$, and $\NC(v)$ are given by the port number rather than the corresponding vertex ID in \cref{def:log-binary-tree-labeling}. However, to keep notation light, in the rest of this paper, we will abuse the notation a bit and use $\P(v), \LC(v),\RC(v),\NC(v)$ to also denote the corresponding neighbors reached through that port. If a port is $\perp$, then the corresponding vertex reached through that port is defined as $\perp$.
\end{remark}

\begin{definition}\label{def:log-tree-labeling-consistency}
For a level-$k$ binary tree labeling of $G$, a vertex $v$ is called \emph{consistent} if all the following conditions hold, and otherwise it is called \emph{inconsistent}:
\begin{enumerate}
    \item The non-$\perp$ ports among $\P(v),\LC(v),\RC(v),\NC(v)$ are pairwise distinct.
    \item For $w\in\{\LC(v),\RC(v),\NC(v)\} \backslash \{\perp\}$, $\P(w) = v$.
    \item Either both $\LC(v)$ and $\RC(v)$ are $\perp$, or neither of them is $\perp$.
    \item For $w\in\{\LC(v),\RC(v)\} \backslash \{\perp\}$, $h(w)=h(v)$.
    \item If $\NC(v)\ne\perp$, then $h(\NC(v))=h(v)-1$. In particular, if $h(v)=1$, then $\NC(v)=\perp$.
\end{enumerate}
We further use condition 3 to differentiate between consistent vertices: A consistent vertex is \emph{internal} if $\LC(v),\RC(v)\ne\perp$, and is a \emph{leaf} if $\LC(v)=\RC(v)=\perp$. For a consistent internal vertex $v$, define $\child(v,\L)=\LC(v)$ and $\child(v,\R)=\RC(v)$.
\end{definition}

\begin{lemma}\label{lem:log-pseudoforest}
For a graph $G$ and a level-$k$ binary tree labeling, define a directed graph $G' = (V,E')$, where $(u \to v) \in E'$ when $u$ is a consistent internal vertex and $v \in \{\LC(u), \RC(u)\}$. Then, each vertex in $G'$ has in-degree at most one and out-degree either zero or two. As a result, $G'$ is a pseudo-forest where each weak component has at most one directed cycle.
\end{lemma}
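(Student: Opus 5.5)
The plan is to verify the two degree bounds directly from \cref{def:log-tree-labeling-consistency}, and then derive the pseudo-forest structure by a counting argument on each weak component.

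\emph{Out-degree.} By definition, the only edges of $E'$ leaving $u$ are those to $\LC(u)$ and $\RC(u)$, and only when $u$ is a consistent internal vertex. If $u$ is inconsistent or a consistent leaf, its out-degree is zero. If $u$ is consistent internal, condition 3 gives $\LC(u),\RC(u)\neq\perp$. Condition 1 says these two ports are distinct. Since $G$ has no parallel edges, distinct ports lead to distinct neighbors, so the out-degree is exactly two.

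\emph{In-degree.} Suppose $(u\to v)\in E'$. Then $u$ is consistent and $v\in\{\LC(u),\RC(u)\}\setminus\{\perp\}$, so condition 2 applied at $u$ gives $\P(v)=u$. Here $\P(v)$ is a single port of $v$, hence names a single neighbor. So every in-neighbor of $v$ in $G'$ equals the vertex reached through port $\P(v)$. Thus $v$ has at most one in-neighbor, and by simplicity of $G$ only one edge from it. The key subtlety is that we use the consistency of the \emph{tail} $u$, not of $v$. This is the point to handle carefully; the rest is bookkeeping.

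\emph{Pseudo-forest.} Fix a weak component $W$ of $G'$ with $m$ vertices. Since each vertex has in-degree at most one, counting edges by their heads gives $|E'(W)|\le m$. A connected graph with at most $m$ edges on $m$ vertices has cyclomatic number $|E'(W)|-m+1\le 1$, so its underlying undirected multigraph contains at most one cycle.

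It remains to show this cycle, if it exists, is directed, and that no other directed cycle can occur.
\begin{itemize}
\item Any directed cycle is in particular an undirected cycle, so there is at most one directed cycle.
\item If the undirected cycle were not directed, it would contain a vertex with two incoming cycle edges, contradicting in-degree at most one.
\item The cycle cannot be a 2-cycle $u\to v\to u$ formed by one undirected edge. That would give $v=\LC(u)$ or $\RC(u)$ and $\P(v)=u$, and also $u$ a child of $v$ with $\P(u)=v$, using the same edge $\{u,v\}$. This would make the port of $v$ toward $u$ serve as both $\P(v)$ and a child port, violating condition 1 at $v$.
\end{itemize}
Even if such a 2-cycle were allowed, it would still count as the unique directed cycle of the component, so the claim holds either way.

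Hence each weak component of $G'$ has at most one directed cycle, and $G'$ is a pseudo-forest.
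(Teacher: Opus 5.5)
Your proof is correct and follows the same route as the paper: the in-degree bound comes from condition 2 applied at the consistent tail, so every in-neighbor of $v$ is the vertex behind port $\P(v)$, and the out-degree claim comes from condition 3. You are somewhat more careful than the paper, which neither invokes condition 1 to make $\LC(u)$ and $\RC(u)$ distinct neighbors nor writes out the edge-counting argument behind the pseudo-forest conclusion.
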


\begin{proof}
    For each vertex $v$, from constraint 2 of \cref{def:log-tree-labeling-consistency}, its only possible in-degree in $G'$ is from $\P(v)$ if $\P(v) \ne \perp$. From constraint 3 of \cref{def:log-tree-labeling-consistency}, each consistent internal vertex in $G'$ has out-degree two, while inconsistent vertices and consistent leaves have no out-neighbor. This concludes the statement.
\end{proof}

With \cref{def:log-binary-tree-labeling,def:log-tree-labeling-consistency}, we are ready to describe the LCL $\PathToLeafK$. 

\begin{definition}[$\PathToLeafK$]
    The problem $\PathToLeafK$ is defined in the following way: 

\textbf{Input:} A graph $G$, a level-$k$ binary tree labeling, and $b\insub(v)\in\{0,1\}$ for each $v\in V$.

\textbf{Output:} $d(v)\in\{\L,\R,\perp\}$ and $b\outsub(v)\in\{0,1\}$ for each $v\in V$.

\textbf{Constraint:} For each vertex $v\in V$, if $v$ is inconsistent, then $(d(v),b\outsub(v))=(\perp,0)$; Otherwise, $d(v) \in \{\L,\R\}$ for internal vertices $v$, $d(v) = \perp$ for leaves, and we have the following equation for $b\outsub(v)$:
\begin{equation}
    b\outsub(v)=b\insub(v)\xor b'(v) \xor b''(v).\label{eq:log-lcl-def}
\end{equation}
Here
\begin{gather}
    b'(v)=
    \begin{cases}
        0, & \NC(v)=\perp\\
        b\outsub(\NC(v)), & \NC(v)\ne\perp
    \end{cases}\label{eq:log-b'-def}
    \\
    b''(v)=
    \begin{cases}
        0, & \LC(v)=\RC(v) =\perp\\
        b\outsub(\child(v,d(v))), & \mathrm{otherwise}
    \end{cases}\label{eq:log-b''-def}
\end{gather}
    
\end{definition}

\begin{lemma}\label{lem:log-lcl}
    $\PathToLeafK$ is an LCL.
\end{lemma}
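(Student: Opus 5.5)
To prove \cref{lem:log-lcl}, I will check the three requirements of \cref{def:LCL} one by one: a finite input alphabet, a finite output alphabet, and a constant checkability radius $r$ together with a local check. For the check I will follow the paper's convention and describe a $\LOCAL$ verifier that does not use IDs. This verifier implicitly defines the allowed configuration set $C$.

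\textbf{Alphabets.} The input label at $v$ consists of the following parts:
\begin{itemize}
\item the height $h(v)\in[k]$;
\item the four ports $\P(v),\LC(v),\RC(v),\NC(v)\in[\Delta]\cup\{\perp\}$;
\item the bit $b\insub(v)\in\{0,1\}$.
\end{itemize}
So $|\Sigma\insub| = 2k(\Delta+1)^4$, which is a constant because $k$ and $\Delta$ are fixed. The output alphabet is $\Sigma\outsub=\{\L,\R,\perp\}\times\{0,1\}$, which has size $6$.

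\textbf{Radius and locality of consistency.} I will take $r=2$. The main point is that consistency of any vertex can be decided from its $1$-hop neighborhood together with port numbers. Each condition of \cref{def:log-tree-labeling-consistency} refers only to the following:
\begin{itemize}
\item the labels of $v$ itself;
\item the labels $\P(w)$ and $h(w)$ of neighbors $w$ reached through $v$'s ports;
\item the port numbers, so that one can check whether $\P(w)$ points back to $v$.
\end{itemize}
All of this lies inside $\Gamma^1_G(v)$. The verifier at $v$ needs to know whether $v$ is consistent. It also needs the classification of $v$ as internal or leaf, which uses only $v$'s own labels once consistency is known.

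\textbf{The check itself.} The verifier at $v$ proceeds as follows.
\begin{enumerate}
\item If $v$ is inconsistent, it accepts exactly when the output is $(\perp,0)$.
\item Otherwise, it checks that $d(v)\in\{\L,\R\}$ if $v$ is internal, and that $d(v)=\perp$ if $v$ is a leaf.
\item It computes $b'(v)$ from $\NC(v)$ and the output of that neighbor, using \eqref{eq:log-b'-def}.
\item It computes $b''(v)$ from $\child(v,d(v))$ and that neighbor's output, using \eqref{eq:log-b''-def}.
\item It verifies \eqref{eq:log-lcl-def}.
\end{enumerate}
All the referenced vertices are neighbors of $v$, so radius $1$ already suffices. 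I will state $r=2$ only for safety.

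\textbf{Equivalence with the constraint.} It then remains to observe two things. First, an output labeling passes the verifier at every vertex exactly when it satisfies the constraint of $\PathToLeafK$. Second, the verifier's decision depends only on the isomorphism type of the labeled ball with ports, and not on IDs, so it defines a finite set $C$ of local configurations.

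\textbf{Main obstacle.} The only subtle step is confirming that evaluating $\child(v,d(v))$ and $\NC(v)$ through port numbers is well-defined inside the ball. This holds because ports are part of the graph information the verifier collects. The remaining steps are routine.
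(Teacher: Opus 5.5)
Your proposal is correct and follows essentially the same route as the paper. The alphabets have constant size because $k$ and $\Delta$ are fixed, consistency can be decided from the one-hop neighborhood with ports, and the output constraint \eqref{eq:log-lcl-def} then involves only $v$ and its neighbors $\NC(v)$ and $\child(v,d(v))$; the paper simply takes checkability radius one, so your $r=2$ is harmless but, as you note yourself, unnecessary.
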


\begin{proof}
From the definition, since the maximum degree $\Delta$ and the parameter $k$ are constant, we know that the set of possible input and output labels has a constant size. Now we show that there is a $\LOCAL$ algorithm of radius one to check whether the local constraint is satisfied for each vertex. The consistency conditions in \cref{def:log-tree-labeling-consistency} inspect only $v$, its incident ports, and the labels of its neighbors. As a result, we can determine whether the current vertex is consistent, internal, or a leaf using input labels from the one-hop neighborhood. After that, constraints for the output label only involve the one-hop neighborhood. Hence, every constraint is checkable in radius one.
\end{proof}

In the rest of the section, we prove the following upper and lower bound results, which together imply \cref{thm:log-main}.

\begin{lemma}\label{lem:log-upper-bound}
    There is a randomized $\Vol$ algorithm that solves $\PathToLeafK$ in all bounded-degree graphs using $O(\log^k n)$ probes.
\end{lemma}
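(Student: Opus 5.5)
We will build a recursive random-walk procedure in the spirit of the $\LeafColoring$ algorithm of \cite{rosenbaum2020seeing}. The choice of $d(v)$ must be consistent across queries, so each consistent internal vertex $v$ fixes $d(v)\in\{\L,\R\}$ uniformly at random using its own random string $r_v$. This choice is independent of the query, so all queries agree on it.

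Given this, the output is forced. Inconsistent vertices output $(\perp,0)$, and leaves output $d=\perp$. Every other vertex has $b\outsub(v)$ determined by \eqref{eq:log-lcl-def}. Unrolling \eqref{eq:log-b''-def}, $b\outsub(v)$ is the xor, over the walk $p_0=v,\ p_{i+1}=\child(p_i,d(p_i))$, of $b\insub(p_i)\oplus b'(p_i)$. Each $b'(p_i)$ is in turn $b\outsub(\NC(p_i))$, the same quantity at a vertex of height one less.

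The procedure $\SolvePTL(v)$ is therefore:
\begin{itemize}
\item probe $v$ and its neighbors to decide consistency;
\item follow the walk;
\item at each walk vertex with $\NC\neq\perp$, recursively call $\SolvePTL(\NC(p_i))$;
\item xor everything together.
\end{itemize}
All probes stay connected to the query (walk steps are along edges, and the $\NC$ edges are edges), so this is a valid $\Vol$ algorithm.

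We must check that the walk terminates at a leaf rather than cycling. By \cref{lem:log-pseudoforest} the walk follows out-edges of consistent internal vertices, and could in principle enter the unique directed cycle of a pseudo-tree component. However, every vertex on a directed cycle has a second out-edge leaving the cycle. The walk continues around the cycle at each step with probability $1/2$, so it leaves any cycle quickly. Height is preserved along $\LC$/$\RC$ by condition 4 and decreases along $\NC$ by condition 5, so the recursion depth is at most $k$.

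The key quantitative step, and the main obstacle, is bounding the walk length by $O(\log n)$ with probability $1-1/\mathrm{poly}(n)$ from every starting vertex. We will argue as follows. Among the walk's first $t$ steps, consider those at which the vertex's subtree (restricted to the acyclic part) splits into two children. Each such step goes to the smaller side with probability $1/2$, halving the number of reachable vertices. After $c\log n$ steps, by a Chernoff bound, at least $\log_2 n+1$ halvings have occurred, so the walk has reached a leaf. Cycle vertices need care: the cycle may contain up to $n$ vertices, and any walk along it is pinned at the start. We handle this by noting that each cycle vertex exits with probability $1/2$, so staying on the cycle for $c\log n$ steps has probability $n^{-c}$. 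After exiting, the walk is in a finite in-tree and the halving argument applies. A union bound over the $n$ start vertices and the at most $n$ recursive invocations per query gives the high-probability statement.

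With all walks of length $O(\log n)$, the probe count satisfies $T_k=O(\log n)\cdot(1+T_{k-1})$ with $T_1=O(\log n)$, so $T_k=O(\log^k n)$. Finally, \cref{rem:las-vegas-to-monte-carlo} converts this into a Monte Carlo algorithm with a fixed $O(\log^k n)$ budget and failure probability at most $1/n$.
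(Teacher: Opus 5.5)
\textbf{There is a genuine gap: short directed cycles.} Your choice $d(v)=\rho(v)$ fails on them, and your argument for leaving a cycle does not apply there. Because $d$ is fixed once per vertex (as it must be, for consistency across queries), a walk that completes a full lap of a cycle gets no fresh randomness on the second lap. So the bound ``staying on the cycle for $c\log n$ steps has probability $n^{-c}$'' holds only when the cycle has at least $c\log n$ vertices.

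A cycle $C$ of constant length $L$ is fully aligned by $\rho$ with probability $2^{-L}$. For example, take a triangle $v_1\to v_2\to v_3\to v_1$ at height $1$ with $\P(v_i)=v_{i-1}$, $\LC(v_i)=v_{i+1}$, $\RC(v_i)$ a fresh leaf (indices mod $3$); every vertex is consistent. When $C$ is aligned, two things go wrong:
\begin{itemize}
\item the walk from any vertex of $C$ never terminates;
\item the constraints \cref{eq:log-lcl-def} around $C$ force $\bigoplus_{v\in C}\bigl(b\insub(v)\oplus b'(v)\bigr)=0$, so if the input parity around $C$ is $1$ there is \emph{no} valid $b\outsub$ for this choice of $d$ at all.
\end{itemize}
On an instance with $\Theta(n)$ disjoint copies of this gadget, some copy is aligned with probability $1-(7/8)^{\Theta(n)}\to 1$. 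So your algorithm does not solve $\PathToLeafK$ with high probability.

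\textbf{The missing idea is a cycle-breaking rule that still yields query-independent outputs.} The paper's \cref{alg:log-walk} does this in three steps.
\begin{itemize}
\item Follow $\rho$ from $v$. If the walk returns to $v$ (line \ref{line:alg-log-walk-in-cycle}), discard it and restart from $v$ in the direction opposite to $\rho(v)$.
\item By the in-degree bound of \cref{lem:log-pseudoforest}, this second walk can neither return to $v$ nor enter any other directed cycle, so it ends at a leaf or an inconsistent vertex.
\item $d(v)$ is set to the first direction actually taken. On an aligned cycle, every cycle vertex therefore exits the cycle.
\end{itemize}
Since $d(v)$ is no longer a function of $\rho(v)$ alone, your claim that ``the output is forced'' now needs the suffix property (\cref{lem:log-algo-log-walk-correctness}). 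It says the walk computed independently from $w=\child(v,d(v))$ is exactly the remainder of the walk from $v$; otherwise $w$ would have two in-neighbors. This is what makes the separately computed outputs satisfy $b\outsub(v)=b\insub(v)\oplus b'(v)\oplus b\outsub(w)$.

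Once this rule is in place, the rest of your plan works: the halving/Chernoff bound on the acyclic part, the first-lap exit argument, and the recurrence $T_k=O(\log n)(1+T_{k-1})$. The paper replaces the halving argument with a simpler count. Before a walk returns to its start, distinct direction sequences of length $t$ end at distinct vertices, so at most $n$ of them keep the walk alive. This bounds both the (possibly aborted) first lap and the second walk.
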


\begin{lemma}\label{lem:log-lower-bound}
    Any randomized LCA that solves $\PathToLeafK$ uses $\Omega(\log^k n)$ probes even when we restrict the instance to bounded-degree trees.
\end{lemma}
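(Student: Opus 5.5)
We will apply Yao's minimax principle. The plan is to construct a distribution $\mathcal{D}$ over tree instances of $\PathToLeafK$ on $\Theta(n)$ vertices together with a designated root $\rho$, with the following property: every deterministic LCA that queries $\rho$ and makes $o(\log^k n)$ probes outputs a $b\outsub(\rho)$ that agrees with every globally correct solution with probability at most $1/2+o(1)$. A randomized LCA with success probability $1-1/n$ would then contradict this.

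\textbf{Underlying graph.} Let $m$ be such that the total size is about $n$, for instance $m=n^{1/k}$. Define the tree $T_k$ recursively. $T_1$ is a complete binary tree with $m$ vertices, all of height $1$, consistently labeled, so it has depth $\Theta(\log n)$. $T_j$ is a complete binary tree of height label $j$ on $m$ vertices in which every vertex $v$ has $\NC(v)$ pointing to the root of a fresh copy of $T_{j-1}$. Then $|T_k|=\Theta(m^k)=\Theta(n)$, every vertex is consistent, and the graph is a bounded-degree tree. The only randomness is in the bits $b\insub$.

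\textbf{Key step: making the root bit independent of $d$.} We want $b\outsub(\rho)$ to be forced. Assign bits so that for each internal vertex $v$ at level $j$, both choices of $d(v)$ give the same value; that is, $b\outsub(\LC(v))=b\outsub(\RC(v))$ in every correct solution. Define a forced value $\val(v)$ bottom-up by
\[
\val(v)=b\insub(v)\xor \val(\NC(v))\xor \val(\child(v,\cdot)).
\]
To sample, we pick $\val$ for the vertices first and then set the $b\insub$ bits consistently. For a leaf, choose $\val$ uniformly and set $b\insub$ accordingly. For an internal vertex, choose $\val(v)$ uniformly, sample its left and right subtrees conditioned on both having root value $c$ for a uniform $c$, and set $b\insub(v)=\val(v)\xor c\xor\val(\NC(v))$, where the attached instance is sampled independently with a uniform root value.

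By induction on the recursion, every correct solution has $b\outsub(v)=\val(v)$, regardless of $d$. Hence $b_I=\val(\rho)$ is well defined.

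\textbf{Main obstacle: the hiding lemma.} We must show that conditioned on the inputs and structure of any set $Q$ of probed vertices with $|Q|=o(\log^k n)$, chosen adaptively and connected or not, $\val(\rho)$ remains uniform. The same must hold for every subinstance root relative to its own budget. We prove by induction on $j$ and on depth the following claim: for a level-$j$ subtree whose root-to-leaf paths all have length $\Theta(\log m)$, revealing fewer than $c\log^j m$ vertices leaves the root value uniform and independent of the revealed data.

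Consider a level-$j$ tree. For $\val(\rho)$ to be determined, some root-to-leaf path must be fully determined. Fix any path; it contains $\Theta(\log m)$ vertices. Its root value is $\val(\rho)$, which equals the xor along the path of $b\insub$ together with the attached values. Tracking the construction, $\val(\rho)$ is uniform and the bits $b\insub$ are one-time pads for it. Revealing a vertex's $b\insub$ alone therefore gives nothing unless all other pads on that path are known. The pads include the attached-instance values $\val(\NC(u))$, each of which by induction needs $\Omega(\log^{j-1}m)$ probes inside its own instance to learn.

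Since the left and right subtrees share the value $c$, any path of vertices suffices. Still, every path needs $\Omega(\log m)$ unknown attached values resolved, for a total of $\Omega(\log^j m)$ probes. I would formalize this by showing that the posterior of $\val(v)$ is uniform unless $Q$ contains a ``witness'' structure, namely a full path in which each vertex's attached instance also contains a witness. Each step uses the fact that the fresh uniform choices $\val(v)$ and $c$ are independent of all bits outside the revealed ones.

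Making this independence argument rigorous under adaptivity is the delicate part. I would handle it by first revealing more than the algorithm asked for, namely the entire sampling randomness for any subtree whose witness is complete, and then arguing that uniformity is preserved. Finally, $\log^k m=\Theta(\log^k n)$ because $k$ is constant.
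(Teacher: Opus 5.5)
Your proposal is correct and follows essentially the paper's route: the same $k$ layers of perfect binary trees with $\NC$ attachments, the same device of giving both children equal forced root values so that $d$ becomes irrelevant, Yao's principle, and an inductive hiding argument. The paper phrases the hiding step as an input-preserving bijection between $\mathcal{S}^0$ and $\mathcal{S}^1$ for any fixed set of fewer than $\prod_i a_i$ vertices: flip $b\insub$ at the root if the root is unprobed, else flip the $\NC$ instance, else flip both children. This is a counting form of your witness-free condition.

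The adaptivity you single out as delicate needs no extra ``revealing more'' step. For a deterministic algorithm on a fixed graph, the transcript $\tau$ is determined by the inputs on its probed set $V(\tau)$. So the fixed-set statement applied to $V(\tau)$ already shows that the forced root bit is uniform conditioned on $\tau$.
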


\begin{proof}[Proof of \cref{thm:log-main}]
    From \cref{obs:vol-weaker-than-lca}, the upper bound result \cref{lem:log-upper-bound} can be generalized from $\Vol$ to LCA, while the lower bound result \cref{lem:log-lower-bound} can be generalized from LCA to $\Vol$. As a consequence, in both models, we have a $\Theta(\log^k n)$ randomized algorithm for $\PathToLeafK$ on bounded-degree graphs and a matching lower bound for bounded-degree trees.
\end{proof}

\subsection{Upper Bound}

In this part, we prove \cref{lem:log-upper-bound}, providing a randomized $\Vol$ algorithm for $\PathToLeafK$ on bounded-degree graphs with probe complexity $O(\log^k n)$.

To begin with, we provide an equivalent formulation of the constraints in $\PathToLeafK$. Arbitrarily fix $d(v)$ for each consistent internal vertex $v$. For some consistent vertex $u$, apply $u \leftarrow \child(u, d(u))$ multiple times to generate a path, until either reaching a leaf, an inconsistent vertex, or returning to where it starts. The path is always a path in the graph $G'$ defined in \cref{lem:log-pseudoforest}, so according to \cref{lem:log-pseudoforest}, the procedure terminates in finite steps no matter how the $d$ output label is assigned. This procedure generates either a path from $u$ to some vertex $w$, or a simple cycle passing $u$. We will demonstrate the way to get rid of cycles, so let us focus on the other case. Applying \cref{eq:log-lcl-def} for each vertex along the path, we can conclude that the value of $b\outsub(u)$ is the xor of all $b\insub$ values along the path and all $b\outsub$ values of the new-instance children along the path. 

According to constraints 4 and 5 in \cref{def:log-tree-labeling-consistency}, all vertices along the path have the same height, while their new-instance children have height $h(v)-1$. If $d(v)$ is assigned in such a way that the path is of length $O(\log n)$, then one can recurse on all $O(\log n)$ new-instance children to get their $b\outsub$ value, and the recursion decreases the vertex height by one. This means that, when all possible paths generated by the procedure above have length $O(\log n)$, we get a $\Vol$ algorithm with probe complexity $O(\log^k n)$.

This is exactly the strategy of our randomized $\Vol$ algorithm given in \cref{alg:log-solve}, where the output pair $(d,z)$ of $\SolvePTL^k(v)$ will be the final value of $(d(v), b\outsub(v))$. We assume that each vertex has a uniform random direction $\rho(v)\in\{\L,\R\}$, and recall that in the $\Vol$ model, the local random direction $\rho(v)$ is fixed for each vertex $v$ and shared across queries.

The algorithm first assigns $(\perp, 0)$ for inconsistent vertices. For other vertices, it calls $\WalkToLeaf(v)$, described in \cref{alg:log-walk}, to find the path from $v$ to either an inconsistent vertex or a leaf using only $\LC$ and $\RC$ ports. The following observation is straightforward from the algorithm description.

\begin{observation}\label{obs:alg-log-walk-in-G'}
    For any consistent vertex $v$, the path $\pi_v$ generated by $\WalkToLeaf(v)$ is a path in the graph $G'$ defined in \cref{lem:log-pseudoforest}.
\end{observation}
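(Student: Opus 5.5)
The plan is to check the claim directly against how $\WalkToLeaf(v)$ proceeds, which I take to be this: it starts at $v$, and while the current vertex $u$ is a consistent internal vertex, it moves to $\child(u,\rho(u))$. It stops on reaching a leaf or an inconsistent vertex, or on returning to a vertex already on the walk. It only ever uses the $\LC$ and $\RC$ ports.

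The argument is an induction along the walk. Write $\pi_v = (u_0 = v, u_1, \dots, u_t)$. It suffices to show that every step $u_i \to u_{i+1}$ is an edge of $E'$. The walk only steps out of $u_i$ when $u_i$ is a consistent internal vertex; otherwise it has already terminated. In that case $u_{i+1} = \child(u_i,\rho(u_i)) \in \{\LC(u_i), \RC(u_i)\}$. Condition 3 of \cref{def:log-tree-labeling-consistency} guarantees this vertex is not $\perp$, and \cref{rem:log-child-port-notation} makes it a genuine neighbor. This is exactly the defining condition for $(u_i \to u_{i+1}) \in E'$ in \cref{lem:log-pseudoforest}, so every step is an edge of $G'$.

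It remains to see that $\pi_v$ is a path rather than an arbitrary walk. By \cref{lem:log-pseudoforest}, every vertex has in-degree at most one in $G'$. So if the walk ever revisits a vertex, the first repeated vertex must be $v$ itself, and the walk has traced a simple directed cycle through $v$. This matches the path-or-cycle dichotomy described earlier. The termination rule stops the walk there, so $\pi_v$ is a simple path in $G'$, possibly closing up into a cycle through $v$.

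There is no real obstacle here. The only care needed is to match the stopping conditions of \cref{alg:log-walk} exactly, so that the walk never steps out of a leaf or an inconsistent vertex.
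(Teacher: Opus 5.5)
Your core idea is right: each step of the walk goes from a consistent internal vertex $u$ to some $\child(u,a)\in\{\LC(u),\RC(u)\}$, and that is an edge of $E'$ by definition. However, you have misread \cref{alg:log-walk}, and two of your steps are false for the algorithm as written.

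First, the step is not always $u_{i+1}=\child(u_i,\rho(u_i))$. When the walk comes back to $v$ (the branch at line~\ref{line:alg-log-walk-in-cycle}), the algorithm resets $\pi\gets(v)$ and moves to $\child(v,a)$, where $a$ is the direction \emph{other} than $\rho(v)$. In that case the first edge of the returned $\pi_v$ is not of the form you claim.

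Second, the algorithm does not stop upon returning to $v$. It discards the cycle and walks the other way, and the returned $\pi_v$ always ends at a leaf or an inconsistent vertex. By the in-degree bound, the second walk can neither return to $v$ nor enter another cycle; see the proof of \cref{lem:log-algo-log-walk-probes}. So your conclusion that $\pi_v$ may ``close up into a cycle through $v$'' is wrong. You have carried the path-or-cycle dichotomy of the generic $d$-walk over to $\WalkToLeaf$, whose whole purpose is to eliminate the cycle case. This matters, because following $\rho$ around a cycle could make the $b\outsub$ equations along that cycle unsatisfiable.

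The repair is short. In every loop iteration, the current $u$ is a consistent internal vertex and the next vertex is $\child(u,a)$ for some $a\in\{\L,\R\}$. Here $a$ is either $\rho(u)$, or the complement of $\rho(v)$ right after the reset, and both children are out-neighbors of $u$ in $G'$. The reset leaves $v$ alone in $\pi$ immediately before the step $v\to\child(v,a)$. Hence consecutive entries of the returned $\pi_v$ are always joined by such a step, which is all the observation asserts; the paper treats it as immediate from the algorithm. If you also want simplicity, your in-degree argument applies to the final walk, combined with the fact that this walk never revisits $v$.
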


\cref{alg:log-walk} treats the randomness $\rho(v)$ as the initial assignment for $d(v)$ and traces the path $\pi_v$ by following $\rho$. If the path ends at a vertex for which the termination condition holds, then we assign $d(v)$ with $\rho(v)$ in \cref{alg:log-solve}. For the other case when $\rho$ directs $v$ into a cycle, corresponding to the branch in line \ref{line:alg-log-walk-in-cycle}, \cref{alg:log-walk} will abandon all visited vertices and use the direction other than $\rho(v)$ to restart a walk. The other direction is guaranteed to arrive at a leaf or an inconsistent vertex, as otherwise we can find some vertex with in-degree two in the graph $G'$ defined in \cref{lem:log-pseudoforest}. In this case, we assign $d(v)$ to that other direction. After getting the path $\pi_v$, \cref{alg:log-solve} assigns $b\outsub(v)$ by what we have discussed above: gathers all $b\insub$ values in $\pi_v$, and gets all $b\outsub$ values by recursion.

\begin{algorithm}[p]
\caption{$\WalkToLeaf(v)$}
\label{alg:log-walk}
\begin{algorithmic}[1]
\REQUIRE A random direction $\rho(u) \in \{\L, \R\}$ for each vertex $u$ in the graph.
\ENSURE A path $\pi$ from $v$ to a leaf or inconsistent vertex using only $\LC$ and $\RC$ ports.
\STATE $u\gets v$, $\pi\gets()$
\WHILE{$u$ is a consistent internal vertex}
    \IF{$u=v$ and $\pi$ is nonempty} \label{line:alg-log-walk-in-cycle}
        \STATE $\pi\gets(v)$
        \STATE $a\gets$ the unique direction in $\{\L,\R\}\backslash\{\rho(v)\}$
    \ELSE
        \STATE append $u$ to the end of $\pi$
        \STATE $a\gets\rho(u)$
    \ENDIF
    \STATE $u\gets\child(u,a)$
\ENDWHILE
\STATE append $u$ to $\pi$
\RETURN $\pi$
\end{algorithmic}
\end{algorithm}

\begin{algorithm}[p]
\caption{$\SolvePTL^k(v)$}
\label{alg:log-solve}
\begin{algorithmic}[1]
\ENSURE The output label $(d(v), b\outsub(v))$ for $v$
\IF{$v$ is inconsistent}
    \RETURN $(\perp,0)$
\ENDIF
\STATE $\pi\gets\WalkToLeaf(v)$
\IF{$v$ is a leaf}
    \STATE $d\gets\perp$
\ELSE
    \STATE $d\gets$ the first direction taken by $\pi$ from $v$ \label{line:alg-log-solve-assign-direction}
\ENDIF
\STATE $z\gets 0$
\FOR{each consistent vertex $u$ in $\pi$}
    \STATE $z\gets z\xor b\insub(u)$
    \IF{$\NC(u)\ne\perp$}
        \STATE $(-,z')\gets\SolvePTL^k(\NC(u))$
        \STATE $z\gets z\xor z'$
    \ENDIF
\ENDFOR
\RETURN $(d,z)$
\end{algorithmic}
\end{algorithm}

We first show that \cref{alg:log-walk} behaves correctly within $O(\log n)$ probes with high probability, which leads to the $O(\log^k n)$ probe complexity of \cref{alg:log-solve}.

\begin{lemma}\label{lem:log-algo-log-walk-probes}
    For each consistent vertex $v \in G$, $\WalkToLeaf(v)$ generates a path $\pi_v$ from $v$ to a leaf or an inconsistent vertex within $O(\log n)$ probes with probability $1-1/n^3$. As a result, $|\pi_v| = O(\log n)$ with probability $1-1/n^3$.
\end{lemma}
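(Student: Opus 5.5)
The plan is to analyze the walk inside the pseudo-forest $G'$ of \cref{lem:log-pseudoforest}, using the rooted-binary-tree structure and a counting argument over vertex sets. By \cref{obs:alg-log-walk-in-G'}, every step of $\WalkToLeaf(v)$ follows an edge of $G'$. Each such edge leads to a neighbor of the current vertex, so the $\Vol$ access rule is respected. Each step also costs $O(1)$ probes: we probe the current vertex and its neighbors to check consistency. It therefore suffices to bound the number of steps by $O(\log n)$.

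\textbf{Step 1: the walk from $v$.} Let $T_v$ be the set of vertices reachable from $v$ in $G'$. Since in-degrees are at most one, the weak component of $v$ contains at most one directed cycle. There are two cases.

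\begin{itemize}
\item If $v$ is not on a cycle, then no cycle is reachable from $v$. Indeed, the walk cannot enter a cycle from outside, since a cycle vertex already has its unique in-edge from the cycle. Hence $T_v$ induces a rooted tree in which every vertex has out-degree $0$ or $2$, and $|T_v| \le n$.
\item If $v$ lies on a cycle $C$, then each cycle vertex has exactly one off-cycle child. The subtree hanging off that child is again such a tree.
\end{itemize}

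\textbf{Step 2: the tree case.} Fix a rooted tree in which every internal vertex has exactly two children and there are at most $n$ vertices. Walk from the root choosing children by independent uniform bits $\rho$. The probability that the walk survives $t$ steps equals the sum, over depth-$t$ vertices $u$, of $2^{-t}$. The number of such vertices is at most $n$, so this probability is at most $n 2^{-t}$. Taking $t = 4\log_2 n$ gives failure probability at most $n^{-3}$. The walk uses only the fresh bits $\rho(u)$ of the vertices it visits, and it never revisits a vertex in a tree, so these bits are independent.

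\textbf{Step 3: the cycle case, which I expect to be the main obstacle.} Here the walk follows $\rho$ until it either leaves $C$ into a hanging subtree or returns to $v$. If it leaves $C$ at some cycle vertex $c$, the remainder is a random walk in the subtree of $c$'s off-cycle child. The bits used in that subtree have not been consumed yet, so Step 2 applies to it.

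The number of steps spent on $C$ before leaving needs its own bound, because $|C|$ may be large. The walk stays on the cycle at each vertex with probability $1/2$, independently. So it stays for $t$ steps with probability $2^{-t}$, which is at most $n^{-4}$ for $t = 4\log_2 n$.

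If the walk returns to $v$, which means it used the cycle direction at every vertex of $C$, the algorithm restarts from $v$ in the other direction. The other direction is off-cycle and leads into a subtree of size at most $n$. The rho bits in that subtree are still unused, so Step 2 applies again.

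\textbf{Step 4: combining.} A union bound over these $O(1)$ stages, each failing with probability at most $n^{-4}$ after slightly adjusting constants (for instance $t = 5\log_2 n$), gives total failure probability at most $1/n^3$. Every stage takes $O(\log n)$ steps, so the total is $O(\log n)$ probes. Since $|\pi_v|$ is at most the number of steps plus one, the bound $|\pi_v| = O(\log n)$ follows immediately. The remaining delicate point is the independence of the bits across stages. It holds because each stage reads only the $\rho$ values of vertices not visited before, with the one exception of $\rho(v)$, which is inverted deterministically on restart.
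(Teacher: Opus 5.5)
Your proposal is correct and follows essentially the same route as the paper. Both arguments rest on the same facts: in-degree at most one in $G'$ forces distinct surviving direction sequences to end at distinct vertices, which gives the $n2^{-t}$ survival bound; the restarted walk from $v$ can neither return to $v$ nor enter another cycle; and a union bound over the two walks finishes the proof. You make the tree/cycle structure explicit and bound the on-cycle segment separately by a $2^{-t}$ geometric tail, while the paper's single injectivity argument over direction prefixes handles the cycle and tree portions at once.
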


\begin{proof}
    By \cref{obs:alg-log-walk-in-G'}, all vertices visited by \cref{alg:log-walk} lie on a directed walk in the graph $G'$ from \cref{lem:log-pseudoforest}. We first record a simple consequence of the in-degree bound in \cref{lem:log-pseudoforest}. Fix a vertex $s$ and expose a walk that starts at $s$ and, whenever it is at a consistent internal vertex, follows an independently chosen random direction in $\{\L,\R\}$. Before the walk reaches a leaf, reaches an inconsistent vertex, or returns to $s$, two different direction sequences of length $t$ must end at two different vertices. Indeed, if two such prefixes first meet at a vertex other than $s$, then that vertex has two distinct in-neighbors in $G'$, contradicting \cref{lem:log-pseudoforest}. Hence, for $t$ steps, the number of direction sequences that keep the walk ``alive'', meaning that all vertices traveled by the walk are consistent internal vertices, is at most $n$, and the probability that this happens is at most $n2^{-t}$.

    Choose a sufficiently large constant $C$. Applying the previous bound with $t=C\log n$ shows that the walk followed by \cref{alg:log-walk} lasts more than $C\log n$ steps with probability at most $1/(2n^3)$ before the walk arrives at a leaf, arrives at an inconsistent vertex, or finds a cycle and triggers the branch in line \ref{line:alg-log-walk-in-cycle}. If this walk reaches a leaf or an inconsistent vertex, the algorithm stops. Otherwise, the first walk has returned to $v$, so it has found the unique directed cycle through $v$. The algorithm then restarts from $v$ using the other child of $v$ as the first step. By \cref{lem:log-pseudoforest}, this second walk cannot return to $v$: the predecessor of $v$ on the first cycle already accounts for the only possible in-neighbor of $v$ in $G'$. It also cannot enter a directed cycle not containing $v$, because entering such a cycle from outside would give the entry vertex two distinct in-neighbors in $G'$. Therefore, the second walk must end at a leaf or an inconsistent vertex.

    The same counting argument, now with the first direction fixed and all later directions random, shows that this second walk has length more than $C\log n$ with probability at most $1/(2n^3)$ after increasing $C$ if necessary. By a union bound, with probability at least $1-1/n^3$, both walks considered by the algorithm have length $O(\log n)$. The algorithm probes only the vertices on these walks, up to a constant number of additional oracle accesses needed to check consistency and read the relevant ports. Thus \cref{alg:log-walk} uses $O(\log n)$ probes and returns a path $\pi_v$ of length $O(\log n)$ with probability at least $1-1/n^3$.
\end{proof}

\begin{corollary}\label{cor:log-algo-log-solve-probes}
    For each $v \in G$, $\SolvePTL^k(v)$ probes $O(\log^k n)$ vertices with probability $1-1/n^2$.
\end{corollary}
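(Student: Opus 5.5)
We prove \cref{cor:log-algo-log-solve-probes} by induction on the height $h(v)$, using \cref{lem:log-algo-log-walk-probes} as the base ingredient. The key observation is that the recursion tree of $\SolvePTL^k(v)$ is determined by the walks it performs. The top-level call runs $\WalkToLeaf(v)$ once and then issues recursive calls $\SolvePTL^k(\NC(u))$ for each consistent $u \in \pi_v$. By constraints 4 and 5 of \cref{def:log-tree-labeling-consistency}, every vertex on $\pi_v$ has height $h(v)$ (except possibly the terminal inconsistent vertex, which spawns no recursion), and each $\NC(u)$ has height $h(v)-1$. Hence the recursion depth is at most $h(v) \le k$. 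Moreover, an inconsistent vertex returns immediately after $O(1)$ probes, as does any call at height $1$ aside from its single walk.

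Let $E$ be the event that every walk $\WalkToLeaf(w)$ invoked anywhere in the recursion tree of $\SolvePTL^k(v)$ uses at most $C\log n$ probes and returns a path of length at most $C\log n$. On $E$ we count calls level by level. Level $0$ has one call. Each call at level $j$ spawns at most $C\log n$ calls at level $j+1$. So there are at most $(C\log n)^j$ calls at level $j$, each costing $O(\log n)$ probes, plus $O(1)$ per path vertex for checking consistency and reading $b\insub$ and $\NC$. Summing over $j = 0,\dots,k-1$ gives $O((C\log n)^{k-1}\cdot \log n) = O(\log^k n)$ probes, since $k$ and $C$ are constants.

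It remains to bound $\Pr[\neg E]$. This is the step needing some care, since the set of walks invoked is itself random: it depends on the $\rho$ values along earlier paths. We avoid conditioning issues by union-bounding over all possible starting vertices. By \cref{lem:log-algo-log-walk-probes}, for each fixed consistent vertex $w$, the walk $\WalkToLeaf(w)$, run with the shared randomness $\rho$, exceeds $C\log n$ probes or length with probability at most $1/n^3$. This is a statement about the fixed function $\rho$, so it holds regardless of which calls happen to invoke $w$. A union bound over all $n$ vertices shows that, with probability at least $1-1/n^2$, every vertex's walk is short simultaneously. In particular $E$ holds, which gives the corollary. If cycles are a concern, note that the second walk after line \ref{line:alg-log-walk-in-cycle} is already covered within the bound of \cref{lem:log-algo-log-walk-probes}.
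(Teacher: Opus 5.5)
Your proposal is correct and follows essentially the same route as the paper: a union bound over all vertices applied to \cref{lem:log-algo-log-walk-probes}, so that every walk is short simultaneously with probability $1-1/n^2$, followed by a count over the recursion tree. Your level-by-level bound of at most $(C\log n)^j$ calls at depth $j$ is just the unrolled form of the paper's recurrence $T(n,h^\star)\le C\log n\cdot T(n,h^\star-1)+C\log n$ with $T(n,1)\le C\log n$.
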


\begin{proof}
From \cref{lem:log-algo-log-walk-probes} and union bound over all vertices, with probability $1-1/n^2$, for any consistent vertex $v$ in the instance, \cref{alg:log-walk} generates a path $\pi_v$ of length at most $C \log n$ using at most $C \log n$ probes for some constant $C$. In the following, we assume that the event happens.

Let $T(n,h^\star)$ be the maximum possible number of probes for \cref{alg:log-solve} to generate $(d(v), b\outsub(v))$ for a consistent vertex $v$ with $h(v) = h^\star$ in an $n$-vertex instance. For the base case $T(n,1)$, from constraint 5 in \cref{def:log-tree-labeling-consistency}, none of the vertices in $\pi_v$ have a new-instance child, so there is no recursion. In conclusion, the number of probes in \cref{alg:log-solve} is bounded by $C \log n$.

For $h^\star \ge 2$, apart from the call to \cref{alg:log-walk}, \cref{alg:log-solve} recurses on $\NC(u)$ for each $u \in \pi_v$. Constraints 4 and 5 of \cref{def:log-tree-labeling-consistency} tell us that all vertices in $\pi_v$ have height $h^\star$, while all their new-instance children have height $h^\star - 1$. As a consequence, we have the following recurrence:
\begin{equation}
    T(n,h^\star)\le C\log n\cdot T(n,h^\star - 1)+C\log n,\label{eq:log-recurse}
\end{equation}
with $T(n,1)\le C\log n$. As $C$ is constant with respect to $n$, solving \cref{eq:log-recurse} gives $T(n,h^\star)=O(\log^{h^\star} n)$. Finally, since the maximum height of a vertex is $k$ in $\PathToLeafK$, the probe complexity of \cref{alg:log-solve} is $O(\log^k n)$. 
\end{proof}

Now we show that \cref{alg:log-solve} assigns labels satisfying all constraints in $\PathToLeafK$.

\begin{lemma}\label{lem:log-algo-log-walk-correctness}
    If $\pi_v = (v,w,p_1,p_2,\dots,p_\ell)$ is the output of $\WalkToLeaf(v)$ for a consistent internal vertex $v$ for some $\ell \ge 0$ and $w$ is consistent, then $\WalkToLeaf(w)$ outputs $\pi_w = (w, p_1,p_2,\dots, p_\ell)$.
\end{lemma}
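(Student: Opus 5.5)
The plan is to read off from the code of \cref{alg:log-walk} exactly how the suffix $w,p_1,\dots,p_\ell$ of $\pi_v$ was produced. Then I will show that $\WalkToLeaf(w)$ deterministically retraces that suffix, because the random directions $\rho$ are shared across queries. The whole argument is a careful trace of the while loop; no probabilistic statement is needed.

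First I would handle the trivial case $\ell=0$. Then $w$ is the last vertex of $\pi_v$, so the loop in $\WalkToLeaf(v)$ stopped at $w$, meaning $w$ is not a consistent internal vertex. Since $w$ is assumed consistent, $w$ is a leaf. Hence $\WalkToLeaf(w)$ never enters its loop and returns $(w)$, as claimed.

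For $\ell\ge 1$, consider the last time $\pi$ was reset to $(v)$ in $\WalkToLeaf(v)$, or the start of the execution if the branch in line~\ref{line:alg-log-walk-in-cycle} never fired. From that moment on, the loop never again has $u=v$ while $\pi$ is nonempty, since otherwise $\pi$ would be reset once more. The vertex appended after $v$ is $w=\child(v,a)$ for some direction $a$; this is $\rho(v)$ in the no-restart case and the other direction after a restart. Every subsequent vertex is generated by the \texttt{else} branch. This yields the following facts:
\begin{itemize}
\item $p_1=\child(w,\rho(w))$ and $p_{i+1}=\child(p_i,\rho(p_i))$ for $1\le i<\ell$;
\item $w,p_1,\dots,p_{\ell-1}$ are consistent internal vertices, none equal to $v$;
\item $p_\ell$ is a leaf or an inconsistent vertex.
\end{itemize}
In particular, $w$ is internal, since it has the child $p_1$ on the path.

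Next I would run $\WalkToLeaf(w)$. It starts at $u=w$ with $\pi$ empty, appends $w$, and moves to $\child(w,\rho(w))=p_1$. It continues along $\rho$ through $p_1,p_2,\dots$, provided the restart branch is never triggered, that is, provided no $p_j$ with $j<\ell$ equals $w$. This is the one point needing an argument, and I expect it to be the only real obstacle.

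Suppose instead that $p_j=w$ for some $j<\ell$. The sequence $w,p_1,p_2,\dots$ is obtained by deterministically iterating $x\mapsto\child(x,\rho(x))$. It would therefore be periodic with period $j+1$, and every term would lie in $\{w,p_1,\dots,p_{j}\}$. All of these vertices are consistent internal vertices, contradicting the fact that $p_\ell$ appears in the sequence and is a leaf or inconsistent. Likewise $p_\ell\neq w$, since $w$ is internal.

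Hence $\WalkToLeaf(w)$ appends exactly $w,p_1,\dots,p_{\ell-1}$, exits the loop at $p_\ell$, and appends it. It therefore returns $(w,p_1,\dots,p_\ell)$.
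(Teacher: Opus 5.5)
Your proof is correct. Its skeleton matches the paper's: dispose of the leaf case, note that after the first step both executions follow the shared directions $\rho$, and reduce everything to showing that $\WalkToLeaf(w)$ never fires the restart branch by returning to $w$.

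Where you differ is the key step. The paper rules out a return to $w$ with \cref{lem:log-pseudoforest}: a later suffix vertex pointing back to $w$ would give $w$ a second in-neighbor in $G'$ besides $v$. You instead use a functional-iteration argument. The suffix is generated by the deterministic map $x\mapsto\child(x,\rho(x))$, so a return to $w$ would trap the walk in a cycle of consistent internal vertices, and it could never reach the terminal vertex $p_\ell$.

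Your version is more elementary and more general. It needs neither the in-degree bound nor any structural property of the labeling, only that the successor rule is deterministic and that $\WalkToLeaf(v)$ terminated. Your ``last reset'' framing also handles any number of restarts. It makes explicit that $w,p_1,\dots,p_{\ell-1}\neq v$, which the paper's in-degree argument relies on without saying so: the offending predecessor must differ from $v$ for $w$ to have two distinct in-neighbors. The paper leaves this to the earlier analysis of the restarted walk. The paper's route, in turn, is shorter given \cref{lem:log-pseudoforest} and matches the structural reasoning used in the surrounding proofs.

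One small slip: if $p_j=w$, the sequence $w,p_1,\dots$ has period $j$, not $j+1$. The set $\{w,p_1,\dots,p_j\}$ you name is unaffected, since $p_j=w$.
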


\begin{proof}
    Let $a$ be the first direction taken by $\pi_v$, so $w=\child(v,a)$. After this first step, \cref{alg:log-walk} follows the random direction $\rho(u)$ at every subsequent consistent internal vertex $u$ on $\pi_v$. This is true whether the first direction is $a=\rho(v)$ or the algorithm first discovers a cycle through $v$ and then restarts with the direction $a\ne\rho(v)$.

    If $w$ is a leaf, then running \cref{alg:log-walk} on $w$ immediately returns $(w)$, which is the desired suffix. Otherwise, $w$ is a consistent internal vertex. Running \cref{alg:log-walk} from $w$ starts by following $\rho(w)$, and then follows the same random directions as the run from $v$ did after reaching $w$ due to shared randomness across queries. Thus, the two executions trace the same directed walk unless the execution from $w$ returns to $w$ and triggers the branch in line \ref{line:alg-log-walk-in-cycle}. This cannot happen along the suffix of $\pi_v$: if some later vertex on the suffix pointed back to $w$, then $w$ would have two distinct in-neighbors in $G'$, namely $v$ and that later vertex, contradicting \cref{lem:log-pseudoforest}. Hence, the execution from $w$ will never enter the branch in line \ref{line:alg-log-walk-in-cycle}, and it stops exactly at the same terminal leaf or inconsistent vertex as the execution from $v$. Therefore its output is $\pi_w=(w,p_1,p_2,\dots,p_\ell)$.
\end{proof}

\begin{lemma}\label{lem:log-algo-log-solve-correctness}
    $\SolvePTL^k$ generates labels satisfying all constraints in $\PathToLeafK$.
\end{lemma}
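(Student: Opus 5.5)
We check the constraints of $\PathToLeafK$ vertex by vertex, splitting into inconsistent vertices, consistent leaves, and consistent internal vertices. The first two cases are immediate from \cref{alg:log-solve}. An inconsistent vertex returns $(\perp,0)$. For a consistent leaf $v$, \cref{alg:log-walk} returns $\pi_v=(v)$ and $d(v)=\perp$. The loop then gives $z=b\insub(v)\xor b'(v)$, where $b'(v)$ is computed by the recursive call on $\NC(v)$. That call is exactly the query answer $b\outsub(\NC(v))$, because $\SolvePTL^k$ is deterministic given the shared randomness $\rho$. Since $b''(v)=0$ for a leaf, \cref{eq:log-lcl-def} holds.

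The main case is a consistent internal vertex $v$. Here $d(v)\in\{\L,\R\}$ is the first direction of $\pi_v$ by line~\ref{line:alg-log-solve-assign-direction}, so $\child(v,d(v))=w$ is the second vertex of $\pi_v$, which exists because the walk leaves $v$. I would write $\pi_v=(v,w,p_1,\dots,p_\ell)$ and let $z_v$ denote the xor, over consistent vertices $u\in\pi_v$, of $b\insub(u)$ together with $b\outsub(\NC(u))$ whenever $\NC(u)\ne\perp$. By construction $b\outsub(v)=z_v$, so the goal is
\[
z_v=b\insub(v)\xor b'(v)\xor b\outsub(w).
\]
The contribution of $v$ itself to $z_v$ is $b\insub(v)\xor b'(v)$, so it suffices to show that the xor over the suffix $(w,p_1,\dots,p_\ell)$ equals $b\outsub(w)$. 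This splits into two subcases.

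\emph{If $w$ is consistent}, \cref{lem:log-algo-log-walk-correctness} gives $\pi_w=(w,p_1,\dots,p_\ell)$. Then $\SolvePTL^k(w)$ xors exactly the same terms, using the same recursive answers since answers are deterministic functions of the vertex. Hence $b\outsub(w)$ equals the suffix contribution.

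\emph{If $w$ is inconsistent}, the walk stops at $w$ and $\ell=0$. The loop skips inconsistent vertices, so the suffix contributes $0$, and $b\outsub(w)=0$ by the inconsistent case.

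One point needs care: whether the recursion is well defined, i.e., terminates, so that ``the answer at $\NC(u)$'' makes sense. Heights strictly decrease along $\NC$ by condition 5 of \cref{def:log-tree-labeling-consistency}, and \cref{alg:log-walk} always terminates by \cref{lem:log-pseudoforest} together with the cycle-escape argument in the proof of \cref{lem:log-algo-log-walk-probes}. So each query's computation is a finite deterministic function of the input and $\rho$. I expect the main obstacle to be the appeal to \cref{lem:log-algo-log-walk-correctness}, specifically ensuring that the suffix property holds even when $v$'s walk took the escape branch at line~\ref{line:alg-log-walk-in-cycle}. That lemma already covers this case, so the remaining work is bookkeeping: the iteration over ``consistent vertices in $\pi$'' must exclude a possible inconsistent terminal vertex in both $\pi_v$ and $\pi_w$ identically.
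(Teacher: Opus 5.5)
Your proposal is correct and follows the paper's proof. It uses the same split into inconsistent vertices, leaves, and internal vertices, and handles the internal case by taking $w=\child(v,d(v))$: when $w$ is inconsistent you use $b\outsub(w)=0$, and when it is consistent you use the suffix property from \cref{lem:log-algo-log-walk-correctness}. Your remarks on termination of the recursion and on answers being deterministic given the shared randomness are only implicit in the paper, but they are correct.
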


\begin{proof}
    We distinguish the different types of vertices to prove the statement. Recall the definition of $b'$ and $b''$ in \cref{eq:log-b'-def,eq:log-b''-def}.
    \begin{itemize}
        \item For inconsistent vertices, the algorithm assigns $(\perp, 0)$ to it, satisfying the constraint.
        \item For a leaf $v$, $d(v)$ is assigned $\perp$, $b''(v) = 0$ and $\pi_v = (v)$. \cref{alg:log-solve} collects $b\insub(v)$ and also $b\outsub(\NC(v))$ when $\NC(v)$ exists, so $b\outsub(v) = b\insub(v) \oplus b'(v)$, which is \cref{eq:log-lcl-def} with $b''(v) = 0$.
        \item For an internal vertex $v$, define $w = \child(v,d(v))$. When $w$ is inconsistent, we know that $\pi_v= (v,w)$, and $(d(w), b\outsub(w)) = (\perp, 0)$, meaning that $b''(v) = 0$ and $b\outsub(v) = b\insub(v) \oplus b'(v)$; Otherwise, from \cref{lem:log-algo-log-walk-correctness}, $\pi_w$ is a suffix of $\pi_v$. Compare the processes of computing the output bit between $\SolvePTL^k(v)$ and $\SolvePTL^k(w)$, the only difference is that $\SolvePTL^k(v)$ additionally collects $b\insub(v)$ and $b\outsub(\NC(v))$ if $\NC(v)$ exists. As a result, $b\outsub(v) = b\outsub(w) \oplus b\insub(v) \oplus b'(v)$, which is exactly \cref{eq:log-lcl-def}.
    \end{itemize}
\end{proof}

\begin{proof}[Proof of \cref{lem:log-upper-bound}]
    The Las Vegas algorithm given by \cref{alg:log-solve} can be converted to the required Monte Carlo guarantee by \cref{rem:las-vegas-to-monte-carlo}. The algorithm succeeds with high probability within $O(\log^k n)$ probes according to \cref{cor:log-algo-log-solve-probes,lem:log-algo-log-solve-correctness}.
\end{proof}

\subsection{Lower Bound}

In this part, we prove the lower bound for $\PathToLeafK$ for LCAs via the following lemma.

\begin{lemma}\label{lem:log-hard-root-bit}
    For any positive integer $k$ and for every sufficiently large $n$, there is a set $\mathcal{S}_{k,n}$ of $\PathToLeafK$ instances of size at most $n$ with the following properties:
    \begin{enumerate}[(i)]
        \item All instances in $\mathcal{S}_{k, n}$ have the same underlying graph and binary tree labeling, in which the graph is a tree, every vertex is consistent, and there is a unique root vertex $x$ with $h(x) = k$ and $\P(x) = \perp$;
        \item For any instance $I$ in $\mathcal{S}_{k,n}$, there exists $b_I \in \{0,1\}$ such that any solution satisfying all $\PathToLeafK$ constraints for $I$ must have $b\outsub(x) = b_I$. Define $\mathcal{S}_{k,n}^B$ as the subset of $\mathcal{S}_{k,n}$ with $b_I = B$;
        \item For any set $V_0$ of vertices of size $o(\log^k n)$, there exists a bijection $f_{V_0}$ between $\mathcal{S}_{k,n}^0$ and $\mathcal{S}_{k,n}^1$, where for each $I \in \mathcal{S}_{k,n}$ and $u \in V_0$, $b\insub(u)$ is the same between $I$ and $f_{V_0}(I)$.
    \end{enumerate}
\end{lemma}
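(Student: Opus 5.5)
The plan is to build $\mathcal{S}_{k,n}$ recursively in $k$. Each instance is a complete binary tree of height-$k$ vertices, and every one of its vertices has, through $\NC$, a fresh copy of the level-$(k-1)$ construction attached. The input bits are chosen so that the xor along \emph{every} root-to-leaf path, counting both input bits and the forced root bits of the attached instances, is the same. This makes the choice of $d$ irrelevant.

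\textbf{Construction.} Fix $D=\lfloor \log_2 n/k\rfloor-1$ and $m=2^{D+1}$, and let $T$ be the complete binary tree of depth $D$, which has fewer than $m$ vertices. For $k=1$, the graph is $T$ with $h\equiv 1$, $\NC\equiv\perp$, and the natural $\P,\LC,\RC$ ports. For $k\ge 2$, take $T$ with $h\equiv k$, and for every $v\in T$ attach a disjoint copy of the level-$(k-1)$ graph, with $\NC(v)$ equal to its root and the root's $\P$ pointing to $v$. All vertices are consistent and the graph is a tree. The size satisfies $N_k\le m(1+N_{k-1})=O(m^k)\le n$ for large $n$, with constants absorbed by lowering $D$ by $O(1)$. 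This gives (i).

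An instance is specified by three pieces of data:
\begin{itemize}
\item an instance $I_v\in\mathcal{S}_{k-1,n}$ on each attached copy;
\item a labeling $y\colon T\to\{0,1\}$ with $y(\LC(v))=y(\RC(v))$ for every internal $v$;
\item the input bits $b\insub(v)=y(v)\xor y(\mathrm{child}(v))\xor b_{I_v}$ for $v\in T$, where for a leaf we drop the child term and for $k=1$ we drop $b_{I_v}$.
\end{itemize}
The child term is well defined because both children carry the same $y$.

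\textbf{Property (ii).} Set $b_I=y(x)$. For any solution, $b\outsub(\NC(v))=b_{I_v}$ by induction on $k$. Then, by induction from the leaves of $T$ upward, \eqref{eq:log-lcl-def} gives $b\outsub(v)=y(v)$ whichever child $d(v)$ selects, since both children have $b\outsub=y(\text{child})$.

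\textbf{Property (iii).} I would prove the stronger statement that there is $g_k(n)=\Theta(\log^k n)$ such that for every $V_0$ with $|V_0|<g_k$ there is an \emph{involution} $f_{V_0}$ on $\mathcal{S}_{k,n}$ which flips $b_I$ and preserves $b\insub$ on $V_0$. The key tool is a ``cut'': a set $K\subseteq T$ meeting every root-to-leaf path of $T$ exactly once. Flipping $y$ on every vertex on or above $K$ preserves the sibling condition. Indeed, if a vertex is flipped, its parent is strictly above $K$, so its sibling lies on or above $K$ and is flipped too. It also changes the required effective bit $y(v)\xor y(\mathrm{child}(v))$ exactly at the vertices of $K$, and it flips $y(x)$.

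To compensate at a vertex $v\in K$ while keeping inputs on $V_0$ fixed, there are two options. If $v\notin V_0$, flip $b\insub(v)$. Otherwise, replace $I_v$ by $f^{(k-1)}_{V_0\cap I_v}(I_v)$, which is available by induction when $|V_0\cap I_v|<g_{k-1}$. For $k=1$ only the first option exists.

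Call $v$ \emph{blocked} if neither option is available. Blocking depends only on $V_0$ and not on the instance, so the cut can be fixed once, as the topmost unblocked vertex on each path. Such a cut exists unless some root-to-leaf path of $D+1$ vertices is entirely blocked. Blocking a vertex costs at least $1$ vertex of $V_0$ when $k=1$, and at least $g_{k-1}$ vertices of $V_0$ when $k\ge 2$, because the attached copies are disjoint. So we may take $g_1=D+1$ and $g_k=(D+1)g_{k-1}$, which gives $g_k=\Theta(\log^k n)$.

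The map is an involution: it uses the same cut every time, flipping $b\insub$ is self-inverse, and the inner maps are involutions by induction. An involution that swaps the classes restricts to a bijection $\mathcal{S}^0_{k,n}\to\mathcal{S}^1_{k,n}$.

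\textbf{Main obstacle.} The delicate step is (iii): arranging that the compensation set is a cut, rather than a whole depth level, so that the adversary must block an entire root-to-leaf path and pay $\Theta(\log n)\cdot g_{k-1}$. It also requires checking that the choice of cut is independent of the instance, so that the map is a genuine involution and hence a bijection.
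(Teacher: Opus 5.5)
Your proposal is correct and is essentially the paper's proof. Your instance family coincides with the paper's $\mathcal{S}_{(q,\dots,q)}$: your $y(v)$ is the forced root bit of the subtree at $v$, and the sibling condition is the paper's shared bit $b''$. Your explicit cut is exactly what the paper's recursion on $\sum_i a_i$ produces: flip $b\insub(x)$ if $x\notin V_0$, else flip the $\NC$ copy if it is lightly probed, else recurse into both children. The paper's threshold $\prod_i a_i=(D+1)^k$ matches your $g_k$.

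One notational point to tidy: the attached copies must use the same tree $T$, i.e. the same $D$, at every level. Writing $I_v\in\mathcal{S}_{k-1,n}$ with $D$ recomputed from $k-1$ would not give that, and your size bound $N_k\le m(1+N_{k-1})$ and your recursion $g_k=(D+1)g_{k-1}$ both already assume it.
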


We first show how \cref{lem:log-hard-root-bit} implies \cref{lem:log-lower-bound}.

\begin{proof}[Proof of \cref{lem:log-lower-bound}]
    Suppose for contradiction that there is a randomized LCA that solves $\PathToLeafK$ with $o(\log^k n)$ probes and succeeds in every instance over bounded-degree trees of size at most $n$ with probability larger than $1/2$. According to Yao's minimax principle, for any distribution $\mathcal{D}$ over such instances, there is a deterministic LCA $\mathcal{A}$ with the same probe complexity that succeeds with probability larger than $1/2$ over $\mathcal{D}$.

    Now set $\mathcal{D}$ to be the uniform distribution over $\mathcal{S}_{k,n}$. For any instance $I \in \mathcal{S}_{k,n}$, define $V_0$ as the set of vertices probed by $\mathcal{A}(x)$, we have $|V_0| = o(\log^k n)$. According to condition (iii) in \cref{lem:log-hard-root-bit}, there exists $I'$ with $b_I \ne b_I'$, but all inputs in $V_0$ are the same across two instances. As $\mathcal{A}$ is a deterministic algorithm, the transcript of $\mathcal{A}(x)$ on $I$ and $I'$ will be exactly the same, so $\mathcal{A}$ will label $x$ the same in both instances, and one of the two solutions must be wrong according to condition (ii). Since they have the same probability weight in $\mathcal{D}$, $\mathcal{A}$ cannot succeed with probability larger than $1/2$, a contradiction.
\end{proof}

To construct the set of instances, we first provide the construction for the underlying graph and binary tree labeling. The construction consists of $k$ layers of perfect binary trees.

\begin{definition}
    For a positive integer sequence $(a_1,a_2,\dots,a_\ell)$, define the graph $G_{(a_1,a_2,\dots,a_\ell)}$ along with its binary tree labeling recursively in the following way: \begin{itemize}
        \item For the base case, $G_{(1)}$ consists of only one root vertex $x$ with $h(x) = 1$.
        \item $G_{(a_1,a_2,\dots,a_\ell)}$ consists of a vertex $x$ with $h(x) = \ell$ and the following ports for $x$: \begin{itemize}
            \item If $\ell \ne 1$, attach the graph and labeling $G_{(a_2,\dots,a_\ell)}$ to $\NC(x)$, otherwise $\NC(x) = \perp$;
            \item If $a_1 \ne 1$, attach two independent copies of $G_{(a_1-1,a_2,\dots,a_\ell)}$ to $\LC(x)$ and $\RC(x)$ respectively, otherwise $\LC(x) = \RC(x) = \perp$.
        \end{itemize}
    \end{itemize}
    For ``attaching a graph to a port'', we mean creating a copy of the graph and making the port point to the root of the graph. Finally, we set up $\P$ ports to make every vertex consistent. It is easy to check that there is a unique way to do so.
\end{definition}

The following observation can be proved by a standard induction argument.

\begin{observation}\label{obs:log-lower-bound-construction-property}
    $G_{(a_1,\dots,a_\ell)}$ satisfies the following property: \begin{itemize}
        \item $G_{(a_1,\dots,a_\ell)}$ is a tree with $\Theta(\prod_{i=1}^\ell 2^{a_i})$ vertices and maximum degree 4;
        \item With respect to the binary tree labeling, all vertices in the graph are consistent;
        \item The maximum height in the binary tree labeling is $\ell$.
    \end{itemize}
\end{observation}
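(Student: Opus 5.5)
The proof is a structural induction on the length $\ell$ of the sequence $(a_1,\dots,a_\ell)$, and within each $\ell$ an inner induction on $a_1$. This mirrors the recursive definition of $G_{(a_1,\dots,a_\ell)}$. The statement to carry through the induction has four parts: the graph is a tree, its root $x$ has $\P(x)=\perp$, every vertex is consistent, and the vertex count $N(a_1,\dots,a_\ell)$ satisfies $c^{\ell}\prod_i 2^{a_i}\le N\le C^{\ell}\prod_i 2^{a_i}$ for absolute constants $c,C$. Since $\ell$ is a fixed constant (it will be $k$), these bounds are $\Theta(\prod_i 2^{a_i})$.

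\textbf{Size.} I would write the recurrences directly from the definition:
\[
N(a_1,\dots,a_\ell)=1+[\ell\ne 1]\,N(a_2,\dots,a_\ell)+[a_1\ne 1]\,2N(a_1-1,a_2,\dots,a_\ell),
\]
with $N(1)=1$. This gives a two-dimensional induction.

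\begin{itemize}
\item For $\ell=1$ we get $N(a_1)=2^{a_1}-1$, the size of a perfect binary tree.
\item For general $\ell$, write $M=N(a_2,\dots,a_\ell)$ and unroll in $a_1$. We obtain $N(a_1,\dots,a_\ell)=(2^{a_1}-1)(1+M)$, since each of the $2^{a_1}-1$ vertices of the top-level perfect binary tree carries one copy of $G_{(a_2,\dots,a_\ell)}$.
\end{itemize}

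The exact formula $N=(2^{a_1}-1)(1+M)$ then gives the $\Theta$ bound immediately, because $2^{a}/2\le 2^{a}-1\le 2^{a}$ and $M\le 1+M\le 2M$.

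\textbf{Tree structure and degree.} Each construction step takes a new vertex $x$ and joins it by single edges to the roots of vertex-disjoint copies of smaller trees. The result is therefore connected and acyclic. The edges at $x$ are at most one parent edge plus $\LC$, $\RC$ and $\NC$, so the maximum degree is at most 4.

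\textbf{Consistency.} I would first define $\P(x)=\perp$ at the overall root and set $\P(w)=v$ whenever $w$ is attached to a port of $v$. This is well defined because each non-root vertex is attached exactly once. Then I check the conditions of \cref{def:log-tree-labeling-consistency} in turn:
\begin{enumerate}
\item The ports of $v$ go to distinct neighbors, so condition 1 holds.
\item Condition 2 holds by the choice of $\P$.
\item $\LC$ and $\RC$ are either both attached or both $\perp$, so condition 3 holds.
\item The left and right copies $G_{(a_1-1,a_2,\dots,a_\ell)}$ have root height $\ell$, equal to $h(x)$, which gives condition 4.
\item The $\NC$ copy $G_{(a_2,\dots,a_\ell)}$ has root height $\ell-1$, and $\NC=\perp$ exactly when $\ell=1$, which gives condition 5.
\end{enumerate}
The uniqueness of the $\P$ assignment also follows: condition 2 forces $\P(w)$ at every attached vertex, and the root must have $\P=\perp$, since otherwise condition 1 or the tree structure would fail.

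\textbf{Maximum height.} The root has height $\ell$, and the heights of all other vertices lie in $[\ell]$ by induction.

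There is no real obstacle in this argument. The only care needed is in the size bound: keep the exact product formula rather than crude bounds, because the constant factors compound over $\ell$ levels. This is harmless here since $\ell$ is constant.
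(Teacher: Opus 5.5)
Your proposal is correct and is essentially the ``standard induction argument'' the paper invokes without giving details: you induct along the recursive definition, use the exact count $(2^{a_1}-1)(1+N(a_2,\dots,a_\ell))$ for the size, and check the five consistency conditions locally at each vertex. You also correctly note that the constants hidden in $\Theta(\prod_i 2^{a_i})$ depend on $\ell$, which the paper glosses over; for example, $G_{(1,\dots,1)}$ is an $\NC$-path with only $\ell$ vertices rather than roughly $2^\ell$, but this is harmless since $\ell=k$ is constant in the application.
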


Now we construct sets $\mathcal{S}^0_{(a_1,a_2,\dots,a_\ell)}$ and $\mathcal{S}^1_{(a_1,a_2,\dots,a_\ell)}$ of $\PathToLeaf^\ell$ instances for each sequence $\{a_1,\dots,a_\ell\}$, in which all instances have $G_{\{a_1,\dots,a_\ell\}}$ as the underlying graph and binary tree labeling. The only difference between these instances lies in $b\insub$ inputs for each vertex, and we construct the inputs also recursively, in the following way: \begin{itemize}
    \item For the base case, $\mathcal{S}_{(1)}^B$ contains the only instance with $b\insub(x) = B$ for $B \in \{0,1\}$;
    \item For the general case, $\mathcal{S}_{(a_1,a_2,\dots,a_\ell)}^B$ contains all instances that can be generated by the following procedure: \begin{itemize}
        \item If $\ell \ne 1$, $\NC(x)$ has a copy of $G_{(a_2,\dots,a_\ell)}$ attached. Arbitrarily pick $b' \in \{0,1\}$ and a $b\insub$ assignment in $\mathcal{S}_{(a_2,\dots,a_\ell)}^{b'}$ to assign $b\insub$ values for the subtree rooted at $\NC(x)$. If $\ell = 1$, set $b'$ to be zero.
        \item If $a_1 \ne 1$, $\LC(x)$ and $\RC(x)$ both have a copy of $G_{(a_1-1,a_2,\dots,a_\ell)}$ attached. Arbitrarily pick $b'' \in \{0,1\}$ and two $b\insub$ assignments in $\mathcal{S}_{(a_1-1,a_2,\dots,a_\ell)}^{b''}$ to assign $b\insub$ values for the subtrees rooted at $\LC(x)$ and $\RC(x)$, respectively. Notice that two copies use the same bit $b''$ but possibly different $b\insub$ assignments. If $a_1 = 1$, set $b''$ to be zero.
        \item Finally, assign $b\insub(x)$ with $B \oplus b' \oplus b''$.
    \end{itemize}
\end{itemize}

Finally, $\mathcal{S}_{\{a_1,\dots,a_\ell\}}$ is the union of $\mathcal{S}^0_{(a_1,a_2,\dots,a_\ell)}$ and $\mathcal{S}^1_{(a_1,a_2,\dots,a_\ell)}$.

\begin{lemma}\label{lem:log-hard-root-bit-by-construction}
    For any positive integer sequence $(a_1,a_2,\dots,a_\ell)$, the sets $\mathcal{S}^0_{(a_1,a_2,\dots,a_\ell)}$ and $\mathcal{S}^1_{(a_1,a_2,\dots,a_\ell)}$ satisfy the following:
    \begin{enumerate}[(i)]
        \item All instances in $\mathcal{S}_{(a_1,a_2,\dots,a_\ell)}$ have the same underlying graph and binary tree labeling, in which the graph is a tree, every vertex is consistent, and there is a unique root vertex $x$ with $h(x) = \ell$ and $\P(x) = \perp$;
        \item For $B\in \{0,1\}$ and any instance $I$ in $\mathcal{S}_{(a_1,a_2,\dots,a_\ell)}^B$, any solution satisfying all $\PathToLeaf^\ell$ constraints for $I$ satisfies $b\outsub(x) = B$;
        \item For any set $V_0$ of vertices of size less than $\prod_{i=1}^\ell a_i$, there exists a bijection $f_{V_0}$ between $\mathcal{S}_{(a_1,a_2,\dots,a_\ell)}^0$ and $\mathcal{S}_{(a_1,a_2,\dots,a_\ell)}^1$, where for each $I \in \mathcal{S}_{(a_1,a_2,\dots,a_\ell)}$ and $u \in V_0$, $b\insub(u)$ is the same between $I$ and $f_{V_0}(I)$.
    \end{enumerate}
\end{lemma}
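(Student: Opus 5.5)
We prove all three items by induction on the structure of the sequence, processing $(a_1,\dots,a_\ell)$ in increasing order of $\ell$ and, for fixed $\ell$, in increasing order of $a_1$. The recursive definitions of $G_{(a_1,\dots,a_\ell)}$ and of $\mathcal{S}^B_{(a_1,\dots,a_\ell)}$ refer only to $(a_2,\dots,a_\ell)$ and $(a_1-1,a_2,\dots,a_\ell)$, so this order is well founded. The base case $(1)$ is immediate: the single vertex is a consistent leaf with no $\NC$ port. Hence $b\outsub(x)=b\insub(x)=B$. For (iii), $V_0$ must be empty because $|V_0|<1$, and the map swapping the two instances is the bijection.

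\emph{Item (i)} holds by \cref{obs:log-lower-bound-construction-property}, together with the fact that every instance in $\mathcal{S}_{(a_1,\dots,a_\ell)}$ shares the same graph and labeling by construction.

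\emph{Item (ii).} Fix any valid solution. The subtree at $\NC(x)$ is a union of connected components after removing $x$. The constraints of the vertices inside it involve only that subtree: each vertex's $b'$ and $b''$ refer only to its own children, all of which lie in the subtree. So the restriction of the solution is a valid solution of the $\PathToLeaf^{\ell-1}$ instance drawn from $\mathcal{S}^{b'}_{(a_2,\dots,a_\ell)}$. By induction, $b\outsub(\NC(x))=b'$. The same argument applied to both $\LC(x)$ and $\RC(x)$ gives $b\outsub(\LC(x))=b\outsub(\RC(x))=b''$. Therefore $b\outsub(\child(x,d(x)))=b''$ whatever $d(x)$ is, and \cref{eq:log-lcl-def} gives $b\outsub(x)=(B\oplus b'\oplus b'')\oplus b'\oplus b''=B$. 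The boundary cases $\ell=1$ or $a_1=1$ use $b'=0$ or $b''=0$ in the same way.

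\emph{Item (iii)} is the main step. Let $P=\prod_i a_i$ and $|V_0|<P$. Split $V_0\setminus\{x\}$ into the parts $V_N, V_L, V_R$ lying in the three subtrees. Since $|V_N|+|V_L|+|V_R|<P=a_1\cdot\prod_{i\ge2}a_i$, at least one of two cases holds.
\begin{itemize}
\item[(a)] $|V_N|<\prod_{i\ge2}a_i$. Then $(a_2,\dots,a_\ell)$ admits a bijection $g$ flipping $b'$ while fixing $b\insub$ on $V_N$. Define $f_{V_0}$ to keep the $L/R$ parts unchanged, apply $g$ to the $\NC$ subtree, and leave $b\insub(x)$ unchanged. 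This flips $b'$ and keeps $b''$, so $B$ flips while $B\oplus b'\oplus b''$ is preserved.
\item[(b)] Otherwise $|V_L|+|V_R|<(a_1-1)\prod_{i\ge2}a_i$. Then both $|V_L|$ and $|V_R|$ are below the threshold for $(a_1-1,a_2,\dots)$. Take the induction bijections $g_L$ and $g_R$ for $V_L$ and $V_R$, apply both simultaneously, and keep the $\NC$ subtree and $b\insub(x)$ fixed. This flips $b''$ consistently in both copies, so $B$ flips.
\end{itemize}
Both maps are bijections from $\mathcal{S}^0$ onto $\mathcal{S}^1$, since the inverse is built the same way. They also preserve $b\insub$ on $V_0$, including at $x$.

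The delicate point is the counting in case (b). If $\ell=1$ the $\NC$ case never applies, and $|V_0|<a_1$ forces $|V_L|+|V_R|<a_1-1$ whenever $x\in V_0$. In general the dichotomy works because failure of (a) means $|V_N|\ge\prod_{i\ge2}a_i$. Then $|V_L|+|V_R|\le|V_0|-|V_N|<P-\prod_{i\ge2}a_i$, which is what case (b) needs. When $a_1=1$, case (b) would require fewer than $0$ vertices, so case (a) always holds; the bound is tight there but still suffices.
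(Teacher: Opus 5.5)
Your proof follows the paper's route: induction, item (ii) by restricting the solution to the three subtrees, and an $\NC$-versus-$\LC/\RC$ dichotomy for (iii). For $\ell\ge 2$ your counting is actually fine without assuming $x\in V_0$, since $|V_L|+|V_R|\le|V_0|-|V_N|<P-\prod_{i\ge2}a_i$.

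\textbf{The gap: the case $\ell=1$ with $x\notin V_0$ is never treated.} Your remark covers $\ell=1$ only ``whenever $x\in V_0$'', and both of your maps keep $b\insub(x)$ fixed. For $\ell=1$ there is no $\NC$ subtree, so case (a) is unavailable: formally $|V_N|=0<1$, but there is no $b'$ to flip. Without $x$ in $V_0$ you only get $|V_L|+|V_R|\le a_1-1$, which allows $|V_L|=a_1-1$.

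This is a real obstruction, not just slack in the count. Take $V_0$ to be a root-to-leaf path of the left copy of $G_{(a_1-1)}$; it has exactly $a_1-1<P$ vertices. Unwinding the construction, the bit $b''$ equals the xor of $b\insub$ over that path. So no bijection on the left subtree that fixes the inputs on $V_L$ can flip $b''$, and your case (b) cannot be carried out.

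\textbf{The fix} is the paper's first step in (iii). If $x\notin V_0$, let $f_{V_0}$ flip only $b\insub(x)$. Since $b\insub(x)=B\oplus b'\oplus b''$ and $b',b''$ are determined by the untouched subtrees, this flips $B$. It is therefore a bijection from $\mathcal{S}^0_{(a_1,\dots,a_\ell)}$ to $\mathcal{S}^1_{(a_1,\dots,a_\ell)}$ that preserves all inputs on $V_0$. With that case handled, your remaining argument covers everything else; the extra $-1$ from $x\in V_0$ in the paper's count is only genuinely needed when $\ell=1$.
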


\begin{proof}
    We prove the lemma by induction on $\sum_{i=1}^\ell a_i$. In the base case, the sequence is $(1)$, and condition (i) is trivial. The graph consists only of the root $x$, and $\mathcal{S}_{(1)}^B$ fixes $b\insub(x)=B$. Hence every legal solution has $b\outsub(x)=B$ by \cref{eq:log-lcl-def}, proving condition (ii). Finally, in condition (iii), the set $V_0$ is required to have size less than one, so $V_0=\emptyset$, and the bijection between $\mathcal{S}_{(1)}^0$ and $\mathcal{S}_{(1)}^1$ is trivial.

    Now consider a sequence $A=\{a_1,\dots,a_\ell\}$ with $\sum_{i=1}^\ell a_i>1$, and let $x$ be the root of $G_A$. Condition (i) follows from the construction of $G_A$ and \cref{obs:log-lower-bound-construction-property}.

    To prove condition (ii), fix $B\in\{0,1\}$ and an instance $I\in\mathcal{S}_A^B$. Let $b'$ and $b''$ be the auxiliary bits used to assign $b\insub$ inputs for $I$. If $\NC(x)\ne\perp$, the $b\insub$ input is assigned according to some instance in $\mathcal{S}_{(a_2,\dots,a_\ell)}^{b'}$, so by the induction hypothesis applied to the subtree rooted at $\NC(x)$, every solution of $I$ has $b\outsub(\NC(x))=b'$. This means that the $b'(x)$ value defined in \cref{eq:log-b'-def} always equals the auxiliary bit $b'$. When $\ell = 1$, we also have $b'(x) = 0 = b'$.
    
    Similarly, if $x$ is internal and $w=\child(x,d(x))$, then $w$ is the root of either the left or right copy of $G_{(a_1-1,a_2,\dots,a_\ell)}$, and both copies have $b\insub$ input generated from some instance in $\mathcal{S}_{(a_1-1,a_2,\dots,a_\ell)}^{b''}$. By induction, every solution of $I$ has $b\outsub(w)=b''$, so the $b''(x)$ value in \cref{eq:log-b''-def} always equals to $b''$, too. When $a_1 = 1$, we also have $b''(x) = 0 = b''$.

    In sum, we have $b'(x) = b'$ and $b''(x) = b''$. Since $b\insub(x) = B \oplus b' \oplus b''$, \cref{eq:log-lcl-def} gives $b\outsub(x) = B$, implying (ii).

    For condition (iii), fix a set $V_0$ of size less than $\prod_{i=1}^\ell a_i$. If $x\notin V_0$, the bijection $f_{V_0}$ can simply flip $b\insub(x)$, so we suppose $x\in V_0$. Let $V_{\NC}$ be the intersection of $V_0$ with the subtree rooted at $\NC(x)$ when it exists, and $V_{\NC} = \varnothing$ if $\NC(x) = \perp$. Similarly define $V_{\LC}$ and $V_{\RC}$. 
    
    If $\ell>1$ and $|V_{\NC}|<\prod_{i=2}^\ell a_i$, applying the induction hypothesis to the subtree rooted at $\NC(x)$, there is a bijection $f_{\NC}$ between $\mathcal{S}_{\{a_2,a_3,\dots,a_\ell\}}^0$ and $\mathcal{S}_{\{a_2,a_3,\dots,a_\ell\}}^1$ that preserves the input for vertices in $V_{\NC}$. Now we define the bijection $f_{V_0}$ in the following way: For each instance $I$, we apply the bijection $f_{\NC}$ to the subtree rooted at $\NC(x)$, and keep other inputs unchanged. From the induction hypothesis, $f_{V_0}$ is a map that preserves inputs among $V_0$. Finally, following \cref{eq:log-lcl-def}, applying $f_{V_0}$ changes $b'(x)$, while $b''(x)$ and $b\insub(x)$ are unchanged, so $b\outsub(x)$ is flipped, showing that $f_{V_0}$ is a bijection between $\mathcal{S}_A^0$ and $\mathcal{S}_A^1$.

    Consider the remaining case, where either $\ell=1$, or $|V_\NC| \ge \prod_{i=2}^\ell a_i$. Since $x\in V_0$ and $|V_0|<a_1\prod_{i=2}^\ell a_i$, we have 
    \[
        |V_{\LC} \cup V_{\RC}| = |V_0 \backslash (\{x\} \cup V_{\NC})| = |V_0| - 1 - |V_{\NC}|  < \prod_{i=1}^\ell a_i - 1 - \prod_{i=2}^\ell a_i  < (a_1 - 1) \prod_{i=2}^\ell a_i.
    \]
    Apply the induction hypothesis to the subtrees rooted at $\LC(x)$ and $\RC(x)$, we know there are bijections $f_\LC, f_\RC$ between $\mathcal{S}_{(a_1-1,a_2,\dots,a_\ell)}^0$ and $\mathcal{S}_{(a_1-1,a_2,\dots,a_\ell)}^1$ which preserve the input among $V_{\LC}$ and $V_{\RC}$, respectively. For the bijection $f_{V_0}$ between $\mathcal{S}_A^0$ and $\mathcal{S}_A^1$, we apply $f_{\LC}$ to the subtree rooted at $\LC(x)$, apply $f_{\RC}$ to the subtree rooted at $\RC(x)$, and keep other inputs unchanged. From the induction hypothesis, $f_{V_0}$ preserves inputs among $V_0$. Finally, from \cref{eq:log-lcl-def}, applying $f_{V_0}$ changes $b''(x)$ regardless of the choice of $d(x)$, while $b'(x)$ and $b\insub(x)$ remain unchanged, so $b\outsub(x)$ is flipped, showing that $f_{V_0}$ is a bijection between $\mathcal{S}_A^0$ and $\mathcal{S}_A^1$.
\end{proof}

\begin{proof}[Proof of \cref{lem:log-hard-root-bit}]
    Recall that $k$ is a constant. Let $q = \Theta(\log n)$ and set $\mathcal{S}_{k,n}$ to be $\mathcal{S}_{\{q,q,\dots,q\}}$ where the sequence has length $k$. According to \cref{obs:log-lower-bound-construction-property}, the graph has size $\Theta(2^{qk}) = n^{\Theta(1)}$, and the size is less than $n$ when the constant on $q$ is small enough. We have $q^k=\Theta(\log^k n)$ and all three conditions in \cref{lem:log-hard-root-bit} follow from \cref{lem:log-hard-root-bit-by-construction}.
\end{proof}

\DeclareRobustCommand{\SolveHC}{\ifmmode\mathop{\text{\normalfont\textsc{SolveHC}}}^T\nolimits\else\textnormal{\textsc{SolveHC}}^T\fi}
\DeclareRobustCommand{\Prepare}{\ifmmode\mathop{\text{\normalfont\textsc{Prepare}}}^T\nolimits\else\textnormal{\textsc{Prepare}}^T\fi}
\DeclareRobustCommand{\Collect}{\ifmmode\mathop{\text{\normalfont\textsc{Collect}}}^T\nolimits\else\textnormal{\textsc{Collect}}^T\fi}
\DeclareRobustCommand{\Traverse}{\ifmmode\mathop{\text{\normalfont\textsc{Traverse}}}^T\nolimits\else\textnormal{\textsc{Traverse}}^T\fi}
\DeclareRobustCommand{\Try}
{\ifmmode\mathop{\text{\normalfont\textsc{Try}}}^T\nolimits\else\textnormal{\textsc{Try}}^T\fi}

\section{Construction in Polynomial Complexity Regime}
\label{sec:poly}

In this section, we provide LCL constructions whose randomized $\Vol$ and LCA complexity is $\tilde \Theta(n^x)$ for any $x \in \mathbb{Q} \cap (0,1]$.

\subsection{Preliminaries}

In this section, our construction is parameterized by an \emph{ordered binary tree} $T$. For an \emph{ordered} binary tree, we mean that the tree is rooted, and we distinguish the left and right children of a vertex. In the following, we provide basic notations around ordered binary trees.

Define the root of an ordered binary tree $T$ as $r_T$ and the subtree of a vertex $v \in T$ as $T_v$. For each vertex $v$, the left child of a vertex $v \in T$ is denoted as $\LC_T(v)$, while the right child is denoted as $\RC_T(v)$. When the vertex does not have a left child, denote $\LC_T(v) = \perp$, and similarly for the right child. The parent of the vertex $v$ is denoted as $\P_T(v)$, while $\P_T(r_T)$ is defined as $\perp$. For notation simplicity, we define $\LC_T(\perp) = \RC_T(\perp) = \P_T(\perp) = \perp$. 

\begin{remark}\label{rem:poly-never-omit-subscript}
    To avoid confusion between $\LC_T, \RC_T$ and $\P_T$ notation in an ordered binary tree and $\LC, \RC, \P$ ports in the LCL (See \cref{def:log-binary-tree-labeling} and \cref{def:poly-t-labeling}), we will \emph{never} omit subscript $T$ on any notation related to ordered binary trees, such as $\LC_T, \RC_T, \P_T$ and $r_T$.
\end{remark}

Define $T_L$ and $T_R$ as the left and right subtrees of $r_T$ in $T$. Denote the unique ordered binary tree with no vertex as $\perp$, and the unique ordered binary tree with only one vertex as $\bullet$. We define $r_\perp = \perp$. 

Now we introduce an important definition in this section, the \emph{value} of an ordered binary tree.

\begin{definition}[Value]\label{def:poly-value-binary-tree}
The \emph{value} of an ordered binary tree $T$ is defined recursively as follow: \begin{equation}
    \val(T) = \begin{cases}
        0, & T = \perp \\
        1, & T = \bullet, \\
        \frac{\val(T_R)}{1 + \val(T_R) - \val(T_L)}, & \text{Otherwise}.
    \end{cases} \label{eq:value-recurrence}
\end{equation}
\end{definition}

We say an ordered binary tree $T$ is \emph{good}, if for every vertex $v \in T$ that is not a leaf, its left subtree has a value strictly smaller than that of its right subtree. In the following, we will only focus on good ordered binary trees. The following observation can be easily proved via induction.

\begin{observation}\label{obs:val-rational-in-0-1}
    $\val(T)$ is always a rational value in $(0,1]$ for any nonempty good ordered binary tree $T$.
\end{observation}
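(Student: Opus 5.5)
We argue by strong induction on the number of vertices of the nonempty good ordered binary tree $T$, proving the stronger joint statement that $\val(T)$ is rational and $0<\val(T)\le 1$. The base case $T=\bullet$ gives $\val(T)=1$, which lies in $\mathbb{Q}\cap(0,1]$.

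For the inductive step, let $T$ have at least two vertices, so its root is not a leaf. Since $T$ is good, $\val(T_L)<\val(T_R)$ holds at the root, and every subtree of a good tree is again good. We then split into cases according to which subtrees of the root are empty.
\begin{itemize}
\item If $T_R=\perp$, then $\val(T_R)=0$. Goodness would then force $\val(T_L)<0$. Since $T_L$ is either $\perp$ (value $0$) or, by induction, has positive value, this case cannot occur.
\item Hence $T_R$ is a nonempty good tree. By induction, $\val(T_R)=a\in\mathbb{Q}\cap(0,1]$.
\item If $T_L=\perp$, set $c=0$. Otherwise, by induction, $c=\val(T_L)\in\mathbb{Q}\cap(0,1]$. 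In both cases $0\le c<a$.
\end{itemize}

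From \eqref{eq:value-recurrence}, $\val(T)=a/(1+a-c)$. Because $a-c>0$, the denominator satisfies $1+a-c>1>0$, so the expression is well defined and rational. Its numerator is positive, so $\val(T)>0$. Finally, $a\le 1+a-c$ is equivalent to $c\le 1$, which holds, so $\val(T)\le 1$. In fact the strict inequality $a-c>0$ gives $\val(T)<a\le 1$.

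The only subtle point is excluding the case where the right subtree is empty. This relies on the positivity of values of nonempty subtrees, which is exactly why positivity is carried as part of the induction hypothesis. Everything else is routine arithmetic.
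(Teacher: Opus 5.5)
Your proof is correct and matches the paper's approach: the paper only remarks that the observation ``can be easily proved via induction,'' and your strong induction on the number of vertices supplies those details. In particular, carrying positivity in the induction hypothesis rules out $T_R=\perp$, and then $0\le \val(T_L)<\val(T_R)\le 1$ places $\val(T_R)/(1+\val(T_R)-\val(T_L))$ in $\mathbb{Q}\cap(0,1]$.
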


The main theorem we will prove in this section is the following. This theorem, along with \cref{lem:poly-rational-to-ordered-tree}, concludes \cref{thm:poly-main}. 

\begin{theorem}\label{thm:poly-HTC-complexity}
    For every good ordered binary tree $T$ that is not $\perp$, there exists an LCL whose $\Vol$ and LCA complexity is $\tilde\Theta \left(n^{\val(T)}\right)$.
\end{theorem}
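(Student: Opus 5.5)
We prove \cref{thm:poly-HTC-complexity} by structural induction on $T$, defining for each good ordered binary tree $T$ an LCL $\HC^T$. The case $T=\bullet$ is a directed path whose head must output the xor of all $b\insub$ values along it, so it has complexity $\Theta(n)=n^{\val(\bullet)}$. For $T$ with subtrees $T_L,T_R$, a vertex of $\HC^T$ either lies on a \emph{top path} or belongs to an attached instance. Each top-path vertex carries two extra ports. The first is a ``bottom'' port to the root of an $\HC^{T_R}$ instance; this is the one the algorithm may escape into using color $\X$, subject to constraints 1--5 of the overview with $\D$ propagating into the whole attached instance. The second is a ``stacked'' port to the root of an $\HC^{T_L}$ instance, whose root output bit is xored into the twisted input bit, as in $\PathToLeafK$. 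If $T_L=\perp$ the stacked port is absent; if $T_R=\perp$ the bottom port is absent and $\X$ is forbidden. We encode all of this with a colored tree labeling in the spirit of \cref{def:log-binary-tree-labeling,def:log-tree-labeling-consistency}: each vertex is labeled by a node of $T$ (a constant set), and the consistency checks ensure that consistent vertices with their designated out-ports form a pseudo-forest of the intended shape. Inconsistent vertices output $\perp$. Local checkability then follows as in \cref{lem:log-lcl}.

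\textbf{Lower bound.} Set $p=\val(T_L)$, $q=\val(T_R)$ and $N$ the total size, and take the adversarial shape from the overview. The top path has length $\ell=N^{(q-p)/(1+q-p)}$, and every top vertex receives a lower-bound instance of size $m=N/\ell$ for both $T_L$ and $T_R$. Then $\ell m^{p}=m^{q}=N^{q/(1+q-p)}=N^{\val(T)}$, using goodness $p<q$. As in \cref{lem:log-hard-root-bit-by-construction}, we build instance families $\mathcal{S}^0,\mathcal{S}^1$ so that the root output bit is forced to equal $B$ for \emph{every} legal choice of colors. The inductive invariant is a bijection preserving the inputs on any set $V_0$ with $|V_0|<$ the complexity. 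There are two ways to flip the root bit. One is to flip one stacked instance's root bit, which is possible unless $V_0$ hits every one of the $\ell$ stacked instances with at least $m^p$ vertices. The other, needed when the solution may escape with $\X$ at some $v_i$, is to flip the bottom instance of $v_i$; this requires showing the bit is determined whichever $v_i$ is chosen. This suggests defining the families so that every escape route and the full top path yield the same forced bit $B$. Concretely, the top path's xor and each bottom instance's bit, combined with the twisted inputs, all equal $B$. A flip must then change all bottom instances simultaneously together with the top-path contribution. That is impossible unless $V_0$ saturates all $\ell$ bottom instances of size $m^q$ or all $\ell$ stacked ones, so the size of $V_0$ would need to be $\gtrsim\min(\ell m^p,m^q)$. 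Yao's principle, exactly as in the proof of \cref{lem:log-lower-bound}, then gives the LCA lower bound, and \cref{obs:vol-weaker-than-lca} transfers it to $\Vol$.

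\textbf{Upper bound.} This is the main obstacle. We design a recursive $\Vol$ algorithm $\SolveHC(v,M)$ that takes a maximum-size parameter $M$ and either solves the instance rooted at $v$ with $\tilde O(M^{\val(T)})$ probes or declares failure within that budget. The top-path vertex $v_i$ walks $L=M^{(q-p)/(1+q-p)}$ steps. If it reaches the tail, it colors the path $\U$ and recursively solves the $L$ stacked instances with parameter $M$, using Jensen/concavity of $s\mapsto s^p$ to bound the total. Otherwise it uses local-randomness samples at rate $\tilde\Theta(1/L)$ to find, within the window, a vertex whose bottom instance succeeds with parameter $M/L$; a Chernoff bound shows that half the window qualifies. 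The difficulty is consistency: every vertex in the window must select the \emph{same} escape vertex, and the unknown true $n$ forces guessing $M$ by doubling. We would resolve this by fixing an ``ideal size'' parameter that depends only on $n$ and the level in $T$, so that all queries make their decisions with the same parameter. Doubling is used only to bound the probes, and the success test for a candidate is made deterministic given shared randomness. Correctness then reduces, as in \cref{lem:log-algo-log-walk-correctness}, to showing that the computation from $v_{i+1}$ is a suffix of that from $v_i$.
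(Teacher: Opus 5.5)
Your lower bound follows the paper's construction and is fine. The stacked instance at $u_i$ carries the bit $A_i$, and the bottom instance at $u_i$ carries the suffix xor $A_{i+1}\oplus\cdots\oplus A_\ell$, so every escape point and the full path force $\bigoplus_i A_i$. One correction to how you describe the flip: it replaces the least-probed stacked instance $L_{i^\star}$ and \emph{only} the bottom instances $R_1,\dots,R_{i^\star-1}$. Flipping all bottom instances would break the invariant for escapes after $i^\star$. Also, a single bottom instance with about $m^q$ probed vertices already blocks the flip, not all $\ell$ of them; that is where $\min(\ell m^p,m^q)$ comes from.

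The upper bound, however, has a genuine gap, and it starts with your invariant, which does not close the recursion. If $\SolveHC(v,M)$ costs $\tilde O(M^{\val(T)})$, then solving $L$ stacked instances with parameter $M$ costs $LM^{p}\gg M^{\alpha}$. Jensen would instead need a per-instance cost of $\tilde O(s_j^p)$ in the \emph{unknown} actual sizes $s_j$. That is exactly what fixing an ideal size $N'$ gives up: an instance of actual size $s$ handled with ideal size $N'$ costs about $\max(N',s)\,(N')^{p-1}$, not $s^p$. The paper therefore uses the two-parameter invariant of \cref{def:poly-strong-algo}:
\begin{itemize}
\item the output is stable in the budget $N_2$;
\item every vertex costs $O(N_2N_1^{\alpha-1}\log^{d_T}n)$, linear in $N_2$;
\item root-type vertices satisfy the sharper bound $O(\max(N_1,n^\star)N_1^{\alpha-1}\log^{d_T}n)$.
\end{itemize}
Stacked instances are then summed \emph{linearly} under a hard probe budget in $\Collect$, using $(p-1)(1-\beta)=\alpha-1$.

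Your consistency argument also fails as stated. Every stacked instance hanging off a top vertex must actually be solved, because a root-type vertex cannot be $\D$ and the stacked root feeding its bit cannot be $\D$ either. This holds even when that instance is far larger than its ideal size. So inside it, the escape-candidate test must use a threshold that grows with the current guess $N$ (the paper uses $\lceil NN_1^{-\beta}\rceil$); it cannot be guess-independent as you propose. Consequently $v_i$ and $v_{i+1}$ may succeed at different guesses and pick different escape vertices, and the computation from $v_{i+1}$ is not a suffix of that from $v_i$.

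The missing idea is the check in $\Try$. After choosing the escape vertex $w$ at guess $N$, the algorithm queries $\SolveHC(w,N_1,N/2)$ and defers to $w$'s own label if that label was already decided at a smaller guess. Consistency is then proved by induction on the first successful guess $p_u^\star$, using the monotonicity $S_{u,p}'\subseteq S_{u,p+1}'$ of \cref{obs:poly-upper-bound-sample-set-monotone}.

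Finally, your LCL makes no provision for $\DC$-cycles of root-type vertices, even though your pseudo-forest allows them. Take a short cycle whose contributions xor to $1$ and whose bottom instances all have size $\Theta(n)$. Every solution must escape into an $\HC^{T_R}$ instance of size $\Theta(n)$, costing $n^{q}\gg n^{\alpha}$. For $T=\bullet$ such an instance has no solution at all. The paper adds the color $\Cyc$ precisely to handle this.
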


\begin{lemma}\label{lem:poly-rational-to-ordered-tree}
For any rational $x \in \mathbb{Q}\cap(0,1]$, there exists a good ordered binary tree $T$ with $\val(T) = x$.
\end{lemma}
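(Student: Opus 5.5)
The plan is strong induction on the denominator of $x$. In each step we choose the root's two subtrees explicitly, with values $L=\val(T_L)$ and $R=\val(T_R)$ whose reduced denominators are strictly smaller than that of $x$. Write $x=a/b$ in lowest terms with $1\le a\le b$. For the base case $x=1$ we take $T=\bullet$, which is trivially good with $\val(\bullet)=1$.

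For $x<1$, solve the recurrence in \cref{def:poly-value-binary-tree} for $L$. From $x(1+R-L)=R$ we get
\[
L \;=\; 1-\frac{R(1-x)}{x},
\]
so it suffices to pick a rational $R\in(0,1]$ for which this $L$ lies in $[0,R)$. We then split on the size of $x$.

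\emph{Case $x\le 1/2$.} Take $L=0$, i.e.\ $T_L=\perp$, and $R=a/(b-a)$. Then $R\in(0,1]$ because $a\le b-a$. The reduced denominator of $R$ is at most $b-a<b$. We also have $L=0<R$, and directly $R/(1+R)=a/b$.

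\emph{Case $1/2<x<1$.} Take $R=1$, i.e.\ $T_R=\bullet$, and $L=2-b/a=(2a-b)/a$. Then $0<L<1=R$, since $b<2a$ and $b>a$. The reduced denominator of $L$ is at most $a<b$. Directly, $1/(2-L)=a/b$.

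In both cases the induction hypothesis gives good ordered binary trees for whichever of $L,R$ is a nonzero rational in $(0,1]$. For $L=0$ we use the empty tree $\perp$, whose value is $0$ by definition. Let $T$ have a root with these as its left and right subtrees. Its value is $x$ by \cref{eq:value-recurrence}. It is good because both subtrees are good and $\val(T_L)<\val(T_R)$ at the root.

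The main point to check carefully is that the recursion is well-founded and respects goodness. First, the denominators strictly decrease, which the case split guarantees. Second, the root is not a leaf, so the strict inequality $\val(T_L)<\val(T_R)$ must hold there, and in particular $T_R\ne\perp$. Both cases satisfy this with $R>0$. The configuration $T_L=\perp$ is permitted, since \cref{def:poly-value-binary-tree} assigns the empty tree the value $0$.
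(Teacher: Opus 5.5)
Your proof is correct, but it takes a different route from the paper's.

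\textbf{The paper's decomposition.} The paper uses a single uniform step with no case split. For $x=a/b$ with $1\le a\le b-1$ it always takes $\val(T_L)=(a-1)/(b-1)$ and $\val(T_R)=a/(b-1)$. The unreduced denominator therefore drops by exactly one, and both subtrees are in general built recursively. The value $0$ is handled inside the same induction via $T=\perp$.

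\textbf{Your decomposition.} You split at $x=1/2$ so that one child is always trivial: $T_L=\perp$ when $x\le 1/2$, and $T_R=\bullet$ otherwise. This makes the recursion a subtractive Euclidean algorithm, $(a,b)\mapsto(a,b-a)$ or $(a,b)\mapsto(2a-b,a)$. It yields a caterpillar-shaped $T$ with at most $2b-1$ vertices and depth at most $b-1$.

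\textbf{Comparison.} The paper's recursion, run literally on the pairs $(a,b)$, has depth exactly $b-1$ and at least $\binom{b-1}{a-1}$ vertices. For example, on $2/5$ it gives depth $4$ against your $3$.

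Since $T$ is a fixed constant, none of this affects the main theorem. Your trees do give a smaller type alphabet and can give a smaller exponent $d_T$ in the polylogarithmic factor of the upper bound. In the LCL, your trees also mean that at each level only one of the two stacking mechanisms is non-trivial. The paper's version buys a single formula with a one-line verification and no case analysis.
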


\begin{proof}
We induct on $c$ to prove the following claim: For all $0 \le a \le b \le c$, there is a good ordered binary tree $T$ with value $a/b$. If $a = 0$, then we can set $T = \perp$, and we can set $T = \bullet$ when $a = b$, so the base case $c=1$ is proved. 

Now we assume the induction hypothesis holds for $c-1$ and consider the statement for $c$. We only need to consider cases where $b=c$ and $a \in [1,b-1]$. Let $p=(a-1)/(b-1)$ and $q=a/(b-1)$. We have $0\le p < q\le 1$, and both $p$ and $q$ have denominators strictly smaller than $c$. By induction hypothesis, there are two good ordered binary trees $T_L$, $T_R$ whose values are $p$ and $q$ respectively. Now we consider the tree $T$ where the left subtree of the root is $T_L$ while the right subtree is $T_R$. According to \cref{eq:value-recurrence},
\[
  \val(T) = \frac{q}{1+q-p}
  =
  \frac{\frac{a}{b-1}}{1+\frac{a}{b-1}-\frac{a-1}{b-1}}
  =
  \frac{\frac{a}{b-1}}{\frac{b}{b-1}}
  =
  \frac ab.
\]
It is easy to check that $T$ is good, and this concludes the induction.
\end{proof}

\begin{proof}[Proof of \cref{thm:poly-main}]
For every rational $x\in(0,1]$, choose the good ordered binary tree $T$ with $\val(T)=x$, whose existence is guaranteed by \cref{lem:poly-rational-to-ordered-tree}, and apply \cref{thm:poly-HTC-complexity} to generate an LCL with complexity in the LCA and $\Vol$ model $\tilde \Theta(n^{\val(T)}) = \tilde \Theta(n^{x})$.
\end{proof}

\subsection{Description of the LCL}

In the rest of this section, we fix a good ordered binary tree $T$. $T$ does not grow with the size of the LCL instance we will construct, so the size of $T$ is considered a constant.

Similar to \cref{def:log-binary-tree-labeling,def:log-tree-labeling-consistency} in the polylogarithmic setting, we will first provide a set of input labels called $T$-labeling and the associated local constraints to rule out irregularities in the instance.

\begin{definition}[$T$-labeling]\label{def:poly-t-labeling}
Let $G$ be a graph of maximum degree at most $\Delta$, and $\mathcal{P}=[\Delta]\cup\{\perp\}$. For a given ordered binary tree $T$, a \emph{$T$-labeling} consists of a type labeling $t\colon V(G)\to V(T) \cup \{\perp\}$ and the following four labels for each $v\in V(G)$: a parent $\P(v)\in\mathcal{P}$, a left child $\LC(v)\in\mathcal{P}$, a right child $\RC(v)\in\mathcal{P}$, and a direct child $\DC(v)\in\mathcal{P}$. We call $t(v)$ the \emph{type} of $v$. 
\end{definition}

Same as \cref{rem:log-child-port-notation}, we will abuse the notation and use $\P(v), \LC(v), \RC(v)$ and $\DC(v)$ to also denote the vertex reached from the corresponding port.

\begin{definition}\label{def:poly-t-labeling-consistency}
For a $T$-labeling, a vertex $v$ is \emph{consistent} if all the following constraints hold; otherwise, it is \emph{inconsistent}:
\begin{enumerate}
    \item The non-$\perp$ ports among $\P(v),\LC(v),\RC(v)$, and $\DC(v)$ are pairwise distinct.
    \item For $w\in\{\LC(v),\RC(v),\DC(v)\} \backslash \{\perp\}$, $\P(w) = v$.
    \item If $\P(v)\ne \perp$, then $v\in\{\LC(\P(v)),\RC(\P(v)),\DC(\P(v))\}$.
    \item If $\LC(v) \ne \perp$, $t(\LC(v)) = \LC_T(t(v))$. Specifically, $t(\LC(v)) = \perp$ when $\LC_T(t(v)) = \perp$.
    \item If $\RC(v) \ne \perp$, $t(\RC(v)) = \RC_T(t(v))$. Specifically, $t(\RC(v)) = \perp$ when $\RC_T(t(v)) = \perp$.
    \item If $\DC(v) \ne \perp$, $t(\DC(v)) = t(v)$.
\end{enumerate}
\end{definition}

With \cref{def:poly-t-labeling,def:poly-t-labeling-consistency}, we are ready to define our LCL $\HC^T$, where $\mathsf{HC}$ is an abbreviation of $\mathsf{HierarchicalColoring}$. The name comes from the LCL $\mathsf{Hierarchical\ 2\frac{1}{2}\ Coloring}$ introduced in previous works \cite{chang2019time, rosenbaum2020seeing}.

\begin{definition}[$\HC^T$]
    The problem $\HC^T$ is defined in the following way:
    
\textbf{Input:} A graph $G$, a $T$-labeling and a bit $b\insub(v)\in\{0,1\}$ for each $v\in V(G)$.

\textbf{Output:} A color $c\outsub(v)\in\{\D,\X,\U,\Cyc,\perp\}$ and $b\outsub(v)\in\{0,1\}$ for each $v\in V(G)$.

\textbf{Constraint on $c\outsub(v)$:} If $v$ is inconsistent, then $c\outsub(v)=\perp$. Otherwise, $c\outsub(v) \ne \perp$ and the coloring should follow the following constraints:
\begin{enumerate}
    \item If $c\outsub(v)=\Cyc$, then $\DC(v) \ne \perp$ and $c\outsub(\DC(v)) = \Cyc$.
    \item If $c\outsub(v) = \U$, then $\DC(v)=\perp$ or $c\outsub(\DC(v)) \in \{\X, \U,\perp\}$.
    \item If $c\outsub(v)=\X$, then \emph{all} of the following hold:\begin{enumerate}
        \item $t(v) = \perp$ or $t(v)$ is not a leaf in $T$.
        \item $\RC(v) = \perp$ or $c\outsub(\RC(v))\ne\D$.
    \end{enumerate} 
    \item If $c\outsub(v)=\D$, then $t(v) \ne r_T$;
    \item If $c\outsub(v) \in \{\X, \U\}$, then $\LC(v) = \perp$ or $c\outsub(\LC(v)) \ne \D$.
\end{enumerate}

\textbf{Constraint on $b\outsub(v)$:} If $c\outsub(v)\in\{\Cyc,\D,\perp\}$ or $t(v) = \perp$, then $b\outsub(v)=0$. Otherwise, we have the following equation:
\begin{equation}
  b\outsub(v)=b\insub(v)\xor b'(v)\xor b''(v),\label{eq:poly-lcl-def}
\end{equation}
where
\begin{gather}
    b'(v) =
    \begin{cases}
        0, & c\outsub(v) = \U \text{ and }\DC(v)=\perp,\\
        b\outsub(\DC(v)), & c\outsub(v) = \U \text{ and }\DC(v)\ne\perp,\\
        0, & c\outsub(v)=\X\text{ and }\RC(v)=\perp,\\
        b\outsub(\RC(v)), & c\outsub(v)=\X\text{ and }\RC(v)\ne\perp,
    \end{cases}\label{eq:poly-b'-def}\\
    b''(v) =
    \begin{cases}
        0, & \LC(v)=\perp,\\
        b\outsub(\LC(v)), & \LC(v)\ne\perp.
    \end{cases}\label{eq:poly-b''-def}
\end{gather}
\end{definition}

\begin{remark}\label{rem:hierarchical-coloring}
    For readers familiar with previous works \cite{chang2019time, rosenbaum2020seeing}, the $\HTHC$ LCL introduced in these papers is a special case of $\HC^T$, where all vertices in $T$ only have a right child, a new output color $\Cyc$ is introduced, and colors $\R$ and $\B$ are identified to a single color $\U$. Left and right children in $\HTHC$ correspond to $\DC$ and $\RC$ ports in $\HC^T$. The extra color $\Cyc$ is to avoid the possibility that none of the vertices in a cycle want to output $\X$, and there may be no consistent solution for $b\outsub$ along the cycle. 
\end{remark}

\begin{observation}\label{obs:poly-lcl}
    $\HC^T$ is an LCL.
\end{observation}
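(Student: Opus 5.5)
The plan mirrors the proof of \cref{lem:log-lcl}. First, the label alphabets have constant size. The input at each vertex consists of the type $t(v)\in V(T)\cup\{\perp\}$, the four ports $\P(v),\LC(v),\RC(v),\DC(v)\in[\Delta]\cup\{\perp\}$, and the bit $b\insub(v)$. Since $T$ is fixed and $\Delta$ is a constant, this is a constant number of possibilities. The outputs $c\outsub(v)\in\{\D,\X,\U,\Cyc,\perp\}$ and $b\outsub(v)\in\{0,1\}$ likewise range over a constant set. Since $T$ is fixed, predicates such as ``$t(v)$ is a leaf of $T$'', ``$t(v)=r_T$'', and the maps $\LC_T,\RC_T$ are finite lookup tables.

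Next I would exhibit a $\LOCAL$ checker of radius two, which never reads IDs, and so implicitly defines $C$ as in the remark after \cref{def:LCL}. The first step is consistency of $v$. Conditions 1, 4, 5 and 6 of \cref{def:poly-t-labeling-consistency} need only $v$'s own labels and the types of the neighbors behind its ports. Condition 2 needs the $\P$ port of those neighbors, together with the port ordering on the connecting edge, to check that the port leads back to $v$. Condition 3 needs the $\LC,\RC,\DC$ ports of $\P(v)$, again resolved through port numbers. All of this lies in $\Gamma^1_G(v)$.

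The color constraints 1--5 and the $b\outsub$ constraint with \cref{eq:poly-lcl-def,eq:poly-b'-def,eq:poly-b''-def} refer only to the labels of $v$ and of $\DC(v)$, $\RC(v)$, $\LC(v)$. The one subtle point is whether the condition ``if $v$ is inconsistent then $c\outsub(v)=\perp$'' must be verified for neighbors. Reading the constraints literally, the checker at $v$ only has to know whether $v$ itself is consistent. Neighbors' outputs are read but never required to match their own consistency status, because that requirement is enforced when the checker runs at the neighbor. Hence radius one already suffices, and radius two is a safe upper bound in any case.

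The only step needing care is interpreting port-based adjacency: the clause ``$\P(w)=v$'' must be evaluated via the port numbering of the edge $(v,w)$, which is part of the visible local configuration. I expect no real obstacle. Since all checks are finite Boolean conditions over a constant-radius, bounded-degree neighborhood with constant-size labels, the set $C$ of accepted configurations is finite, so $\HC^T$ is an LCL.
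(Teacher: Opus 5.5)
Your proposal is correct and takes essentially the same route as the paper. The label sets have constant size because $T$ and $\Delta$ are fixed, and a radius-one $\LOCAL$ checker suffices, since the consistency conditions and all output constraints involve only $v$ and the neighbors reached through its $\P,\LC,\RC,\DC$ ports. Your added care about resolving ports through the visible port numbering, and your observation that no constraint needs a neighbor's consistency status, correctly fill in what the paper states briefly; it also means the radius-two fallback is unnecessary.
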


\begin{proof}
As $|T|$ is a constant, the set of possible input and output labels has a constant size. Now we show that there is a $\LOCAL$ algorithm of radius one to check whether the local constraint is satisfied for each vertex. The consistency conditions in \cref{def:poly-t-labeling-consistency} inspect only $v$, its incident ports, and the labels of its neighbors. As a result, we can determine whether a vertex is consistent within one round. After that, constraints for the output label only involve the one-hop neighborhood. Hence, every constraint is checkable in radius one.
\end{proof}

In the rest of this section, we will show the following two lemmas, jointly proving \cref{thm:poly-HTC-complexity}.

\begin{lemma}\label{lem:poly-upper-bound}
    There is a randomized $\Vol$ algorithm that solves $\HC^T$ over bounded-degree graphs using $\tilde O(n^{\val(T)})$ probes.
\end{lemma}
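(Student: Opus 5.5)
The plan is an induction on the structure of $T$. For every subtree $T_v$ I would design a $\Vol$ procedure $\SolveHC$ that, given a consistent vertex $u$ of type $r_{T_v}$ and a size parameter $N$, does one of two things. Either it returns the color and bit for $u$ using $\tilde O(N^{\val(T_v)})$ probes whenever the component hanging below $u$ has at most $N$ vertices, or it reports failure within the same budget when that component is larger. The base cases are $T_v=\bullet$, where the only legal non-$\D$ choice is $\U$, and $T_v$ a chain of right children. The base case amounts to walking the $\DC$-path to its end, or detecting a cycle and outputting $\Cyc$ on it. This costs $O(N)$ probes, matching $\val(\bullet)=1$.

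For the inductive step, take $T_L,T_R$ with $p=\val(T_L)<q=\val(T_R)$ and set $\ell=N^{(q-p)/(1+q-p)}$. From $u$, follow the $\DC$-path for up to $2\ell$ steps.
\begin{itemize}
\item If the path terminates, or closes into a cycle, within $2\ell$ steps, color it $\U$ (or $\Cyc$). Each vertex on it recursively solves its left-attached instance of type $r_{T_L}$ with parameter $N/\ell$. This costs $\ell\cdot\tilde O((N/\ell)^p)=\tilde O(N^{q/(1+q-p)})$.
\item Otherwise, each path vertex joins a sample set $S$ with probability $\Theta(\log N/\ell)$, using its own randomness. Walking the window of $2\ell$ vertices, the first sampled vertex $w$ whose right instance (type $r_{T_R}$) succeeds with parameter $N/\ell$ is colored $\X$. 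Vertices before $w$ are colored $\U$, and vertices after $w$ in the window play no role for $u$'s output.
\end{itemize}
In the second case, at least half of the window has right instances of size at most $N/\ell$, by counting. Chernoff then gives $|S\cap\text{window}|=O(\log N)$ and that some sampled vertex succeeds. The cost is $\tilde O((N/\ell)^q)$ plus $\ell$ left-instance solves. Both cases meet the target budget. Vertices of types below $r_T$ that get no constraint from above are colored $\D$, which is allowed by constraint 4. Constraints 3(b) and 5 force exactly the non-$\D$ colorings that the recursion produces.

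The main obstacle is consistency across independent queries. Vertex $u$ and its $\DC$-successor must agree on whether the path is long, which $w$ is chosen, and what bits result. Agreement on $w$ comes from sampling by local randomness and making the "first successful sampled vertex at or after $u$" rule shift-invariant. I would make the success test depend only on a fixed ideal size, namely $n$ propagated down the levels of $T$, rather than on per-query guesses. I would also use a canonical window definition, so that for every $u$ the choice is the same as seen from any predecessor. The remaining risk is the boundary effect: $u$ may see the tail within $2\ell$ steps while a predecessor does not. To handle it, a vertex colors $\U$ exactly when no selected $\X$ vertex precedes the tail along its forward path, and the $\X$ decisions are defined globally as "sampled and succeeds and no earlier sampled success within distance $2\ell$ before it". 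Under this definition each query checks at most $O(\log N)$ candidates in both directions.

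Correctness of the $b\outsub$ equations then follows as in \cref{lem:log-algo-log-solve-correctness}. Each computed bit is the xor along the deterministic path defined in the overview. Successive vertices' paths are suffixes of one another, and the recursive answers for attached instances are computed by the same deterministic procedure whoever the caller is. Failure probability is handled by a union bound over all $n$ vertices and levels, and the result is converted to Monte Carlo form by \cref{rem:las-vegas-to-monte-carlo}. Finally, instantiating $N=n$ at the root gives $\tilde O(n^{\val(T)})$ by unrolling the recurrence $\val(T)=q/(1+q-p)$.
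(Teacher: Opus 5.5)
There is a genuine gap, and it lies in the left-attached instances rather than in the choice of the $\X$ vertex.

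Along the walked $\DC$-path, only the \emph{total} size of the attached $\HC^{T_L}$ instances is bounded by $N$; a single one can have size $N/2$. Yet none of them may be declined. A top-level vertex of type $r_T$ on a $\DC$-path cannot be $\D$ (constraint 4) or $\Cyc$, so it must be $\U$ or $\X$, and constraint 5 then forces its left child to be non-$\D$. Now take a top path that ends within $2\ell$ steps but carries one left instance of size $N/2\gg N/\ell$. Your left solver, called with parameter $N/\ell$, reports failure by its own specification, and that vertex has no legal color. Your cost bound $\ell\cdot\tilde O((N/\ell)^p)$ silently assumes balanced left instances.

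The obvious repairs fail:
\begin{itemize}
\item Giving every left instance parameter $N$ only yields $\ell\cdot\tilde O(N^{p})=\tilde O(N^{\beta+p})$, which exceeds $N^{\alpha}=N^{\beta+p(1-\beta)}$.
\item Doubling each left instance's parameter until its root succeeds is affordable: with left-instance sizes $m_j$, concavity gives $\sum_j\max(N/\ell,m_j)^p=O(\ell(N/\ell)^p)$. But then the window and sampling rate inside that instance depend on the guess at which its root happened to succeed. The other vertices of that instance, queried independently, cannot reproduce that guess, so their $\X$ choices become inconsistent.
\end{itemize}
Your plan to let every decision depend only on an ideal size ``rather than on per-query guesses'' is precisely what makes large left instances unsolvable.

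You have the ideal-size half of the fix; what is missing is the second parameter of \cref{def:poly-strong-algo}. In the paper, the two sizes play separate roles:
\begin{itemize}
\item The ideal size $N_1$, propagated from $n$, alone fixes the window $\lceil 2N_1^\beta\rceil$ and the sampling rate, so sample sets are shared by all queries.
\item A maximum size $N_2$ is searched by binary lifting and only controls probe budgets and the size bounds handed to sub-instances. Left instances get ideal size $\lceil N_1^{1-\beta}\rceil$ but maximum size equal to the current guess $N$.
\end{itemize}
The inductive guarantee must then contain stability in $N_2$, legality of the partial labeling when $N_2$ is too small, and success once $N_2\ge n^\star$. Crucially, it must also give root-type vertices a cost of $O(\max(N_1,n^\star)N_1^{\alpha-1}\log^{d_T}n)$, which is \emph{linear} in the true size. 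Summed over the window, this gives $O(NN_1^{\alpha-1}\log^{d_T}n)$ via $(p-1)(1-\beta)=\alpha-1$ (\cref{lem:poly-collect-success}).

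Since vertices now stop at different guesses, consistency becomes the real work. The paper handles it with two tools: the monotonicity $S'_{u,p}\subseteq S'_{u,p+1}$, and the test $\SolveHC(w,N_1,N/2)$ inside $\Try$, which detects whether the planned $\X$ vertex already committed at a smaller guess.

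Conversely, the obstacle you single out is not one. With fixed parameters, your per-query rule ($\X$ iff sampled with a succeeding right instance) is already shift-invariant. Your proposed global rule (no earlier sampled success within distance $2\ell$) is actually harmful: if sampled successes recur at most every $2\ell$ steps along a long path, only the first becomes $\X$, and every later vertex must walk arbitrarily far.
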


\begin{lemma}\label{lem:poly-lower-bound}
    Any randomized LCA that solves $\HC^T$ uses $\Omega(n^{\val(T)})$ probes even when we restrict the instance to bounded-degree trees.
\end{lemma}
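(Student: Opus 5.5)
The plan is to mirror the proof of \cref{lem:log-lower-bound}. First I would prove an analogue of \cref{lem:log-hard-root-bit} for $\HC^T$: a family of instances on one fixed tree in which the root output bit is forced, and any small probe set admits an input-preserving bijection between the ``root bit $0$'' and ``root bit $1$'' instances. Yao's minimax principle then finishes the argument exactly as in the proof of \cref{lem:log-lower-bound}. Everything is built recursively along $T$. For a node $\tau\in T$ and a size parameter $N$, I would construct a tree $G_{\tau,N}$ with a $T$-labeling. Its \emph{top path} consists of $\ell_\tau$ vertices of type $\tau$, linked by $\DC$ ports. Every top-path vertex $v_i$ gets a fresh copy of $G_{\LC_T(\tau),N_\tau}$ attached at $\LC(v_i)$ and a fresh copy of $G_{\RC_T(\tau),N_\tau}$ attached at $\RC(v_i)$, where an absent child means no attachment. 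For $\tau$ a leaf of $T$ this is just a path of length $N$. Otherwise I take $\ell_\tau=N^{(q-p)/(1+q-p)}$ and $N_\tau=N/\ell_\tau$, with $p=\val(T_{\LC_T(\tau)})$ and $q=\val(T_{\RC_T(\tau)})$. By induction the size is $N^{1+o(1)}$ up to constant factors, every vertex is consistent, and the graph is a tree, so $\Cyc$ never appears.

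The forcing property I would prove by induction is this: for each instance $I$ in the family $\mathcal{S}^B_{\tau,N}$, every valid solution with $c\outsub(\text{head})\ne\D$ has $b\outsub(\text{head})=B$. For $\tau=r_T$ the color $\D$ is forbidden at the head, so there the bit is always forced. To build the inputs, choose suffix bits $B_1=B,B_2,\dots,B_\ell$ freely and set $B_{\ell+1}=0$. Choose a free bit $c_i$ for each left attachment, which is sampled from $\mathcal{S}^{c_i}$, and sample the right attachment at $v_i$ from $\mathcal{S}^{B_{i+1}}$. Set $b\insub(v_i)=B_i\xor c_i\xor B_{i+1}$.

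Now check the forcing. Constraint 5 forces $c\outsub(\LC(v_i))\ne\D$, so $b''(v_i)=c_i$. If $v_i$ is colored $\X$, constraint 3(b) makes the right head non-$\D$, so $b'(v_i)=B_{i+1}$. If $v_i$ is colored $\U$, constraint 2 makes $v_{i+1}$ non-$\D$, and by downward induction along the path $b'(v_i)=B_{i+1}$. In either case $b\outsub(v_i)=B_i$. The case $\RC_T(\tau)=\perp$ only occurs at leaves, where $\X$ is forbidden by constraint 3(a). Goodness of $T$ guarantees that every internal node has a right child, so this case causes no trouble.

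For the bijection, define thresholds $M(\perp)=1$ (meaning only the vertex itself), $M(\text{leaf})=N$, and $M(\tau)=\min(\ell_\tau M_L,\,M_R)$, where $M_L$ and $M_R$ are the thresholds of the attached instances at size $N_\tau$. Take $|V_0|<M(\tau)$. Then every right attachment meets $V_0$ in fewer than $M_R$ vertices, and by pigeonhole some index $j$ has either $v_j\notin V_0$ or its left attachment light. The bijection flips $B_1,\dots,B_j$. For $i<j$ it applies the inductive bijection to the right attachment at $v_i$, whose forced bit $B_{i+1}$ has flipped, and leaves $b\insub(v_i)$ unchanged since both of its $B$'s flipped. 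At $j$ it absorbs the single flip either by flipping $b\insub(v_j)$ or by flipping $c_j$ through the left-attachment bijection. Inputs on $V_0$ are preserved, and the map is an involution-type bijection between $B=0$ and $B=1$.

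The step I expect to be the main obstacle is the exponent bookkeeping. I need to verify by induction that $M(\tau)=\Theta(N^{\val(T_\tau)})$, while the total size is $N$ up to a constant factor, so that $N$ can be set to a constant fraction of $n$. With $\ell=N^{(q-p)/(1+q-p)}$, the two terms are $\ell(N/\ell)^p=N^{q/(1+q-p)}$ and $(N/\ell)^q=N^{q/(1+q-p)}$, which balance at exactly $\val(T_\tau)$; the $\perp$ case with $p=0$ is consistent with this. Rounding $\ell$ and the attachment sizes to integers only costs constant factors, because $T$ is constant.
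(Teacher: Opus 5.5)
Your proposal is correct and follows essentially the paper's proof. The ingredients all match: the same recursive construction along $T$ (a $\DC$ path of length $\Theta(N^{\beta})$ carrying $\HC^{T_L}$ and $\HC^{T_R}$ attachments of size $\Theta(N^{1-\beta})$, with the right attachments encoding suffix xors), the same forcing argument, the same bijection that absorbs one flip at a light left attachment (or an unprobed path vertex) and propagates it through the right-attachment bijections, and Yao's principle. Three of your departures are improvements rather than just reparametrizations. Your explicit thresholds $M(\tau)$ replace the paper's $o(n^{\val(T)})$. More importantly, you state the forcing claim for every solution whose head is not colored $\D$. That is the inductive statement actually needed: inside the global instance, the root-type vertices of an attached sub-instance may be colored $\D$, and the paper's condition (ii), stated only for full solutions, does not literally cover that case.
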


\begin{proof}[Proof of \cref{thm:poly-HTC-complexity}]
    From \cref{obs:vol-weaker-than-lca}, the upper bound result \cref{lem:poly-upper-bound} can be generalized from $\Vol$ to LCA, while the lower bound result \cref{lem:poly-lower-bound} can be generalized from LCA to $\Vol$. As a consequence, in both models, we have a $\tilde {O} (n^{\val(T)})$ randomized algorithm for $\HC^T$ over bounded-degree graphs and a matching lower bound for bounded-degree trees up to a polylogarithmic factor.
\end{proof}

\subsection{Upper Bound}

Throughout the rest of this section, fix a good ordered binary tree $T$. Set $p = \val(T_L)$ and $q = \val(T_R)$, and furthermore
\[
    \alpha=\val(T)=\frac{q}{1+q-p},
    \qquad
    \beta=\frac{q-p}{1+q-p}.
\]
We have the following equalities:
\begin{gather}
    \beta+p(1-\beta)=q(1-\beta)=\alpha \label{eq:poly-exponent-identities} \\
    (p-1)(1-\beta) = \frac{p-1}{1+q-p} = \frac{q-(1+q-p)}{1+q-p} = \alpha - 1. \label{eq:poly-upper-bound-(p-1)(1-beta)}
\end{gather}

As we see in \cref{sec:tech-overview}, $\HC^T$ instances are formed by paths (corresponding to $\DC$ paths and cycles in which all vertices have type $r_T$) and some attached $\HC^{T_L}$ and $\HC^{T_R}$ instances. We formalize this structure by the definition of a \emph{sub-instance}.

\begin{definition}[Sub-instance]
    For a $\HC^T$ instance $I$, define a graph $G_I = (V_I, E_I)$ where $V_I$ contains all consistent vertices with type other than $r_T$ and $\perp$ in $I$, and $E_I = \{(u,\P(u)) \mid u, \P(u) \in V_I\}$. A connected component of $G_I$ induces an instance, which we call a \emph{sub-instance} in $I$. A sub-instance is a $\HC^{T_L} (\HC^{T_R})$ sub-instance if all vertices in the sub-instance have type in $T_L (T_R)$.
    
    For each consistent vertex $u$ of type $r_T$, $\LC(u)$ and $\RC(u)$ will be in a $\HC^{T_L}$ and a $\HC^{T_R}$ sub-instance when these two vertices are consistent. We will call the two sub-instances the $\HC^{T_L}(\HC^{T_R})$ instance \emph{attached to $u$}.
\end{definition}
\begin{lemma}\label{lem:poly-upper-bound-sub-instance-categorization}
    Every vertex $v$ in a $\HC^T$ instance $I$ falls in exactly one of the categories: an inconsistent vertex, a consistent vertex of type $\perp$, a consistent vertex of type $r_T$, a vertex in some $\HC^{T_L}$ sub-instance, or a vertex in some $\HC^{T_R}$ sub-instance.
\end{lemma}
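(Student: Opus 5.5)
The plan is a direct case split on the status of $v$ and its type. First, every vertex is either inconsistent or consistent, and these two options are exclusive by \cref{def:poly-t-labeling-consistency}. For a consistent vertex $v$, its type $t(v)$ lies in $V(T)\cup\{\perp\}$. Since $T$ is a nonempty ordered binary tree, $V(T)$ is the disjoint union of $\{r_T\}$, $V(T_L)$ and $V(T_R)$, where $T_L$ or $T_R$ may be $\perp$ and then contribute no vertices. So a consistent vertex has type in exactly one of $\{\perp\}$, $\{r_T\}$, $V(T_L)$, $V(T_R)$.

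The first three categories are then settled immediately. It remains to show that a consistent vertex $v$ whose type lies in $V(T_L)\cup V(T_R)$ belongs to exactly one sub-instance, and that this sub-instance is an $\HC^{T_L}$ or $\HC^{T_R}$ sub-instance according to $t(v)$. By definition, such a $v$ lies in $V_I$, and sub-instances are the connected components of $G_I$. Hence $v$ lies in exactly one sub-instance, and vertices outside $V_I$ (the first three categories) lie in none.

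The only real content, and the main step, is to show that each connected component of $G_I$ is type-homogeneous with respect to the split between $T_L$ and $T_R$. Then the labels ``$\HC^{T_L}$'' and ``$\HC^{T_R}$'' are well defined and mutually exclusive. Since $G_I$ is connected through edges $(u,\P(u))$, it suffices to check a single edge with both endpoints in $V_I$.

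Take $u$ consistent with $\P(u)=w\neq\perp$ and $w\in V_I$. By constraint 3 of \cref{def:poly-t-labeling-consistency}, $u\in\{\LC(w),\RC(w),\DC(w)\}$. We now use constraints 4--6 at $w$, which is valid because $w$ is consistent. They give that $t(u)$ equals $\LC_T(t(w))$, $\RC_T(t(w))$, or $t(w)$, respectively. In every case $t(u)$ is either $t(w)$ or a child of $t(w)$ in $T$, and here $t(u)\neq\perp$ because $u\in V_I$.

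Because $t(w)\neq r_T$, the node $t(w)$ lies strictly inside $T_L$ or strictly inside $T_R$. Its children in $T$ stay in the same subtree. So $t(u)$ and $t(w)$ lie in the same one of $T_L$ and $T_R$.

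By induction along paths in the component, all vertices of a sub-instance have types in the same subtree, which gives exclusivity and completes the categorization. I expect no obstacle beyond carefully invoking consistency of the parent endpoint rather than the child. Constraint 3 could alternatively be replaced by constraint 2 applied at $w$; either way, one must note that the edge relation is defined via $\P$.
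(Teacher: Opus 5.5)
Your proposal is correct and follows essentially the same route as the paper: a case split on consistency and type, then a check on a single edge of $G_I$ using constraints 3--6 to show that each component stays entirely within $T_L$ or entirely within $T_R$. Your closing aside is wrong, though: constraint 2 at $w$ only says that the vertices $w$ designates as children have parent $w$, not that every $u$ with $\P(u)=w$ is among $\LC(w),\RC(w),\DC(w)$, so it cannot replace constraint 3 at $u$; your main argument correctly uses constraint 3.
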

\begin{proof}
    If $v$ is inconsistent, or if $v$ is consistent with $t(v)=\perp$ or $t(v)=r_T$, then $v$ falls into exactly one of the first three categories. It remains to consider a consistent vertex $v$ with $t(v)\notin\{\perp,r_T\}$. Since $T$ is an ordered binary tree, $t(v)$ belongs to exactly one of $T_L$ and $T_R$. The vertex $v$ is therefore included in $V_I$ and lies in a unique connected component of $G_I$.

    We now show that this component cannot contain types from both $T_L$ and $T_R$. Consider an edge $\{x,y\}$ of $G_I$. By the definition of $G_I$, $x$ and $y$ are consistent vertices, so up to swapping $x$ and $y$, we have $y=\P(x)$ and $x\in\{\LC(y),\RC(y),\DC(y)\}$. If $x=\DC(y)$, then $t(x)=t(y)$. If $x=\LC(y)$ or $x=\RC(y)$, then $t(x)$ is a child of $t(y)$ in $T$. In all cases, as $x,y\in V_I$, the two types lie in the same one of the two subtrees $T_L$ and $T_R$. Hence every connected component of $G_I$ is contained entirely in one side, and $v$ belongs to exactly one $\HC^{T_L}$ or $\HC^{T_R}$ sub-instance.
\end{proof}

In the rest of this section, a randomized $\Vol$ algorithm $\mathcal{A}$ for $\HC^T$ will receive three inputs $(u, N_1, N_2)$, where $u$ is the queried vertex, and $N_1, N_2$ are positive size parameters with $N_1\le N_2$. The algorithm $\mathcal{A}$ will be used to solve both an $\HC^T$ instance, in which $N_1 = N_2 = n$, and a $\HC^T$ sub-instance inside a larger $\HC^{T'}$ instance, in which $N_1$ and $N_2$ give clues about the size of the current sub-instance.
\begin{itemize}
    \item $N_1$ denotes the \emph{ideal size} of the $\HC^T$ sub-instance, and this parameter is to ensure the correct probe complexity and consistency. Most decisions in the algorithm that are related to the size of the instance, such as the probability of each vertex entering the sample set in \cref{alg:poly-traverse}, will use $N_1$ as the size parameter.
    \item $N_2$ is the \emph{maximum possible size} of the $\HC^T$ sub-instance. When the algorithm has a high confidence that the current sub-instance is even larger than $N_2$, then it could use $(\D, 0)$ to label the current vertex being queried. Otherwise, the algorithm should make the best effort to use a label other than $(\D, 0)$.
\end{itemize}
Within the algorithm, we will utilize binary lifting to estimate the correct size of the sub-instance, with the estimation denoted as $N$. The actual size of the sub-instance is generally unknown to the algorithm, and we denote it as $n^\star$. $n$ denotes the size of the whole instance, which is known to the algorithm. One may think of the algorithm as solving a $\HC^T$ sub-instance of unknown size $n^\star$ inside a large instance of known size $n$. Additionally, the algorithm $\mathcal{A}$ may execute the algorithm for $\HC^{T_L}$ and $\HC^{T_R}$ to deal with the $\HC^{T_L}$ and $\HC^{T_R}$ sub-instances of the current instance $I$.

The following definition provides the guarantee for the algorithm in this section.

\begin{definition}[Strong Algorithm]\label{def:poly-strong-algo}
    Fix a $\HC^T$ instance $I$ with $n^\star$ vertices, a randomized $\Vol$ algorithm $\mathcal{A}$, a randomness setup $\mathcal{R} = \{r_v\}_{v \in I}$, and $n \ge n^\star$. For a vertex $u\in I$ and positive integers $N_1\le N_2$, define $a_{u,N_1,N_2}$ as the output of $\mathcal{A}(u,N_1,N_2)$ using $\mathcal{R}$ as the randomness, and define $b_{u,N_1,N_2}$ as the number of probes used by $\mathcal{A}(u,N_1,N_2)$. $\mathcal{A}$ knows the value of $n$ but does not know $n^\star$.
    
    The algorithm $\mathcal{A}$ is called \emph{strong} if for all $\HC^T$ instances $I$ of size $n^\star$, $n \ge n^\star$ and $1 \le N_1 \le n$, with probability $1-n^\star/n^2$ over $\mathcal{R}$ all the following conditions hold:
    \begin{enumerate}[(i)]
        \item (Stability over $N_2$) For every vertex $u\in I$, there exists $N_2^\star(u,N_1)\ge N_1$ such that 
        \begin{equation}
            a_{u,N_1,N_2} = \begin{cases}
            (\D,0), & N_2<N_2^\star(u,N_1), \\
            a_{u,N_1,N_2^\star(u,N_1)}, & N_2\ge N_2^\star(u,N_1).
            \end{cases}
        \end{equation}
        \item (Correctness for small $N_2$) For all $N_2\ge N_1$, all constraints in $\HC^T$ are satisfied by the output labeling given by $a_{v,N_1,N_2}$, except constraints of type 4 for $c\outsub$ (vertices of label $(\D, 0)$ should not have type $r_T$).
        \item (Correctness for large $N_2$) For every $N_2\ge \max(N_1,n^\star)$ and every consistent vertex $u\in I$ with $t(u)=r_T$, $a_{u,N_1,N_2}\ne(\D,0)$.
        \item (Probe complexity) For every $N_2\ge N_1$,
        \begin{equation}
            \max_{u \in I} b_{u,N_1,N_2} = \begin{cases}
                O(1), & T = \perp, \\ 
                O(N_2N_1^{\alpha-1}\log^{d_T}n), & T \ne \perp.
            \end{cases} \label{eq:poly-strong-algo-complexity-universal}
        \end{equation}
        In addition, when $T \ne \perp$, then for every consistent vertex $u\in I$ with $t(u)=r_T$,
        \begin{equation}
            b_{u,N_1,N_2} = O(\max(N_1,n^\star)N_1^{\alpha-1}\log^{d_T}n). \label{eq:poly-strong-algo-complexity-root}
        \end{equation}
        Here $d_T$ is the depth of the good ordered binary tree $T$ defined as the number of edges in the longest path in $T$ from $r_T$ to some vertex in the tree.
    \end{enumerate}
\end{definition}

\begin{remark}
    We explain here the intuition of the four requirements in \cref{def:poly-strong-algo}.
    \begin{enumerate}[(i)]
        \item Condition (i) says that an algorithm can label a vertex $u$ with output label $(\D, 0)$ when $N_2$ is smaller than the actual size $n^\star$. However, once $N_2$ is large enough for the algorithm to make a decision, increasing $N_2$ will not change the output. Notice that we allow the algorithm to always output $(\D, 0)$ no matter how large $N_2$ is.
        \item Condition (ii) says that even if $N_2$ is too small compared to $n^\star$, the output labels satisfy all constraints in $\HC^T$ except that some vertices with type $r_T$ may output $(\D, 0)$.
        \item Condition (iii) says that once $N_2$ is at least $n^\star$, every root-type vertex is not labeled $(\D, 0)$ with high probability. Combining condition (ii), the output labeling in the sub-instance satisfies all $\HC^T$ constraints with probability $1-n^\star/n^2$.
        \item Condition (iv) contains two complexity guarantees, \cref{eq:poly-strong-algo-complexity-universal,eq:poly-strong-algo-complexity-root}. \cref{eq:poly-strong-algo-complexity-universal} applies to the case where $n^\star \gg N_2$: in this case, we require the algorithm to produce the output of \emph{every} vertex in the instance with a probe complexity depending on $N_2$ rather than $n^\star$. $N_1$ roughly measures the proportion of vertices that the algorithm is going to shave. When $N_1 = N_2 = n^\star$ and $T \ne \perp$, since $d_T$ is a constant, we have that with high probability $b_{u, n^\star, n^\star} = \tilde O\left((n^\star)^{1+\alpha - 1}\right) = \tilde O((n^\star)^\alpha)$, which is the desired complexity for $\HC^T$. \cref{eq:poly-strong-algo-complexity-root} handles the case where $n^\star \ll N_2$. In this case, we require the algorithm to produce a labeling for \emph{root-type vertices} faster than the bound claimed by \cref{eq:poly-strong-algo-complexity-universal}.
    \end{enumerate}
\end{remark}

We will show the following claim in the rest of this section, and it immediately implies \cref{lem:poly-upper-bound}.

\begin{theorem}\label{thm:poly-upper-bound-strong}
    For every good ordered binary tree $T$, there is a strong randomized $\Vol$ algorithm for $\HC^T$.
\end{theorem}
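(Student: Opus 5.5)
We argue by structural induction on $T$. For $T=\perp$ every consistent vertex has type $\perp$; the constraints force $b\outsub=0$, and any vertex with $\DC$ pointing to a type-$\perp$ vertex is handled by a constant-probe rule (output $(\U,0)$, or $\perp$ if inconsistent). This gives $O(1)$ probes, and conditions (i)--(iii) hold trivially with $N_2^\star=N_1$. For $T=\bullet$, the instance is a collection of $\DC$-paths and cycles of root type. The algorithm walks along the $\DC$ path for up to $N_2$ steps. If it reaches the tail (or detects a cycle, in which case it outputs $\Cyc$), it outputs $\U$ with the xor of the $b\insub$ values. Otherwise it outputs $(\D,0)$. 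The threshold $N_2^\star(u,N_1)$ is then the distance to the tail, and the probe bound is $O(\min(N_2,n^\star))$, matching $\alpha=1$.

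For general $T$, assume strong algorithms $\mathcal{A}_L$ for $\HC^{T_L}$ and $\mathcal{A}_R$ for $\HC^{T_R}$ exist, and use \cref{lem:poly-upper-bound-sub-instance-categorization} to dispatch. A vertex inside an $\HC^{T_L}$ or $\HC^{T_R}$ sub-instance is answered by calling $\mathcal{A}_L$ or $\mathcal{A}_R$ with the parameters that any root-type vertex would use: ideal size $N_1^{1-\beta}$ and maximum size $N_2^{1-\beta}$ (for the right instances) or an appropriate cap for the left ones. A root-type vertex $u$ works on its $\DC$ path as follows.
\begin{itemize}
\item Each root-type vertex joins a sample set $S$ with probability $\Theta(\log n/N_1^\beta)$, using its local randomness.
\item Starting from $u$, walk $\Theta(N_1^\beta)$ steps along $\DC$. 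For each visited vertex $w$, call $\mathcal{A}_L$ on $\LC(w)$ to obtain $b''(w)$.
\item If the tail (or a cycle) is reached, color every vertex on the walk $\U$.
\item Otherwise, scan the sampled vertices on the walk in order. For each, call $\mathcal{A}_R$ on $\RC(w)$ with maximum size $(N_2/N_1^\beta)$-scaled parameters. The first sampled $w$ whose $\RC$ answer is not $(\D,0)$ is colored $\X$, and $u$'s bit is the xor along the prefix up to $w$.
\item If no sampled vertex succeeds, output $(\D,0)$.
\end{itemize}
Consistency across queries follows as in \cref{lem:log-algo-log-walk-correctness}: decisions depend only on the shared randomness and on $N_1$, not on the binary-lifting guess. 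Stability (i) follows from the monotone stability of $\mathcal{A}_R$ in its $N_2$ argument.

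Probe complexity is checked through \cref{eq:poly-exponent-identities}. The walk makes $N_1^\beta$ calls to $\mathcal{A}_L$, each costing $\tilde O(N_1^{(1-\beta)p})$, for a total of $\tilde O(N_1^{\beta+p(1-\beta)})=\tilde O(N_1^\alpha)$. It also makes $O(\log n)$ calls to $\mathcal{A}_R$ at cost $\tilde O(N_2 N_1^{\beta-1}\cdot N_1^{(1-\beta)(q-1)})$, which simplifies via \cref{eq:poly-upper-bound-(p-1)(1-beta)} to $N_2N_1^{\alpha-1}$ up to logarithms. This gains one log factor per tree level, giving $\log^{d_T}$. Root-type vertices use the bound \cref{eq:poly-strong-algo-complexity-root} from the children, since their attached sub-instances are at most $n^\star$.

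For correctness (iii): when $N_2\ge n^\star$, a Chernoff argument shows that in any window of $N_1^\beta$ root vertices, at least half have $|\RC\text{ sub-instance}|\le n^\star/N_1^\beta$, so some sampled vertex has a small attached instance. Condition (iii) for $\mathcal{A}_R$ then gives a non-$\D$ answer. The main obstacle is making the binary-lifting size guesses and the two parameters $N_1,N_2$ interact so that neighboring queries agree on which sampled vertex gets $\X$. To address this, I would fix the choice by the first sampled $w$ with a successful $\mathcal{A}_R$ answer at the single parameter derived from $N_2$, and prove agreement using the stability clause (i) of the inductive hypothesis.
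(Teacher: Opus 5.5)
Your base cases are fine, but the inductive step has a genuine gap, exactly where you flag the main obstacle. Your rule picks $w$ as the first sampled vertex whose $\mathcal{A}_R$ call succeeds at a single max-size parameter derived from $N_2$ (such as $N_2N_1^{-\beta}$). This violates two clauses of \cref{def:poly-strong-algo}.

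First, it breaks stability (i). By (i) for $\mathcal{A}_R$, the set of sampled indices whose $\RC$-call is not $(\D,0)$ only \emph{grows} with $N_2$. Its minimum can therefore move earlier, and $u$'s output changes from one non-$(\D,0)$ label to another (e.g.\ from $\U$ to $\X$ once index $0$ starts succeeding). So the monotonicity you invoke gives the opposite of (i). This matters: in the recursion a parent queries a child at one size parameter and needs (i) to identify that answer with the child's final label (as in \cref{lem:poly-collect-success}).

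Second, it breaks the root bound \cref{eq:poly-strong-algo-complexity-root}. Take $T_L=\perp$ and $T_R=\bullet$ (so $\alpha=\beta=1/2$), a long $\DC$-path through $u$, a $\DC$-path of length $\Theta(n^\star)$ attached at $\RC(u)$, $N_1\le n^\star$ and $N_2\ge n^\star N_1^{1/2}$. Whenever $u$ is sampled, your call walks that whole path. This happens with probability $\Theta(N_1^{-1/2}\log n)$, far above the allowed failure probability. It costs $\Theta(n^\star)$ probes instead of the allowed $O(n^\star N_1^{-1/2}\log n)$. Your remark that root-type vertices are fine ``since their attached sub-instances are at most $n^\star$'' is where this breaks. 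The bound needs the $\RC$-calls capped at roughly $n^\star N_1^{-\beta}$, with $n^\star$ unknown. This is why the paper runs $\Try(u,N_1,p)$ over guesses $N=2^pN_1$, caps the $\RC$-calls at $NN_1^{-\beta}$, and returns the \emph{first} successful guess $p_u^\star$. That yields (i) and a geometric cost sum stopping near $N\approx\max(N_1,n^\star)$.

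Once binary lifting is in place, the consistency issue is real and is the core of the paper's argument; your proposal supplies no mechanism for it. Your claim that decisions do not depend on the guess is false, since which $\RC$-calls succeed depends on the cap. The vertex $w=\DC^{j^\star}(u)$ chosen at guess $p_u^\star$ may already have received its label at a smaller guess, possibly $\U$ (pointing to a later $\X$ or to a tail). So $u$ cannot just assume $w$ is $\X$. The paper's $\Try$ therefore queries $\SolveHC(w,N_1,N/2)$:
\begin{itemize}
\item If this is not $(\D,0)$, it uses $w$'s actual bit and collects only up to index $j^\star-1$.
\item Otherwise $p_w^\star\ge p_u^\star$, and $w$ must select index $0$ at the same guess and be colored $\X$.
\end{itemize}
Showing that $u,\dots,\DC^{j^\star-1}(u)$ are all colored $\U$ uses the monotonicity of $S'_{u,p}$ in $p$ (\cref{obs:poly-upper-bound-sample-set-monotone}) and an induction on $p_u^\star$ (\cref{lem:poly-upper-bound-consistency-u}).

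A second missing ingredient is the probe budget in $\Collect$. Your estimate of $\tilde O(N_1^{(1-\beta)p})$ per $\mathcal{A}_L$ call presumes left sub-instances of size about $N_1^{1-\beta}$, but they are arbitrary. Condition (iii) forces the left cap to be essentially $N$, since a single left sub-instance can have size close to $n^\star$. Then, when $n^\star\gg N$, the guarantees for $\mathcal{A}_L$ only bound the total cost of the $\Theta(N_1^\beta)$ calls by roughly $\min(NN_1^{\beta},n^\star)\,N_1^{\alpha-1}$, which exceeds the allowed $NN_1^{\alpha-1}$. The paper aborts after $\Theta(NN_1^{\alpha-1}\log^{d_T}n)$ probes. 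It shows, via \cref{eq:poly-strong-algo-complexity-root} for $\mathcal{A}_L$ and disjointness, that this never triggers when the left sub-instances along the prefix have total size at most $N$.

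Smaller slips:
\begin{itemize}
\item Giving the tail priority over the $\X$-plan is inconsistent at window boundaries: a successor whose window reaches the tail colors itself $\U$ while $u$ assumed $w$ is $\X$.
\item A $\DC$-cycle must be colored $\Cyc$ rather than all $\U$.
\item An $\RC$-call costs $\tilde O(N_2N_1^{-\beta}\cdot N_1^{(1-\beta)(q-1)})$, not $\tilde O(N_2N_1^{\beta-1}\cdot N_1^{(1-\beta)(q-1)})$.
\end{itemize}
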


\begin{proof}[Proof of \cref{lem:poly-upper-bound}]
    Apply \cref{thm:poly-upper-bound-strong} with $N_1=N_2=n^\star=n$. By conditions (ii) and (iii) of \cref{def:poly-strong-algo}, the output labeling $\{a_{u,n,n}\}_{u \in I}$ satisfies $\HC^T$ with probability at least $1-1/n$. By condition (iv), the probe complexity is $\tilde O(n^{\val(T)})$ with probability $1 - 1/n$. Finally, we can use \cref{rem:las-vegas-to-monte-carlo} to translate this algorithm from Las Vegas to Monte Carlo.
\end{proof}

We build the algorithm for \cref{thm:poly-upper-bound-strong} inductively, and a strong algorithm for $\HC^T$ will use strong algorithms for $\HC^{T_L}$ and $\HC^{T_R}$. We first handle the base cases $T=\perp$ and $T=\bullet$.

\begin{lemma}\label{lem:poly-upper-bound-base-perp}
    There is a strong deterministic $\Vol$ algorithm for $\HC^\perp$.
\end{lemma}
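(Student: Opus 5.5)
The plan is to give an explicit deterministic algorithm that ignores $N_1,N_2$ and uses $O(1)$ probes, and then check the four conditions of \cref{def:poly-strong-algo} directly. The key observation is about the case $T=\perp$: the tree has no vertices, so the type labeling maps every vertex to $V(\perp)\cup\{\perp\}=\{\perp\}$, and $t(v)=\perp$ for all $v$. Since $r_\perp=\perp$, every consistent vertex is simultaneously of type $\perp$ and of ``root type'' $r_T$.

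\textbf{The algorithm.} On query $(u,N_1,N_2)$, the algorithm probes $u$ and its at most $\Delta$ neighbors. This is a legal $\Vol$ access pattern. From these probes it decides consistency of $u$, because by \cref{def:poly-t-labeling-consistency} consistency only depends on the one-hop neighborhood. It then outputs
\begin{itemize}
\item $(\perp,0)$ if $u$ is inconsistent;
\item $(\X,0)$ if $u$ is consistent.
\end{itemize}
The answer does not depend on $N_1$ or $N_2$, and the algorithm uses no randomness.

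\textbf{Checking the constraints.} No vertex is ever labeled $\D$, which gives the following.
\begin{itemize}
\item Constraints 1 and 2 are vacuous, since no vertex outputs $\Cyc$ or $\U$.
\item Constraint 3(a) holds because $t(v)=\perp$.
\item Constraints 3(b) and 5 hold because no vertex, in particular no $\RC(v)$ or $\LC(v)$, is colored $\D$.
\item Constraint 4 is vacuous because no vertex is colored $\D$. This is needed here: in this case constraint 4 would forbid $\D$ outright, since $t(v)=r_T$.
\item Inconsistent vertices receive $\perp$, as required.
\item The $b\outsub$ constraint requires $b\outsub(v)=0$ whenever $t(v)=\perp$, which holds for every vertex, and the algorithm outputs $0$ everywhere.
\end{itemize}
Hence every constraint of $\HC^\perp$ is satisfied, and this holds with probability $1$.

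\textbf{Verifying the strong-algorithm conditions.}
\begin{itemize}
\item Condition (i): take $N_2^\star(u,N_1)=N_1$. The output is constant in $N_2$ and never equals $(\D,0)$, so the first case of the stability equation is empty.
\item Condition (ii): this is the constraint check above.
\item Condition (iii): consistent vertices have $t(u)=\perp=r_T$ and receive $(\X,0)\ne(\D,0)$.
\item Condition (iv): \cref{eq:poly-strong-algo-complexity-universal} asks for $O(1)$ probes when $T=\perp$, and we use at most $\Delta+1$. \cref{eq:poly-strong-algo-complexity-root} is only required when $T\ne\perp$.
\end{itemize}

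There is no real obstacle in this argument. The only point that needs care is noticing that $r_\perp=\perp$, so every vertex is root-type. This is why the algorithm must avoid $\D$ entirely and uses $\X$, which is permitted precisely because $t(v)=\perp$ and no neighbor is ever $\D$.
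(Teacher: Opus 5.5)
Your proposal is correct and follows essentially the same route as the paper: label inconsistent vertices $(\perp,0)$ and consistent ones $(\X,0)$, independently of $N_1,N_2$. It then checks that $\D$ never appears, that $\X$ is permitted because $t(v)=\perp$, that every output bit is forced to $0$, and that $O(1)$ probes suffice. Your explicit remark that $r_\perp=\perp$ makes every consistent vertex root-type is a small clarification the paper leaves implicit; it is why constraint 4 and condition (iii) genuinely require avoiding $\D$.
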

\begin{proof}
    The algorithm checks whether the queried vertex $u$ is consistent using \cref{def:poly-t-labeling-consistency}. If $u$ is inconsistent, it is labeled $(\perp,0)$; otherwise it is labeled $(\X,0)$. The output is not $(\D, 0)$ and is independent of the parameter $N_2$, so condition (i) of \cref{def:poly-strong-algo} is satisfied.

    Since $T=\perp$, every consistent vertex has type $\perp$. Thus $\X$ is allowed by item 3(a) in the $c\outsub$ constraint, and item 3(b) is always satisfied since output color $\D$ does not appear in the instance. This shows that all vertices satisfy the $\HC^T$ constraints on $c\outsub$. The constraint on $b\outsub$ enforces every vertex $u$ in the instance to have $b\outsub(u)=0$. This means that conditions (ii) and (iii) hold. Finally, the probe complexity is $O(1)$, satisfying condition (iv).
\end{proof}

\begin{lemma}\label{lem:poly-upper-bound-base-bullet}
    There is a strong deterministic $\Vol$ algorithm for $\HC^\bullet$.
\end{lemma}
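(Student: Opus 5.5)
For $T=\bullet$ I would give an explicit deterministic $\Vol$ algorithm and check conditions (i)--(iv) of \cref{def:poly-strong-algo} directly. Since it is deterministic, they will hold with probability one.

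\textbf{Structure of the instance.} Here $r_T$ is a leaf of $T$, so $\val(T)=1$ and $d_T=0$. The target bounds are therefore $O(N_2)$ for every vertex and $O(\max(N_1,n^\star))$ for root-type vertices. A consistent vertex has type $r_T$ or $\perp$. By constraints 4 and 5 of \cref{def:poly-t-labeling-consistency}, $\LC(v)$ and $\RC(v)$ of a root-type vertex, if present, have type $\perp$. By constraint 6, $\DC$ preserves the type $r_T$. Each vertex has at most one parent, so the $\DC$ edges among consistent root-type vertices form vertex-disjoint directed paths and cycles. A root-type vertex can be colored neither $\X$ (it is a leaf of $T$, item 3(a)) nor $\D$ (item 4). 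So root-type vertices must receive $\U$ or $\Cyc$, and type-$\perp$ vertices can safely receive $(\X,0)$.

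\textbf{Algorithm on query $(u,N_1,N_2)$.}
\begin{itemize}
\item If $u$ is inconsistent, output $(\perp,0)$.
\item If $u$ is consistent of type $\perp$, output $(\X,0)$.
\item If $t(u)=r_T$, walk $u=w_0,w_1=\DC(w_0),\dots$ for at most $N_2$ steps, checking consistency of each vertex with $O(1)$ extra probes.
\begin{itemize}
\item If the walk reaches some $w_j$ with $\DC(w_j)=\perp$ or $\DC(w_j)$ inconsistent, output $\U$ with $b\outsub(u)=\xor_{i\le j} b\insub(w_i)$.
\item If the walk returns to $u$, output $(\Cyc,0)$.
\item If neither happens within $N_2$ steps, output $(\D,0)$.
\end{itemize}
\end{itemize}
The $\LC$ children have type $\perp$ and hence $b\outsub=0$, so $b''=0$ throughout. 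The walk only moves to neighbours, so it is a valid $\Vol$ procedure. It uses $O(\min(N_2,n^\star))$ probes, which gives both parts of (iv). Type-$\perp$ and inconsistent vertices use $O(1)$ probes.

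\textbf{Conditions (i) and (iii).} For (i), let $L(u)$ be the number of steps the walk needs to terminate or close a cycle; take $L(u)=0$ for non-root-type vertices. Set $N_2^\star(u,N_1)=\max(N_1,L(u))$. For $N_2\ge L(u)$ the output does not depend on $N_2$, and below $L(u)$ it is $(\D,0)$. For (iii), $L(u)\le n^\star$, so no root-type vertex outputs $\D$ once $N_2\ge n^\star$.

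\textbf{Condition (ii), the main check.} This is the step I expect to need the most care: every constraint except item 4 must hold for all $N_2$, including the case where $\D$ appears.
\begin{itemize}
\item Items 3 and 5 hold because the $\LC$ and $\RC$ children are inconsistent or of type $\perp$, hence labeled $\perp$ or $\X$, never $\D$.
\item For item 2, suppose $w$ is labeled $\U$. Its walk ended within $N_2$ steps. Then $\DC(w)$ is absent, inconsistent (labeled $\perp$), or a root-type vertex on a path whose own walk is a suffix of $w$'s. That suffix also ends within $N_2$ steps, so $\DC(w)$ is labeled $\U$. In particular $\DC(w)$ is never $\D$.
\item For item 1, all vertices on a directed cycle of length $\ell$ behave identically: all output $\Cyc$ if $\ell\le N_2$ and all output $\D$ otherwise. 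Note that a $\U$ walk never enters a cycle, since its start vertex would then have a second in-neighbour.
\item Finally, the $b\outsub$ equation for a $\U$ vertex $w$ with successor $w'$ reads $b\outsub(w)=b\insub(w)\xor b\outsub(w')$. This holds because $w'$'s walk is exactly the suffix of $w$'s walk, and it also holds trivially at the end of the path.
\end{itemize}
With (i)--(iv) verified, the algorithm is strong.
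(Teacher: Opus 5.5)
Your proposal is correct and follows the paper's proof essentially step for step. It uses the same $\DC$-walk algorithm truncated at $N_2$ steps (type-$\perp$ vertices get $(\X,0)$, inconsistent ones get $(\perp,0)$), the same threshold for stability, the suffix-of-walk argument for items 1, 2 and the bit equation, and the at-most-one-parent structure to bound the walk by $n^\star$ for conditions (iii) and (iv). The only slip is cosmetic: in your remark under item 1, the vertex that would acquire a second in-neighbour is the one where the walk enters the cycle, not the walk's start vertex.
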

\begin{proof}
    The algorithm first performs the same consistency and type-$\perp$ checks as in \cref{lem:poly-upper-bound-base-perp}. Thus inconsistent vertices output $(\perp,0)$, type-$\perp$ consistent vertices output $(\X,0)$, and constraints on these vertices are satisfied. Now suppose $u$ is consistent and $t(u)=r_T$, then colors $\X$ and $\D$ are both prohibited at $u$. With the input parameter $N_2$, the algorithm follows the $\DC$ path from $u$ (i.e. the path $u, \DC(u), \DC(\DC(u)) \dots$) for $N_2$ steps and outputs the first applicable label below:
    \begin{itemize}
        \item If the walk revisits $u$, $u$ is labeled $(\Cyc,0)$.
        \item If the walk reaches a consistent vertex $w$ such that $\DC(w)=\perp$ or $\DC(w)$ is inconsistent, $u$ is labeled $\left(\U,\bigoplus_{z\in P_{u,w}} b\insub(z)\right)$, where $P_{u,w}$ is the $\DC$ path from $u$ to $w$.
        \item If neither event above is found within $N_2$ steps, $u$ is labeled $(\D,0)$.
    \end{itemize}
    
    Now we check the conditions for strong algorithms. For a fixed $N_1$, the threshold $N_2^\star(u, N_1)$ is the smallest value $N_2$ for which the walk from $u$ sees one of the two terminating events, hence condition (i) holds.
    
    Now we check conditions (ii) and (iii). Fix $N_2\ge N_1$, and distinguish different possibilities for a vertex $u$ in the instance. Inconsistent vertices and consistent type-$\perp$ vertices have already been handled. Let $u$ be a consistent vertex with $t(u)=r_T$. If the algorithm outputs $(\D,0)$, then the only possible violation is item 4 in the constraint on $c\outsub$, which is precisely the exception allowed in condition (ii).

    Suppose the algorithm outputs $(\Cyc,0)$ at $u$. Then the $\DC$ walk from $u$ returns to $u$ within $N_2$ steps. For every vertex $z$ on this directed cycle, the same walk starting from $z$ also returns to $z$ within $N_2$ steps before encountering any terminating event, so $z$ is also labeled $(\Cyc,0)$. Hence item 1 in the constraint on $c\outsub$ is satisfied at $u$, and the bit constraint is satisfied because vertices of color $\Cyc$ have output bit zero.

    Finally, suppose the algorithm outputs $\left(\U,\bigoplus_{z\in P_{u,w}}b\insub(z)\right)$ at $u$, where $w$ is the first consistent vertex on the $\DC$ path from $u$ such that $\DC(w)=\perp$ or $\DC(w)$ is inconsistent. If $u=w$, then item 2 in the constraint on $c\outsub$ is immediate. Otherwise, the vertex $\DC(u)$ reaches the same terminating vertex $w$ within fewer steps, so it is also labeled $\U$. Thus item 2 is satisfied at $u$. Since $T=\bullet$, vertices of type $r_T$ have left and right children of type $\perp$ whose output bit must be zero. As a result, \cref{eq:poly-lcl-def} reduces to
    \[
        b\outsub(u)=b\insub(u)\oplus b\outsub(\DC(u)),
    \]
    with $b\outsub(\DC(u))=0$ when $\DC(u)=\perp$ or $\DC(u)$ is inconsistent. The bit given by the algorithm is exactly this recurrence expanded along the path from $u$ to $w$.

    To prove condition (iii), let $N_2\ge\max(N_1,n^\star)$ and $u$ be a consistent vertex with $t(u)=r_T$. Along the $\DC$ walk from $u$, the first repeated consistent vertex must be $u$: if the first repeated vertex were $z\ne u$, then $z$ would have two different predecessors, contradicting the consistency definition in \cref{def:poly-t-labeling-consistency}. Therefore, within at most $n^\star$ steps, the walk either returns to $u$ or reaches a vertex whose $\DC$ port is $\perp$ or points to an inconsistent vertex. Since $N_2\ge n^\star$, the algorithm sees one of these terminating events and does not output $(\D,0)$.
    
    Finally, the walk uses $O(\min(n^\star, N_2))$ probes for every vertex. This is $O(N_2)$ for all vertices and $O(\max(N_1,n^\star))$ for root-type vertices. Since $\alpha = 1$ and $d_T = 0$, condition (iv) holds.
\end{proof}

For the induction step, assume that there are strong $\Vol$ algorithms $\mathcal{A}_L$ and $\mathcal{A}_R$ for $\HC^{T_L}$ and $\HC^{T_R}$ respectively. We also assume that $T\ne\bullet$, so $r_T$ is not a leaf and the vertices of type $r_T$ are not prohibited from using color $\X$.

\subsubsection{Algorithm Description}

The description of the algorithm is given in \cref{alg:poly-upper-bound}. Before explaining the algorithm, we first explain its subroutines, \cref{alg:poly-prepare,alg:poly-collect,alg:poly-traverse}. Inside the algorithm, we define $\DC^j(u)$ as the $j$-th vertex in the $\DC$ path starting from $u$. Namely, $\DC^0(u) = u$ and $\DC^j(u) = \DC(\DC^{j-1}(u))$ for $j \ge 1$.

\begin{algorithm}[p]
\caption{$\Prepare(u,N_1,N)$}
\label{alg:poly-prepare}
\begin{algorithmic}[1]
\ENSURE Either the output label $(c\outsub(u),b\outsub(u))$, or $\perp$ indicating no output.
\IF{$u=\perp$}
    \RETURN $(\X,0)$
\ELSIF{$u$ is inconsistent}
    \RETURN $(\perp,0)$
\ELSIF{$t(u)=\perp$}
    \RETURN $(\X,0)$
\ELSIF{$t(u)\in T_L$}
    \RETURN $\mathcal{A}_L(u,\lceil N_1^{1-\beta} \rceil,N)$
\ELSIF{$t(u)\in T_R$}
    \RETURN $\mathcal{A}_R(u,\lceil N_1^{1-\beta} \rceil, \lceil NN_1^{-\beta} \rceil)$
\ELSE
    \RETURN $\perp$
\ENDIF
\end{algorithmic}
\end{algorithm}

\begin{algorithm}[p]
\caption{$\Collect(u,k,N_1,N)$}
\label{alg:poly-collect}
\begin{algorithmic}[1]
\ENSURE $\perp$ when the algorithm fails, or a bit $b\in\{0,1\}$ when it succeeds.
\STATE Set a probe budget of size $\Theta(NN_1^{\alpha-1} \log^{d_T}n)$ with a sufficiently large constant. Once the whole $\Collect$ procedure uses up the budget, stop all recursions and return $\perp$.
\STATE $b\leftarrow 0$
\FOR{$j$ from $0$ to $k$}
    \STATE $(c',b')\leftarrow \Prepare(\LC(\DC^j(u)),N_1, N)$ \label{line:alg-poly-collect-solve-LC}
    \IF{$(c',b') = (\D, 0)$}
        \RETURN $\perp$
    \ENDIF
    \STATE $b\leftarrow b\oplus b\insub(\DC^j(u))\oplus b'$
\ENDFOR
\RETURN $b$
\end{algorithmic}
\end{algorithm}

\begin{algorithm}[p]
\caption{$\Traverse(u,N_1,N)$}
\label{alg:poly-traverse}
\begin{algorithmic}[1]
\ENSURE A sample set $S_u$ and a backup label $(c_0,b_0)$.
\STATE $S\leftarrow\varnothing$, $(c_0,b_0)\leftarrow(\D,0)$
\FOR{$j$ from $0$ to $\lceil 2N_1^\beta\rceil$}
    \STATE $w\leftarrow \DC^j(u)$
    \IF{$w=\perp$ or $w$ is inconsistent or $(j>0\text{ and }w=u)$} \label{line:alg-poly-traverse-check-terminate}
        \IF{$j>0\text{ and }w=u$}
            \STATE $(c_0,b_0)\leftarrow(\Cyc,0)$
        \ELSIF{$\Collect(u,j-1,N_1,N)\ne\perp$} \label{line:alg-traverse-call-collect}
            \STATE $(c_0,b_0)\leftarrow(\U,\Collect(u,j-1,N_1,N))$
        \ENDIF
        \STATE \textbf{break}
    \ENDIF
    \STATE With probability $\min(1,4N_1^{-\beta}\log n)$, include $j$ in $S$. The same randomness is used between different calls when $N_1$ is the same. \label{line:alg-poly-traverse-sample}
\ENDFOR
\RETURN $S,(c_0,b_0)$
\end{algorithmic}
\end{algorithm}

\begin{algorithm}[p]
\caption{$\SolveHC(u,N_1,N_2)$}
\label{alg:poly-upper-bound}
\begin{algorithmic}[1]
\ENSURE An output labeling $(c\outsub(u),b\outsub(u))$ for $u$.
\IF{$N_1 > N_2$}
    \RETURN $(\D,0)$
\ENDIF
\IF{$\Prepare(u,N_1,2N_2)\ne\perp$}
    \RETURN $\Prepare(u,N_1,2N_2)$
\ENDIF
\STATE $k \leftarrow \lceil \log_2(N_2 / N_1) \rceil$
\FOR{$p$ from $0$ to $k$}
    \IF{$\Try(u, N_1, p) \ne \perp$}
        \RETURN $\Try(u, N_1, p)$
    \ENDIF
\ENDFOR
\RETURN $(\D,0)$
\end{algorithmic}
\end{algorithm}

\begin{algorithm}[p]
\caption{$\Try(u,N_1,p)$}
\label{alg:poly-try}
\begin{algorithmic}[1]
\ENSURE An output labeling $(c\outsub(u),b\outsub(u))$ other than $(\D, 0)$, or $\perp$ to denote a failure
\STATE $N \leftarrow 2^p N_1$
\STATE $S_u,(c_0,b_0)\leftarrow \Traverse(u,N_1,N)$
\STATE $S_{u,p}' \leftarrow \{j\in S_u\mid \Prepare(\RC(\DC^j(u)),N_1, N) \ne(\D,0)\}$ \label{line:alg-poly-try-check-X}
\IF{$S_{u,p}' \ne \varnothing$}
    \STATE $j^\star\leftarrow\min S_{u,p}'$, $w \leftarrow \DC^{j^\star}(u)$
    \STATE $(c^\star, b^\star, B) \leftarrow (\D, 0, \perp)$
    \IF{$j^\star = 0$}
        \STATE $c^\star \leftarrow \X$
        \STATE $(-, b^\star) \leftarrow \Prepare(\RC(w),N_1,N)$
        \STATE $B \leftarrow \Collect(u,0,N_1,N)$
    \ELSE
        \STATE $c^\star \leftarrow \U$
        \IF{$\SolveHC(w, N_1, N/2) \ne (\D, 0)$} \label{line:alg-poly-upper-bound-prev-success-block}
            \STATE $(-, b^\star) \leftarrow \SolveHC(w, N_1, N/2)$
            \STATE $B \leftarrow \Collect(u,j^\star-1,N_1,N)$ 
        \ELSE \label{line:alg-poly-upper-bound-prev-fail-block}
            \STATE $(-,b^\star)\leftarrow \Prepare(\RC(w),N_1, N)$
            \STATE $B \leftarrow \Collect(u,j^\star,N_1,N)$ 
        \ENDIF
    \ENDIF
    \IF{$B = \perp$}
        \RETURN $\perp$
    \ELSE
        \RETURN $(c^\star, b^\star \oplus B)$
    \ENDIF
\ELSIF{$(c_0,b_0)\ne(\D,0)$}
    \RETURN $(c_0,b_0)$ \label{line:alg-poly-upper-bound-output-backup}
\ELSE
    \RETURN $\perp$
\ENDIF
\end{algorithmic}
\end{algorithm}

The algorithm $\Prepare$ described in \cref{alg:poly-prepare} is a subroutine that handles edge cases and sub-instances. For inconsistent vertices or vertices of type $\perp$, we label them in the same way as in the algorithm for $\HC^\perp$ described in \cref{lem:poly-upper-bound-base-perp}. For consistent vertices $u$ where $t(u) \ne r_T$, according to \cref{lem:poly-upper-bound-sub-instance-categorization}, it lies in either a $\HC^{T_L}$ or a $\HC^{T_R}$ sub-instance. We use $\mathcal{A}_L$ and $\mathcal{A}_R$ to solve them. Notice that parameter $N_2$ is different between $\mathcal{A}_L$ and $\mathcal{A}_R$ calls: $\HC^{T_R}$ sub-instances will be solved in a best-effort basis, meaning that if the sub-instance size is larger than $\lceil N N_1^{-\beta} \rceil$, it is allowed to label $(\D, 0)$; Instead, the maximum possible size for $\HC^{T_L}$ sub-instances is a conservative value $N$.

The algorithm $\Collect$ described in \cref{alg:poly-collect} is used when the algorithm plans to color the $\DC$ path from $u$ to $\DC^{k-1}(u)$ with $\U$ and color $\DC^k(u)$ with $\U$ or $\X$. When we follow the coloring plan, expanding $b\outsub(u)$ using \cref{eq:poly-lcl-def} along the path, we have that 
\[
    b\outsub(u)=b'(\DC^k(u)) \oplus \left(\bigoplus_{j=0}^k
        \left(b\insub(\DC^j(u))\oplus b''(\DC^j(u))\right)\right),
\]
and $\Collect$ tries to compute the term $\left(\bigoplus_{j=0}^k \left(b\insub(\DC^j(u))\oplus b''(\DC^j(u))\right)\right)$. This requires $\Collect$ to get the output bit for $\LC(\DC^j(u))$ for each $0 \le j \le k$ if it exists. Each of these vertices is either handled directly by $\Prepare$ or lies in a $\HC^{T_L}$ sub-instance, so we call $\Prepare$ with size parameters $(N_1, N)$ to get them.

To avoid wasting too many probes when the total size of the attached $\HC^{T_L}$ sub-instances is much larger than $N$, we set up a probe budget of size $\tilde \Theta(NN_1^{\alpha-1})$. Once those calls use up the probe budget, we immediately stop this process and all recursions generated by it, returning $\perp$ to indicate a failure. Another possibility of failure is when any $\Prepare$ call returns $(\D, 0)$: in this case, we cannot execute the original coloring plan for the $\DC$ path due to item 5 in the constraint on $c\outsub$ (a vertex of color $\X$ or $\U$ cannot have a left child of color $\D$). In this case, $\Collect$ also returns $\perp$.

The algorithm $\Traverse$ described in \cref{alg:poly-traverse} travels along the $\DC$ path from $u$ for at most $\lceil 2N_1^\beta \rceil$ steps. For each of the vertices traveled, we include it in a sample set $S$ with probability $\min(1,4 N_1^{-\beta} \log n)$. This sampling should be consistent across different $\Traverse$ calls with the same $N_1$, meaning that for two calls $\Traverse(u, N_1, N)$ and $\Traverse(v, N_1, N')$ that both visit $w$, $w$ will be in both or neither of the sample sets. Apart from the sample set $S$, like the algorithm in \cref{lem:poly-upper-bound-base-bullet} for $\HC^\bullet$, $\Traverse$ produces a label $(c_0,b_0)$ for $u$ when the walk returns itself or reaches a terminating vertex. 

Finally, $\SolveHC$ described in \cref{alg:poly-upper-bound} first calls $\Prepare$ to deal with edge cases and sub-instances. Vertices left are consistent vertices with type $r_T$ due to \cref{lem:poly-upper-bound-sub-instance-categorization}. For each of them, $\SolveHC$ uses binary lifting to guess $n^\star$, the actual size of the current $\HC^T$ (sub-)instance, from $N_1$ to $2N_2$. For each guess $p \in [0, \lceil \log_2 (N_2 / N_1) \rceil]$, $\SolveHC$ calls $\Try(u, N_1, p)$ described in \cref{alg:poly-try} to see whether the size guess $N = 2^pN_1$ can generate for $u$ an output label other than $(\D, 0)$. If none of them work, the algorithm then has a high confidence that $N_2$ is actually much smaller than $n^\star$. In this case, condition (ii) in the definition of a strong algorithm allows $\SolveHC$ to give output label $(\D, 0)$ for $u$, and it outputs $(\D, 0)$. For the other case, if any of the $\Try$ calls succeeds, it gives the label from the first successful $\Try$ call to $u$.

Each $\Try(u,N_1,p)$ call tries to generate an output label other than $(\D, 0)$ for $u$ with size guess $N = 2^p N_1$. It first gets the sample set $S_u$ and a backup label $(c_0,b_0)$ from $\Traverse$. Notice that the set $S_u$ does not carry a $p$ subscript because, for fixed $u$ and $N_1$, the walk inspected by $\Traverse$ and the sampling choices are independent of the size guess $N=2^pN_1$. Indices in $S_u$ are candidates for an output color $\X$. To color a candidate $\DC^j(u)$ with $\X$, item 3(b) in $c\outsub$ constraints requires the algorithm to give an output label other than $(\D, 0)$ to $\RC(\DC^j(u))$. $\RC(\DC^j(u))$ is either handled directly by $\Prepare$ or lies in a $\HC^{T_R}$ sub-instance. As a result, for each $j \in S_u$, we check whether $\Prepare(\RC(\DC^j(u)),N_1, N)$ returns an output label other than $(\D, 0)$. We collect all such indices to the set $S_{u,p}'$ in line \ref{line:alg-poly-try-check-X}.

$N$ may be too small, or we are simply unlucky, and $S_{u,p}'$ may then be empty. In this case, we enter line \ref{line:alg-poly-upper-bound-output-backup} and try to output the backup label if it is not $(\D, 0)$. If the backup label is also $(\D, 0)$, then we indicate a failure by returning $\perp$.

If $S_{u,p}'$ is not empty, we pick $j^\star$ as the minimum in $S_{u,p}'$, and plan to color $w = \DC^{j^\star}(u)$ with $\X$ and the path between $\P(w)$ and $u$ with $\U$ when $w \ne u$. For the case where $w = u$, we use $\Collect(u,0,N_1,N)$ to compute the contribution of $b\insub(v)$ and $b''(v)$ defined in \cref{eq:poly-b''-def}, denoted by $B$.

When $w \ne u$, to ensure local consistency, we need to make sure that $w$, whose output label is generated by an independent query $\SolveHC(w, N_1, N_2)$, agrees with this plan, thinking that it is colored $\X$. To address this issue, we will prove that if $\SolveHC(w, N_1, N_2)$ does not follow the coloring plan, then $\SolveHC(w, N_1, N_2)$ must get a label from a smaller size guess. This is why in line \ref{line:alg-poly-upper-bound-prev-success-block}, we call $\SolveHC(w, N_1, N/2)$ to see whether this is the case. 

If the call does not return $(\D, 0)$, then we know the output label for $w$; otherwise, we know that $\SolveHC(w, N_1, N_2)$ will follow the same coloring plan and color $w$ with $\X$. In both cases, we will color the path between $\P(w)$ and $u$ with $\U$ and use $\Collect$ to get their contributions to $b\outsub(u)$. When the $\Collect$ call fails, $\Try$ will return $\perp$.

\subsection{Algorithm Analysis}

Throughout the analysis, we assume the $\HC^T$ instance $I$ of size $n^\star$, $N_1$, and $n \ge N_1$ are arbitrary but fixed. Define $n_0$ as the number of consistent vertices of type $r_T$, sequences $(I_{l,1},I_{l,2},\dots,I_{l,n_l})$ and $(I_{r,1},I_{r,2},\dots,I_{r,n_r})$ as the sequences of $\HC^{T_L}$ and $\HC^{T_R}$ sub-instances, and $(s_{l,1},s_{l,2},\dots,s_{l,n_l})$, $(s_{r,1},s_{r,2},\dots,s_{r,n_r})$ as their size sequences. According to \cref{lem:poly-upper-bound-sub-instance-categorization}, $n^\star \ge n_0 + \sum_{i=1}^{n_l} s_{l,i} + \sum_{i=1}^{n_r} s_{r,i}.$

We apply induction hypothesis for $\mathcal{A}_L$ and $\mathcal{A}_R$ to each $\HC^{T_L}$ and $\HC^{T_R}$ sub-instance. This means that,
\begin{itemize}
    \item For each $1 \le i \le n_l$, the output labeling given by $\SolveHC(\star, N_1, N_2)$ for $I_{l,i}$, which is further the output labeling given by $\mathcal{A}_L(\star, \lceil N_1^{1-\beta} \rceil, 2N_2)$, satisfies the four conditions given in \cref{def:poly-strong-algo} with probability $1-s_{l,i}/n^2$;
    \item For each $1 \le i \le n_r$, the output labeling given by $\SolveHC(\star, N_1, N_2)$ for $I_{r,i}$, which is further the output labeling given by $\mathcal{A}_R(\star, \lceil N_1^{1-\beta} \rceil, \lceil 2N_2 N_1^{-\beta} \rceil)$, satisfies the four conditions given in \cref{def:poly-strong-algo} with probability $1-s_{r,i}/n^2$;
\end{itemize}
We additionally require the following event to occur for every consistent vertex of type $r_T$:

\begin{lemma}\label{lem:poly-sampling-success}
    For every consistent vertex $u$ of type $r_T$, the following event happens with probability $1-1/n^2$: For every $N \ge N_1$, if the call $\Traverse(u, N_1, N)$ never enters the if-block at line \ref{line:alg-poly-traverse-check-terminate} (i.e. within $\lceil 2N_1^\beta\rceil$ steps, the walk never returns to itself or visits a vertex $w=\perp$ or an inconsistent vertex $w$), then the call returns a sample set $S$ of size $O(\log n)$ in which there exists $j$ such that the $\HC^{T_R}$ instance attached to $\DC^j(u)$ has size at most $n^\star/N_1^\beta$. For fixed $u$ and $N_1$, this sample set is the same for all values of $N$.
\end{lemma}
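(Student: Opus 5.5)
The plan is to notice that the walk performed by $\Traverse(u,N_1,N)$ and its sampling decisions do not depend on $N$. The walk follows $\DC^j(u)$ for $j\le\lceil 2N_1^\beta\rceil$, and the termination test at line \ref{line:alg-poly-traverse-check-terminate} depends only on the graph. The only place $N$ enters is the $\Collect$ call inside the termination block, and under our hypothesis that block is never entered. The sampling randomness at line \ref{line:alg-poly-traverse-sample} is shared across calls with the same $N_1$. Hence, whenever the if-block is not entered, the returned $S$ is a fixed random set determined by $u$ and $N_1$. This gives the last sentence of the statement, and it lets us prove the probabilistic claim for a single random set rather than for every $N\ge N_1$ separately.

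Fix $u$ and assume the walk makes all $m=\lceil 2N_1^\beta\rceil+1$ steps without termination. The visited vertices $w_j=\DC^j(u)$, $0\le j\le m-1$, are then pairwise distinct: the first repeated vertex along a $\DC$ walk must be $u$, by the in-degree argument in \cref{lem:poly-upper-bound-base-bullet}, and returning to $u$ is excluded. All $w_j$ are consistent of type $r_T$, by constraint 6 of \cref{def:poly-t-labeling-consistency}.

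Next we bound the sizes of the attached $\HC^{T_R}$ instances. By condition 2 of \cref{def:poly-t-labeling-consistency} and \cref{lem:poly-upper-bound-sub-instance-categorization}, the $\HC^{T_R}$ sub-instances attached to distinct $w_j$ are distinct components. The only subtle point is that two different root-type vertices could share a sub-instance; this is impossible because the attached vertex $\RC(w_j)$ has $\P(\RC(w_j))=w_j$. So these sub-instances are disjoint, and their total size is at most $n^\star$. Call $j$ \emph{good} if the $\HC^{T_R}$ instance attached to $w_j$ is empty or has size at most $n^\star/N_1^\beta$. By Markov/averaging, fewer than $N_1^\beta$ indices are bad, so at least $m-N_1^\beta\ge N_1^\beta$ indices are good.

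Now the probability bounds. If $4N_1^{-\beta}\log n\ge 1$, every index is sampled, so $|S|=m=O(\log n)$ and a good index lies in $S$. Otherwise each index is included independently with probability $p=4N_1^{-\beta}\log n$. The probability that no good index is sampled is at most $(1-p)^{N_1^\beta}\le e^{-4\log n}\le 1/(2n^2)$ (taking $\log$ natural, or adjusting constants). For the size, $\mathbb{E}|S|\le mp=O(\log n)$, so a Chernoff bound gives $|S|\le C\log n$ except with probability $1/(2n^2)$ for a large enough constant $C$. A union bound then gives probability $1-1/n^2$ for this single set, and hence, by the $N$-independence above, simultaneously for all $N$.

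The main obstacle is making rigorous the independence of sampling across indices and its sharing across calls. The cleanest way is to define the sample bit for index $j$ from the local randomness of the vertex $w_j$ (for the given $N_1$), since the $w_j$ are distinct. A secondary point is that there is no issue if $\RC(w_j)$ is $\perp$ or inconsistent: we count that as a size-$0$ attached instance, which is trivially good.
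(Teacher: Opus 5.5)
Your proof is correct and follows essentially the same route as the paper: $N$-independence of the walk and of the shared sampling, distinctness of the $\lceil 2N_1^\beta\rceil+1$ visited root-type vertices, disjointness of their attached $\HC^{T_R}$ sub-instances giving at least $N_1^\beta$ indices of size at most $n^\star/N_1^\beta$, and then the $(1-\delta)^{N_1^\beta}$ and Chernoff bounds combined by a union bound. Your extra details (treating a missing or inconsistent $\RC(w_j)$ as size zero, and deriving the sample bits from the local randomness of $w_j$) go beyond what the paper writes. The one step to tighten is disjointness: $\P(\RC(w_j))=w_j$ alone only shows that the attached roots are distinct vertices, so you also need that each component of $G_I$ contains at most one vertex whose parent lies outside $V_I$, which follows from a short count of parent edges in $G_I$.
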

\begin{proof}
    The behavior of $\Traverse(u,N_1,N)$ before entering the if-block in line \ref{line:alg-poly-traverse-check-terminate} is independent of $N$, and the sampling probability is also independent of $N$. Hence it is enough to analyze the single random sample set on the first $\lceil 2N_1^\beta\rceil+1$ positions of the $\DC$ walk from $u$, and all $\Traverse(u, N_1, N)$ calls will give the same sampling set.

    Let $\delta = \min(1,4N_1^{-\beta}\log n)$ be the sampling probability. If $\delta=1$, then $N_1^\beta=O(\log n)$, so the sample set has size $O(\log n)$. Suppose now that $\delta<1$. We have $\mathbb{E}[|S|] = (\lceil 2N_1^\beta\rceil+1)(4N_1^{-\beta}\log n) = \Theta(\log n),$
    and a Chernoff bound implies that the sample size is $O(\log n)$ with probability at least $1-1/(2n^2)$.

    Now assume that $\Traverse(u,N_1,N)$ never enters the if-block. Then the vertices $\DC^j(u)$ for $0\le j\le\lceil 2N_1^\beta\rceil$ are distinct consistent vertices of type $r_T$. Consider the $\HC^{T_R}$ sub-instances attached to them. Since these sub-instances are vertex-disjoint and all contained in $I$, their total size is at most $n^\star$. This means that at least $N_1^\beta$ of them have a $\HC^{T_R}$ sub-instance of size at most $n^\star/N_1^\beta$ attached. The probability that none of these vertices is sampled is at most $(1-\delta)^{N_1^\beta} \le \exp(-4\log n) \le \frac{1}{2n^2}$ for all sufficiently large $n$. A union bound over two events proves the lemma.
\end{proof}

A union bound shows that with probability at least $1-n^\star/n^2$, all sub-instances have an output label satisfying \cref{def:poly-strong-algo}, and for every consistent vertex $u$ of type $r_T$, the event in \cref{lem:poly-sampling-success} occurs. In the rest of the upper bound analysis, all lemmas assume that these events happen. 

Now we verify that $\SolveHC$ follows all conditions in \cref{def:poly-strong-algo}. We first verify stability.

\begin{lemma}\label{lem:poly-upper-bound-stability}
    $\SolveHC$ satisfies condition (i) in \cref{def:poly-strong-algo}.
\end{lemma}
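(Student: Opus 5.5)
We need to show that for each vertex $u$ and each fixed $N_1$ there is a threshold $N_2^\star(u,N_1)\ge N_1$: below it $\SolveHC(u,N_1,N_2)$ returns $(\D,0)$, and at or above it the output no longer depends on $N_2$. The plan is to first dispose of the vertices handled by $\Prepare$, then analyze the binary-lifting loop over the size guesses $p$.

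\textbf{Prepare-handled vertices.} Suppose $u$ is $\perp$, inconsistent, or of type $\perp$. Then $\Prepare(u,N_1,2N_2)$ returns a label independent of $N_2$ that is not $(\D,0)$. For every $N_2\ge N_1$, $\SolveHC$ returns this label, so we take $N_2^\star(u,N_1)=N_1$. (When $N_2<N_1$, the first line already returns $(\D,0)$, so this range never matters.)

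Suppose instead $t(u)\in T_L$. Then the output is $\mathcal{A}_L(u,\lceil N_1^{1-\beta}\rceil,2N_2)$. By the induction hypothesis (condition (i) for $\mathcal{A}_L$) there is a threshold $M^\star$ in the parameter $2N_2$, so $N_2^\star=\max(N_1,\lceil M^\star/2\rceil)$ works.

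The case $t(u)\in T_R$ is analogous with the parameter $\lceil 2N_2N_1^{-\beta}\rceil$. This parameter is monotone nondecreasing in $N_2$, so "output is $(\D,0)$ exactly when the parameter is below $M^\star$" again translates into a threshold in $N_2$. One subtlety: several $N_2$ values give the same parameter, but monotonicity still yields a single cut point. If $\mathcal{A}_R$ always outputs $(\D,0)$, set $N_2^\star=\infty$; the definition allows this.

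\textbf{Consistent vertices of type $r_T$.} Here the key observation is that $\Try(u,N_1,p)$ depends only on $(u,N_1,p)$ and not on $N_2$. It calls $\Traverse$, $\Prepare$ and $\Collect$ with $N=2^pN_1$, and also $\SolveHC(w,N_1,N/2)$, none of which involves $N_2$. The loop in $\SolveHC$ runs $p=0,\dots,\lceil\log_2(N_2/N_1)\rceil$ and returns the first non-$\perp$ value. Let $p^\star$ be the smallest $p$ with $\Try(u,N_1,p)\ne\perp$, taking $p^\star=\infty$ if none exists. Then:
\begin{itemize}
\item the output is $(\D,0)$ iff $\lceil\log_2(N_2/N_1)\rceil<p^\star$, that is, iff $N_2\le 2^{p^\star-1}N_1$;
\item otherwise the output is $\Try(u,N_1,p^\star)$, which is independent of $N_2$.
\end{itemize}
Hence $N_2^\star(u,N_1)=\max(N_1,\,2^{p^\star-1}N_1+1)$ (or $\infty$ if $p^\star=\infty$), and condition (i) follows.

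It remains to check that a successful $\Try$ never returns $(\D,0)$, since otherwise the dichotomy above would break. This holds because the returned color is $\X$, $\U$, or the backup label $(c_0,b_0)$, and the backup is returned only when it is not $(\D,0)$.

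\textbf{Main obstacle.} The only delicate point is confirming that the recursive call $\SolveHC(w,N_1,N/2)$ inside $\Try$ is independent of the outer $N_2$. It is, since its arguments are $w$, $N_1$ and $N/2=2^{p-1}N_1$. Its internal probe budgets are likewise functions of $N$ and $N_1$ only. Termination of this recursion on nested size guesses ($N$ halves each time) makes the whole execution a deterministic function of $(u,N_1,p)$ given the randomness. That is all condition (i) requires, so no probabilistic events are needed here beyond fixing $\mathcal{R}$.
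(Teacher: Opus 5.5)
Your proposal is correct and follows the paper's proof essentially step for step: vertices handled by $\Prepare$ inherit a threshold from condition (i) of $\mathcal{A}_L$ or $\mathcal{A}_R$ through the monotone third parameter, and for root-type vertices the $N_2$-independence of $\Try(u,N_1,p)$ lets the first successful guess $p^\star$ fix the threshold. One cosmetic slip: your closed form $\max(N_1,2^{p^\star-1}N_1+1)$ gives the non-integer $3/2$ when $p^\star=0$ and $N_1=1$, where the threshold should be $1$, so state it instead as the smallest integer $N_2\ge N_1$ exceeding $2^{p^\star-1}N_1$, which is what the paper's phrasing amounts to.
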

\begin{proof}
    Fix a vertex $u$ and the first size parameter $N_1$. If $u=\perp$, if $u$ is inconsistent, or if $u$ is consistent with $t(u)=\perp$, then $\Prepare(u,N_1,2N_2)$ returns an output independent of $N_2$. Thus we may take $N_2^\star(u,N_1)=N_1$.

    Suppose next that $u$ lies in a $\HC^{T_L}$ sub-instance. Then $\SolveHC(u,N_1,N_2)$ returns the output of $\mathcal{A}_L(u,\lceil N_1^{1-\beta}\rceil,2N_2)$. By condition (i) for $\mathcal{A}_L$, there is a threshold $M^\star$ for the third parameter of $\mathcal{A}_L$. Taking $N_2^\star(u,N_1)$ to be the smallest integer $N_2\ge N_1$ with $2N_2\ge M^\star$ gives the desired statement. The argument for $u$ in a $\HC^{T_R}$ sub-instance is identical, using the threshold $M^\star$ for $\mathcal{A}_R(u,\lceil N_1^{1-\beta}\rceil,\star)$ and taking the smallest $N_2\ge N_1$ such that $\lceil 2N_2N_1^{-\beta}\rceil\ge M^\star$.

    It remains to consider a consistent vertex $u$ with $t(u)=r_T$. For every integer $p\ge 0$, the output of $\Try(u,N_1,p)$ depends only on $u,N_1,p$ and the fixed randomness, and not on the value of $N_2$. If there exists $p$ such that $\Try(u, N_1, p) \ne \perp$, take the minimum of them, denoted as $p^\star$, and take $N_2^\star(u, N_1)$ to be the smallest $N_2 \ge N_1$ such that $\SolveHC(u, N_1, N_2)$ calls $\Try(u,N_1, p^\star)$; Otherwise the output label will be $(\D, 0)$ whatever the value of $N_2$. Both cases satisfy condition (i).
\end{proof}

For condition (ii), we need the following lemma showing that $\Collect$ behaves correctly when the parameter $N$ is large enough.

\begin{lemma}\label{lem:poly-collect-success}
    Fix $N_2$, so the output labeling for $u$ is generated from $\SolveHC(u, N_1, N_2)$. During the execution of $\SolveHC(u,N_1,N_2)$ for some consistent vertex $u$ of type $r_T$, if a call $\Collect(u,k,N_1,N)$ returns a bit $B$, then
    \[
        B=\bigoplus_{j=0}^k
        \left(b\insub(\DC^j(u))\oplus b''(\DC^j(u))\right),
    \]
    where $b''$ is defined in \cref{eq:poly-b''-def}. Moreover, if the total size of the $\HC^{T_L}$ sub-instances attached to $\DC^j(u)$ for $0\le j\le k$ is at most $N$, then $\Collect(u,k,N_1,N)$ returns a bit rather than $\perp$.
\end{lemma}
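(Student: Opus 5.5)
The plan is to treat the two claims separately: first correctness of any returned bit, then success of the call when the attached $\HC^{T_L}$ sub-instances are small.

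\textbf{Value of a returned bit.} By the loop in \cref{alg:poly-collect}, a returned bit equals $\bigoplus_{j=0}^k (b\insub(\DC^j(u)) \oplus b'_j)$, where $(c'_j,b'_j)=\Prepare(\LC(\DC^j(u)),N_1,N)$. It then suffices to show $b'_j = b''(\DC^j(u))$ for each $j$, that is, that $b'_j$ is the output bit the final labeling assigns to $\LC(\DC^j(u))$. I will split into cases according to \cref{lem:poly-upper-bound-sub-instance-categorization}.

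If $\LC(\DC^j(u))=\perp$, then $\Prepare$ returns $(\X,0)$, matching $b''=0$. If the vertex is inconsistent or has type $\perp$, then $\Prepare$ returns the same label that $\SolveHC$ gives that vertex, since $\SolveHC$ also starts with $\Prepare$ and these branches do not depend on the size parameters.

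The remaining case is $t(\LC(\DC^j(u)))\in T_L$; it cannot be $r_T$ or $T_R$ by constraint 4 of \cref{def:poly-t-labeling-consistency}. Here $\Prepare$ returns $\mathcal{A}_L(\cdot,\lceil N_1^{1-\beta}\rceil,N)$, while the actual label comes from $\mathcal{A}_L(\cdot,\lceil N_1^{1-\beta}\rceil,2N_2)$. Every $N$ used inside $\SolveHC(u,N_1,N_2)$ is at most $2N_2$: in $\Try$ we have $N=2^pN_1$ with $p\le\lceil\log_2(N_2/N_1)\rceil$, and the nested calls $\SolveHC(w,N_1,N/2)$ use even smaller $N$. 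Since the returned label is not $(\D,0)$ (otherwise $\Collect$ would have returned $\perp$), condition (i) (stability) of $\mathcal{A}_L$ shows that $N\ge N_2^\star$. Hence the label is unchanged at $2N_2\ge N$, so $b'_j$ is the true output bit.

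\textbf{Success when the sub-instances are small.} Let each attached $\HC^{T_L}$ sub-instance $I_{l,i}$ have size $s_{l,i}$, with total size at most $N$. I need to show two things.
\begin{itemize}
\item No $\Prepare$ call returns $(\D,0)$. Roots of $I_{l,i}$ are consistent vertices of type $r_{T_L}$, and $N\ge s_{l,i}$ and $N\ge\lceil N_1^{1-\beta}\rceil$. So condition (iii) of $\mathcal{A}_L$ forbids the label $(\D,0)$. Non-sub-instance cases never return $(\D,0)$.
\item The budget is not exhausted. By \cref{eq:poly-strong-algo-complexity-root}, a call on the root of $I_{l,i}$ costs
\[
O\bigl(\max(N_1^{1-\beta},s_{l,i})\,N_1^{(1-\beta)(p-1)}\log^{d_{T_L}}n\bigr).
\]
By \cref{eq:poly-upper-bound-(p-1)(1-beta)}, $N_1^{(1-\beta)(p-1)}=N_1^{\alpha-1}$. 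Summing over the $k+1$ indices, the $s_{l,i}$ terms sum to at most $N$. The $N_1^{1-\beta}$ terms sum to $(k+1)N_1^{1-\beta}$, which I bound using $k\le\lceil 2N_1^\beta\rceil$ (the lengths used by $\Traverse$ and $\Try$) to get $O(N_1)\le O(N)$. Walking the path itself adds only $O(k)=O(N)$ probes. With $d_{T_L}<d_T$, the total is $O(NN_1^{\alpha-1}\log^{d_T}n)$, which fits the budget for a large enough constant.
\end{itemize}

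The main obstacle is that the lemma is stated for arbitrary $k$, while the budget argument needs $k=O(N_1^\beta)$ and that each $\LC$ vertex is actually the root of its sub-instance. I would verify both from the call sites. For the first, every $\Collect$ call uses $k\le\lceil 2N_1^\beta\rceil$. For the second, by consistency $\LC(\DC^j(u))$ has parent $\DC^j(u)$, whose type is $r_T$ and so lies outside the sub-instance; hence $\LC(\DC^j(u))$ is its root.

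If the root bound is not directly applicable, I will fall back to \cref{eq:poly-strong-algo-complexity-universal}. This gives the per-call cost $N\cdot N_1^{\alpha-1}$ up to log factors, but summing it over $k$ calls is too large. So the root-specific bound \cref{eq:poly-strong-algo-complexity-root} is the key ingredient.
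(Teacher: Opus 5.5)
Your proposal is correct and follows the paper's proof essentially step for step: stability (condition (i)) of $\mathcal{A}_L$ together with $N\le 2N_2$ identifies each $\Prepare$ bit with $b''(\DC^j(u))$, and condition (iii) plus the root-type bound \cref{eq:poly-strong-algo-complexity-root}, summed using $k=O(N_1^\beta)$ and the disjointness of the attached sub-instances, shows that the call succeeds within the budget. One small slip to fix: bound the cost of walking the path by $O(N_1^\beta)\le O(N_1^\alpha)\le O(NN_1^{\alpha-1})$ (using $\beta\le\alpha$ and $N\ge N_1$), not by $O(N)$, since an additive $O(N)$ does not fit the budget; this matters when $T_L=\perp$, where there are no $\mathcal{A}_L$ calls and the paper uses exactly that $\alpha=\beta$ in this case.
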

\begin{proof}
    For each call $\Collect(u,k,N_1,N)$ in $\SolveHC(u,N_1,N_2)$, we first have $N_1 \le N \le 2N_2$. Additionally, $k \le \lceil 2N_1^\beta \rceil$ and $\DC^j(u)$ is consistent for each $0 \le j \le k$: whenever $\Collect$ is called from $\Traverse$, all these vertices were visited before the first terminating event, and whenever $\Collect$ is called from $\Try$, $j^\star \in S_u$ already indicates that $j^\star \le \lceil 2N_1^\beta \rceil$ and all indices between zero and $j^\star$ passes the termination check in line \ref{line:alg-poly-traverse-check-terminate} of $\Traverse$.

    For the first claim, $\Collect$ uses $\Prepare$ with size parameter $(N_1, N)$ to obtain the output bit of each $\LC(\DC^j(u))$, while the output label of $\LC(\DC^j(u))$ is generated by $\Prepare(\LC(\DC^j(u)), N_1, 2N_2)$ inside $\SolveHC(\LC(\DC^j(u)), N_1, N_2)$. If $\LC(\DC^j(u))$ falls into an edge case handled directly by $\Prepare$, its output is independent of the size parameter. Otherwise, condition (i) for $\mathcal{A}_L$ implies that, if $\Prepare(\LC(\DC^j(u)), N_1, N)$ returns an output label other than $(\D, 0)$, this label must be the final output label for $\LC(\DC^j(u))$. When $\LC(\DC^j(u)) = \perp$, $\Prepare$ outputs $(\X,0)$. In sum, no matter when $\Collect$ returns a bit, every $\Prepare$ call has returned an output label other than $(\D,0)$, and the output bit is indeed $b''(\DC^j(u))$ defined in \cref{eq:poly-b''-def}. If a call returns $(\D,0)$, then $\Collect$ fails and returns $\perp$ instead of a bit. This concludes the first claim.

    For the second claim, for each $0 \le j \le k$, $\Collect(u, N_1, N)$ generates a call $\mathcal{A}_L(\LC(\DC^j(u)), \lceil N_1^{1-\beta}\rceil, N)$ when $\LC(\DC^j(u))$ does not fall into the edge cases. For each $0 \le j \le k$, we define $m_j$ to be the size of the $\HC^{T_L}$ sub-instance attached to $\DC^j(u)$, and $m_j = 0$ when it does not exist. These sub-instances are vertex-disjoint, so we have $\sum_{j=0}^k m_j \le N$. This means that $m_j \le N$ for each $0 \le j \le k$, and additionally $\lceil N_1^{1-\beta}\rceil\le N_1\le N$, so we can apply condition (iii) for $\mathcal{A}_L$ to show that all of the $\mathcal{A}_L$ calls return labels other than $(\D, 0)$. 
    
    We need to show further that the probe budget will not be used up by these $\mathcal{A}_L$ calls. If $T_L=\perp$, then each call uses $O(1)$ probes, and the total cost is $O(N_1^\beta) = O(NN_1^{\alpha-1})$ because $N\ge N_1$ and $\alpha = \beta$ when $p = 0$. The case $T_L = \bullet$ is impossible since then $p = 1 \ge q$, contradicting the goodness of $T$. In all remaining cases, the vertices queried are consistent vertices of type $r_{T_L}$ in $\HC^{T_L}$ sub-instances, so we apply \cref{eq:poly-strong-algo-complexity-root} in condition (iv), and the total number of probes is
    \begin{align*}
      & \sum_{j=0}^k O\left(\max(\lceil N_1^{1-\beta}\rceil, m_j)\lceil N_1^{1-\beta}\rceil^{p-1} \log^{d_{T_L}} n\right) \\
     = & O\left(\left(N_1 \cdot N_1^{(1-\beta)(p-1)}+N N_1^{(1-\beta)(p-1)}\right)\log^{d_T} n\right) \tag{$\lceil N_1^{1-\beta}\rceil=\Theta(N_1^{1-\beta})$, $k=O(N_1^\beta)$ and $\sum_{j=0}^k m_j\le N$} \\
     = & O \left(N N_1^{(p-1)(1-\beta)} \log^{d_T} n\right) \tag{$N \ge N_1$} \\
     = & O \left(N N_1^{\alpha - 1} \log^{d_T} n\right) \tag{\cref{eq:poly-upper-bound-(p-1)(1-beta)}}
    \end{align*}
    This is within the probe budget when we set the probe budget constant in $\Collect$ to be sufficiently large.
\end{proof}

We will prove condition (ii) for $\SolveHC$ by considering each constraint in $\HC^T$ one by one. During the analysis of condition (ii), we will fix the value $N_2$. According to the induction hypothesis, constraints are satisfied on inconsistent vertices, vertices of type $\perp$, and vertices in a sub-instance, so it remains to consider constraints touching consistent vertices of type $r_T$. If the output label for such a vertex $u$ is $(\D, 0)$, then it only violates item 4 in $c\outsub$ constraints, which is allowed for condition (ii). So we will focus on cases where $u$ receives a label other than $(\D, 0)$. 

For each consistent vertex $u$ of type $r_T$ that does not receive output label $(\D, 0)$, define $p_u^\star$ as the minimum nonnegative integer $p$ such that $\Try(u,N_1, p)$ does not return $\perp$ and define $N_u=2^{p_u^\star}N_1$. According to the algorithm description, $p_u^\star$ exists and is bounded by $\lceil \log_2 (N_2 / N_1) \rceil$, and $\SolveHC(u,N_1,N_2)$ returns the output label from $\Try(u,N_1, p_u^\star)$. Now we distinguish different possibilities in $\Try(u, N_1, p_u^\star)$ that generate an output label for $u$, and show that in all these cases, constraints on $u$ are satisfied. We start with an observation around $S_{u,p}'$ in $\Try$.

\begin{observation}\label{obs:poly-upper-bound-sample-set-monotone}
    For a consistent vertex $u$ of type $r_T$ and $p \ge 0$, $S_{u,p}' \subseteq S_{u,p+1}'$, where $S_{u,p}'$ is defined in line \ref{line:alg-poly-try-check-X} of $\Try(u, N_1, p)$.
\end{observation}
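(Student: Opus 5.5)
The set $S_u$ returned by $\Traverse(u,N_1,N)$ does not depend on $N$. The walk along the $\DC$ path, the termination check in line \ref{line:alg-poly-traverse-check-terminate}, and the sampling in line \ref{line:alg-poly-traverse-sample} all use only $u$, $N_1$ and the fixed randomness. So $\Try(u,N_1,p)$ and $\Try(u,N_1,p+1)$ filter the same set $S_u$. The only difference is the filtering condition in line \ref{line:alg-poly-try-check-X}: with $N=2^pN_1$ versus $N'=2^{p+1}N_1=2N$, we test whether $\Prepare(\RC(\DC^j(u)),N_1,N)\ne(\D,0)$. It therefore suffices to show that, for each fixed vertex $v=\RC(\DC^j(u))$, the map $N\mapsto \Prepare(v,N_1,N)$ never changes from a non-$(\D,0)$ label back to $(\D,0)$ as $N$ grows.

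We check this by following the case analysis of \cref{alg:poly-prepare}. If $v=\perp$, $v$ is inconsistent, or $t(v)=\perp$, the output is $(\X,0)$ or $(\perp,0)$ independent of $N$, and in particular it is never $(\D,0)$. Since $t(\DC^j(u))=r_T$ and $v$ is its right child, consistency (constraint 5 of \cref{def:poly-t-labeling-consistency}) gives $t(v)\in T_R$ in the remaining case. Then $\Prepare$ returns $\mathcal{A}_R(v,\lceil N_1^{1-\beta}\rceil,\lceil NN_1^{-\beta}\rceil)$. The first size parameter is fixed, and the third, $\lceil NN_1^{-\beta}\rceil$, is nondecreasing in $N$.

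Now apply condition (i) (stability over $N_2$) of \cref{def:poly-strong-algo} to $\mathcal{A}_R$ on the $\HC^{T_R}$ sub-instance containing $v$. This is legitimate because we are working under the assumed high-probability event that all sub-instances satisfy the strong-algorithm conditions. Condition (i) supplies a threshold $M^\star$: the output is $(\D,0)$ for third parameter below $M^\star$, and a fixed label otherwise. Hence if the output is not $(\D,0)$ at parameter $\lceil NN_1^{-\beta}\rceil$, that parameter is at least $M^\star$. The larger parameter $\lceil 2NN_1^{-\beta}\rceil$ is then also at least $M^\star$, so the output equals the same label, which is not $(\D,0)$ (possibly the threshold label is $(\D,0)$ itself, in which case both outputs are $(\D,0)$ and nothing is claimed). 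Thus every $j\in S'_{u,p}$ lies in $S'_{u,p+1}$.

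The only delicate point is the case where the stable label past the threshold is itself $(\D,0)$. This is harmless, since membership in $S'_{u,p}$ already requires a non-$(\D,0)$ output. Otherwise the argument is a direct monotonicity check.
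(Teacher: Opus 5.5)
Your proof is correct and follows essentially the same route as the paper. Both arguments first note that the sample set $S_u$ does not depend on the size guess $N$. Both then show that membership in $S'_{u,p}$ survives doubling $N$: in the edge cases $\Prepare(\RC(\DC^j(u)),N_1,\cdot)$ does not depend on $N$, and otherwise the stability condition (i) of the strong algorithm for the sub-instance preserves any non-$(\D,0)$ label. You also tighten the paper's argument slightly: you note via constraint 5 of the $T$-labeling that only the $\HC^{T_R}$ branch can arise (the paper also treats a vacuous $\HC^{T_L}$ case), and you state explicitly that $\lceil NN_1^{-\beta}\rceil$ is nondecreasing in $N$.
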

\begin{proof}
    Let $N=2^pN_1$. The sample set $S_u$ is the same between $\Try(u,N_1,p)$ and $\Try(u,N_1,p+1)$ according to \cref{lem:poly-sampling-success}, so it is enough to prove that every $j\in S_{u,p}'$ belongs to $S_{u,p+1}'$. Write $x=\RC(\DC^j(u))$, and compare the calls $\Prepare(x, N_1, N)$ and $\Prepare(x, N_1, 2N)$.

    If $x$ lies in edge cases handled directly by $\Prepare$, then the output is independent of the size parameter. This covers the cases $x=\perp$, $x$ is inconsistent, and $t(x)=\perp$. The case where $\Prepare$ returns $\perp$ is impossible here: if $x$ is consistent, then $x$ is the right child of a consistent type-$r_T$ vertex and therefore has type $\RC_T(r_T)$, or type $\perp$ when $\RC_T(r_T)=\perp$. Hence $\Prepare(x,N_1,2N)$ is an output label other than $(\D,0)$ whenever $\Prepare(x,N_1,N)$ is one, and $j \in S_{u,p+1}'$.

    It remains to consider the case where $x$ lies in a sub-instance. If $x$ lies in a $\HC^{T_L}$ sub-instance, then $\Prepare(x,N_1,N)$ calls $\mathcal{A}_L(x,\lceil N_1^{1-\beta}\rceil,N)$. Since this value is not $(\D,0)$, condition (i) for $\mathcal{A}_L$ implies that $\Prepare(x,N_1,2N)$ would give the same output. The argument for a $\HC^{T_R}$ sub-instance is identical. Thus $j\in S_{u,p+1}'$, completing the proof.
\end{proof}

We first consider the case where $\Try(u, N_1, p_u^\star)$ returns the backup label $(c_0,b_0)$ from $\Traverse$.

\begin{lemma}\label{lem:poly-upper-bound-consistency-cyc}
    For a consistent vertex $u$ of type $r_T$, if $\Try(u, N_1, p_u^\star)$ returns the backup label $(c_0,b_0)$ from $\Traverse$, then $\HC^T$ constraints on $u$ are satisfied.
\end{lemma}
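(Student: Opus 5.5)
We split according to the two ways $\Traverse(u,N_1,N_u)$ can produce a backup label other than $(\D,0)$: the label $(\Cyc,0)$, or the label $(\U,B)$ with $B=\Collect(u,j-1,N_1,N_u)$. Here $j$ is the first step at which $\DC^j(u)$ is $\perp$ or inconsistent.

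\textbf{Common first step.} In both cases $\Try(u,N_1,p_u^\star)$ reached line \ref{line:alg-poly-upper-bound-output-backup}, so $S'_{u,p_u^\star}=\varnothing$. By \cref{obs:poly-upper-bound-sample-set-monotone}, $S'_{u,p}=\varnothing$ for every $p\le p_u^\star$. Let $v=\DC(u)$.

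In either case, whenever $v$ is consistent, $v$ has type $r_T$ by item 6 of \cref{def:poly-t-labeling-consistency}. Moreover, the $\DC$ walk from $v$ inspected by $\Traverse(v,N_1,\cdot)$ visits only vertices already visited by the walk from $u$. The sampling in line \ref{line:alg-poly-traverse-sample} uses shared per-vertex randomness for the fixed $N_1$. Also, each test $\Prepare(\RC(\cdot),N_1,N)$ depends only on the vertex and on $N$.

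Together these give the key claim: for every $p\le p_u^\star$, the set of vertices indexed by $S'_{v,p}$ is contained in the set of vertices indexed by $S'_{u,p}$, hence $S'_{v,p}=\varnothing$. Consequently, every $\Try(v,N_1,p)$ with $p\le p_u^\star$ either returns $v$'s own backup label or returns $\perp$. The call $\SolveHC(v,N_1,N_2)$ uses the same $N_2$, so its loop reaches $p_u^\star$. Therefore $v$'s final label is one of its backup labels, obtained at some $p\le p_u^\star$. Establishing this is the step I expect to be the main obstacle: one must rule out that $v$ picks an $\X$ candidate at a smaller size guess.

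\textbf{Case $(\Cyc,0)$.} The walk returns to $u$, so $v\ne\perp$ and $v$ lies on the same $\DC$ cycle. The cycle has length at most $\lceil 2N_1^\beta\rceil$, so the walk from $v$ also returns to $v$ before any other termination event. Hence $v$'s backup label is $(\Cyc,0)$ for every $p$.

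Here the claim is even simpler, because both walks cover exactly the cycle's vertex set. By the common step, $\Try(v,N_1,0)$ returns $(\Cyc,0)$. Thus $c\outsub(v)=\Cyc$, and item 1 holds at $u$. No other $c\outsub$ constraint concerns $\Cyc$, and the bit constraint requires only $b\outsub(u)=0$, which holds.

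\textbf{Case $(\U,B)$.} If $j=1$, then $v$ is $\perp$ or inconsistent. Item 2 holds, and $b'(u)=0$, either by definition or because inconsistent vertices output bit $0$.

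If $j\ge2$, then $v$ is consistent and its walk terminates at the same vertex, after $j-1$ steps. By the common step, $v$'s label is its backup $(\U,\Collect(v,j-2,N_1,N))$ for the first $p\le p_u^\star$ at which this $\Collect$ succeeds. Such a $p$ exists, because at $N_u$ the call $\Collect(v,j-2,\cdot)$ performs a subset of the work of $\Collect(u,j-1,\cdot)$ within the same budget. So $c\outsub(v)=\U$, and item 2 holds.

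For the bit, \cref{lem:poly-collect-success} gives that any returned $\Collect$ value equals the true $\bigoplus\bigl(b\insub\oplus b''\bigr)$ over its range, independently of $N$. Subtracting the range for $v$ from the range for $u$ yields
\[
b\outsub(u)=b\insub(u)\oplus b''(u)\oplus b\outsub(\DC(u)),
\]
which is \cref{eq:poly-lcl-def}.

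Finally, item 5 at $u$ holds because $\Collect(u,j-1,\cdot)$ succeeded. Success means $\Prepare(\LC(u),N_1,N_u)\ne(\D,0)$, and by condition (i) for $\mathcal{A}_L$, or size-independence in the edge cases, this equals the final label of $\LC(u)$.
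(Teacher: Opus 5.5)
Your proof is correct and follows essentially the same route as the paper: you use the same split into the $(\Cyc,0)$ and $(\U,b_0)$ cases (with $j=1$ versus $j\ge 2$), the same key point that consistent per-vertex sampling and \cref{obs:poly-upper-bound-sample-set-monotone} force $S'_{v,p}=\varnothing$ for all $p\le p_u^\star$ so that $v=\DC(u)$ receives its own backup label, and the same use of \cref{lem:poly-collect-success} for the bit equation and for item 5. The only differences are cosmetic: you factor the emptiness argument into one common step for both cases, and you avoid first showing $p_u^\star=0$ in the cycle case.
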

\begin{proof}
    Let $v = \DC(u)$. The backup label is returned only when $S_{u,p_u^\star}'=\varnothing$ and the backup label produced by $\Traverse(u,N_1,N_u)$ is not $(\D,0)$. According to \cref{obs:poly-upper-bound-sample-set-monotone}, we also know that $S_{u,p}' = \varnothing$ for every $0 \le p < p_u^\star$.

    First suppose this label is $(\Cyc,0)$. Then for some $\ell \in [1, \lceil 2N_1^{\beta} \rceil]$, $\DC^\ell(u)=u$, and all vertices on this directed cycle are consistent vertices of type $r_T$. Additionally, every $\Traverse(u,N_1,N)$ call will see this cycle and produce backup label $(\Cyc, 0)$ whatever the value of $N$ is, so $p_u^\star = 0$. We claim that $p_v^\star$ is also zero and the output label for $v$ will be $(\Cyc, 0)$, then the $\HC^T$ constraints on $u$ are satisfied.
    
    To see this, consider the call $\Try(v, N_1, 0)$. It calls $\Traverse(v,N_1,N_1)$, which will see the cycle and prepare $(\Cyc, 0)$ for the backup label. Additionally, this call will visit exactly the same set of vertices as in $\Traverse(u,N_1,N_1)$, and the sampling is consistent across calls of the same $N_1$ value, this means that $S_v$ will point to the same set of vertices as $S_u$. Since $S_{u,0}' = \varnothing$, $S_{v,0}'$ will also be empty and $\Try(v, N_1, 0)$ will output the backup label $(\Cyc, 0)$.

    Now suppose the backup label is $(\U,b_0)$. This means $\Traverse$ reaches a vertex $\DC^j(u)$ that is either $\perp$ or inconsistent, $u, \DC(u), \DC^2(u), \dots, \DC^{j-1}(u)$ are all consistent vertices of type $r_T$, and $\Collect(u,j-1,N_1,N_u)$ returns $b_0$. If $j = 1$, $\DC(u)$ is $\perp$ or colored $(\perp, 0)$, satisfying item 2 in the $c\outsub$ constraints for $u$, and from \cref{lem:poly-collect-success}, $b\outsub(u) = b\insub(u) \oplus b''(u)$, which is just \cref{eq:poly-lcl-def} with $b'(u) = 0$.
    
    Next, we consider the case where $j > 1$, then $\Traverse(v, N_1, N_u)$ will see the same terminating event by traveling along a suffix of the DC path traveled by $u$, and the corresponding $\Collect$ call collects a subset of bits from $\Collect(u,j-1,N_1,N_u)$. This means that $\Traverse(v, N_1, N_u)$ successfully generates a backup label $(\U, b_0')$. It is possible that the backup label can be produced with a smaller guess: Define $p_v'$ as the smallest guess $p$ such that $\Traverse(v, N_1, 2^pN_1)$ produces a backup label other than $(\D, 0)$. From the argument above, we know $p_v' \le p_u^\star$. In the remaining, we show that $p_v^\star = p_v'$ and the output label for $v$ makes $u$ satisfy its $\HC^T$ constraints.
    
    We have $S_v \subseteq S_u$, so similar to the argument in the cycle case, $S_{v,p}' = \varnothing$ for $0 \le p \le p_u^\star$. This means that for each such $p$, $\Try$ will check whether the backup label is not $(\D, 0)$. As $p_v'$ is the smallest possible $p$ value that generates a backup label other than $(\D, 0)$ and $p_v' \le p_u^\star$, $\Try(v,N_1,p_v')$ will output the backup label while smaller $p$ outputs $\perp$. This proves $p_v' = p_v^\star$. Since the backup label colors $v$ with $\U$, item 2 in the $c\outsub$ constraint is satisfied on $u$. For the $b\outsub$ constraint, according to \cref{lem:poly-collect-success}, we have
    \begin{gather*}
        b\outsub(u) = \bigoplus_{r=0}^{j-1}\left(b\insub(\DC^r(u))\oplus b''(\DC^r(u))\right), \\
        b\outsub(v) = \bigoplus_{r=1}^{j-1}\left(b\insub(\DC^r(u))\oplus b''(\DC^r(u))\right).
    \end{gather*}
    As a result, we have $b\outsub(u) = b\outsub(v) \oplus b\insub(u) \oplus b''(u)$, which is \cref{eq:poly-lcl-def}.

    In both cases, it remains to consider item 5 in the $c\outsub$ constraints that $\LC(u)$ should not be labeled $(\D, 0)$. This comes from the fact that the $\Collect$ call, which would fail when one of the left children is labeled $(\D, 0)$, outputs a bit rather than a failure.
\end{proof}

The other possibility that $\Try(u, N_1, p_u^\star)$ generates an output label is when $S_{u,p_u^\star}'$ is nonempty. The algorithm will then pick $j^\star$ as the minimum item in $S_{u,p_u^\star}'$ and set $w = \DC^{j^\star}(u)$. We first consider the case where $j^\star = 0$, so $w = u$.

\begin{lemma}\label{lem:poly-upper-bound-consistency-x}
    For a consistent vertex $u$ of type $r_T$, if $0 \in S_{u,p_u^\star}'$, then $u$ is colored $\X$ and the $\HC^T$ constraints on $u$ are satisfied.
\end{lemma}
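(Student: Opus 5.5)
The plan is to read off the output from the $j^\star=0$ branch of $\Try(u,N_1,p_u^\star)$ and then check the $c\outsub$ and $b\outsub$ constraints at $u$ one at a time. Since $0\in S'_{u,p_u^\star}$ and $0$ is the smallest possible index, we have $j^\star=0$ and $w=u$. The branch sets $c^\star=\X$, takes $b^\star$ from $\Prepare(\RC(u),N_1,N_u)$, and sets $B=\Collect(u,0,N_1,N_u)$. By the definition of $p_u^\star$, this call does not return $\perp$, so $B\neq\perp$ and the output of $\SolveHC(u,N_1,N_2)$ is $(\X,b^\star\oplus B)$.

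Next come the color constraints at $u$. Item 3(a) holds because we assumed $T\neq\bullet$, so $r_T$ is not a leaf and $t(u)=r_T$ is permitted to use $\X$. For item 3(b), recall that $0\in S'_{u,p_u^\star}$ means $\Prepare(\RC(u),N_1,N_u)\neq(\D,0)$. I will show this is also the final label of $\RC(u)$, which is produced by $\Prepare(\RC(u),N_1,2N_2)$ inside $\SolveHC(\RC(u),N_1,N_2)$.
\begin{itemize}
\item If $\RC(u)$ is $\perp$, inconsistent, or of type $\perp$, the output of $\Prepare$ does not depend on the size parameter.
\item Otherwise, as in the proof of \cref{obs:poly-upper-bound-sample-set-monotone}, $\RC(u)$ lies in a $\HC^{T_R}$ sub-instance. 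Since $N_u\le 2N_2$, stability (condition (i)) of $\mathcal{A}_R$ gives the claim: a non-$(\D,0)$ answer at the smaller parameter $\lceil N_uN_1^{-\beta}\rceil$ persists at $\lceil 2N_2N_1^{-\beta}\rceil$.
\end{itemize}
Hence $\RC(u)$ is not colored $\D$. Item 5 follows because $\Collect(u,0,N_1,N_u)$ returned a bit rather than $\perp$. By \cref{lem:poly-collect-success}, the $\Prepare$ call on $\LC(u)$ returned a label other than $(\D,0)$, and this label is the final label of $\LC(u)$.

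Finally, the bit equation. Since $c\outsub(u)=\X$, we have $b'(u)=b\outsub(\RC(u))$, or $0$ when $\RC(u)=\perp$. By the stability argument above, $b^\star$ equals this value; when $\RC(u)=\perp$, $\Prepare$ returns $(\X,0)$, which matches. By \cref{lem:poly-collect-success} with $k=0$, $B=b\insub(u)\oplus b''(u)$. Therefore $b\outsub(u)=b\insub(u)\oplus b'(u)\oplus b''(u)$, which is \cref{eq:poly-lcl-def}.

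The main obstacle is justifying that labels computed with the guess $N_u$ agree with the actual final labels, which are computed with $2N_2$. This rests entirely on condition (i) of the inductive hypothesis and on the inequality $N_u\le 2N_2$.
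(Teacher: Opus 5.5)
Your proposal is correct and follows the paper's proof essentially step by step. You identify $j^\star=0$, check items 3(a), 3(b) and 5 of the $c\outsub$ constraint, and assemble the bit equation from $b^\star$ and \cref{lem:poly-collect-success}; the only difference is that you justify that $\Prepare(\RC(u),N_1,N_u)$ gives the final label of $\RC(u)$ by splitting off the edge cases and applying stability of $\mathcal{A}_R$ directly (using $N_u\le 2N_2$), whereas the paper cites condition (i) of $\SolveHC$, and your version is slightly more precise since $\Prepare(\RC(u),N_1,N_u)$ corresponds to $\SolveHC(\RC(u),N_1,N_u/2)$, which degenerates when $p_u^\star=0$.
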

\begin{proof}
    Since $0\in S_{u,p_u^\star}'$, the minimum index $j^\star$ chosen is zero. Thus, the branch for $j^\star=0$ sets $c^\star=\X$. As $\Try(u,N_1,p_u^\star)$ succeeds, $\Collect(u,0,N_1,N_u)$ returns a bit, and the algorithm returns a label of color $\X$.

    The color $\X$ is allowed at $u$ because $u$ has type $r_T$ and $r_T$ is not a leaf. As $0\in S_{u,p_u^\star}'$, the call $\Prepare(\RC(u),N_1,2^{p_u^\star}N_1)$ returns a label other than $(\D,0)$. Since $2^{p_u^\star}N_1\le 2N_2$, by condition (i) of $\SolveHC$, this will be the output label for $\RC(u)$. This proves item 3(b) in the constraint on $c\outsub$. Similarly, the successful $\Collect$ call implies, by \cref{lem:poly-collect-success}, that the output label of $\LC(u)$ is not $(\D,0)$, proving item 5. In sum, $c\outsub$ constraints on $u$ are satisfied.

    Let $b^\star$ be the output bit  of $\Prepare(\RC(u),N_1,2^{p_u^\star}N_1)$, then $b^\star$ equals $b\outsub(\RC(u))$ when $\RC(u)$ exists, and equals zero if $\RC(u)=\perp$. This is exactly the definition of $b'(u)$ in \cref{eq:poly-b'-def}. By \cref{lem:poly-collect-success}, $\Collect(u,0,N_1,2^{p_u^\star}N_1)$ returns $b\insub(u)\oplus b''(u)$. Therefore $b\outsub(u) = b\insub(u)\oplus b''(u)\oplus b^\star$, which is exactly \cref{eq:poly-lcl-def} for a vertex of color $\X$.
\end{proof}

When $j^\star = \min S_{u,p_u^\star}' \ge 1$, we have $w \ne u$ and $\DC(u)$ is not $\perp$. In this case, the call $\Try(u,N_1,p_u^\star)$ will always color $u$ with $\U$. We define $v=\DC(u)$. Since $j^\star$ is sampled only after the termination check in line \ref{line:alg-poly-traverse-check-terminate}, all vertices $\DC^j(u)$ with $0\le j\le j^\star$ are consistent vertices of type $r_T$, so $p^\star_{\DC^j(u)}$ is well-defined. We begin with the following lemma that all vertices between $\P(w)$ and $u$ will agree on the coloring plan and color itself with $\U$.

\begin{lemma}\label{lem:poly-upper-bound-consistency-u-help-lemma}
    For a consistent vertex $u$ of type $r_T$, suppose $S_{u,p_u^\star}'\ne\varnothing$ and $0\notin S_{u,p_u^\star}'$. Define $j^\star$ as the minimum element in $S_{u,p_u^\star}'$, then for every $0 \le j < j^\star$, $c\outsub(\DC^j(u)) = \U$.
\end{lemma}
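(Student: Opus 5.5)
The plan is to fix $0 \le j < j^\star$, set $x=\DC^j(u)$, and show that $\SolveHC(x,N_1,N_2)$ returns a label of color $\U$. The case $j=0$ is immediate: since $j^\star\ge 1$, the call $\Try(u,N_1,p_u^\star)$ enters the branch that sets $c^\star=\U$. So assume $j\ge 1$ and write $w=\DC^{j^\star}(u)$.

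\textbf{Step 1: aligning the two walks.} First I will note that the $\DC$ walk of $\Traverse(x,N_1,\cdot)$ is the walk of $\Traverse(u,N_1,\cdot)$ shifted by $j$. Because sampling is shared among calls with the same $N_1$, an index $i\ge 0$ with $i\le j^\star-j$ lies in $S_x$ exactly when $i+j\in S_u$. Hence $j^\star-j\in S_x$.

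The test $\Prepare(\RC(\DC^{i}(x)),N_1,2^pN_1)$ is literally the same call as the one made for index $i+j$ in $\Try(u,N_1,p)$. So for every $p\le p_u^\star$, \cref{obs:poly-upper-bound-sample-set-monotone} together with the minimality of $j^\star$ gives:
\begin{itemize}
\item no index $i<j^\star-j$ belongs to $S'_{x,p}$;
\item $j^\star-j\in S'_{x,p_u^\star}$.
\end{itemize}
In particular, $\min S'_{x,p_u^\star}=j^\star-j\ge 1$.

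\textbf{Step 2: $p_x^\star\le p_u^\star$.} I will show that $\Try(x,N_1,p_u^\star)\ne\perp$. By Step 1, this call picks its target at index $j^\star-j$, i.e.\ the same vertex $w$ that $u$ targets. It then evaluates the same call $\SolveHC(w,N_1,N_u/2)$, so it enters the same branch as $u$ did. In that branch it calls $\Collect(x,k',N_1,N_u)$, where $k'$ is $j^\star-j-1$ or $j^\star-j$. This range is a suffix of the range of the successful call $\Collect(u,\cdot,N_1,N_u)$, so $x$'s call makes a subset of the same $\Prepare$ calls. Consequently none of them returns $(\D,0)$, and the budget $\Theta(N_uN_1^{\alpha-1}\log^{d_T}n)$ is not exceeded. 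I expect to justify this monotonicity of $\Collect$ directly from \cref{alg:poly-collect}; this is the point where care is needed, since the budget is a global counter over recursions. Thus $\Try(x,N_1,p_u^\star)$ succeeds, and $p_x^\star\le p_u^\star$.

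\textbf{Step 3: the label produced at $p_x^\star$ has color $\U$.} Consider $\Try(x,N_1,p_x^\star)$.
\begin{itemize}
\item If $S'_{x,p_x^\star}\neq\varnothing$, then by \cref{obs:poly-upper-bound-sample-set-monotone} it is contained in $S'_{x,p_u^\star}$, whose minimum is at least $1$ by Step 1. So its minimum is nonzero and the color is $\U$.
\item Otherwise the backup label is returned. This label is either $(\U,\cdot)$, which is what we want, or $(\Cyc,0)$, which must be excluded.
\end{itemize}

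\textbf{Step 4: excluding $\Cyc$ (the main obstacle).} Suppose $x$'s walk returns to $x$ within $\lceil 2N_1^\beta\rceil$ steps. By the in-degree bound from consistency (as in \cref{lem:poly-upper-bound-base-bullet}), $u$'s walk can only reach that cycle if $u$ lies on it. Then $\Traverse(u,N_1,N)$ produces the backup $(\Cyc,0)$ for every $N$, and this backup is already available at $p=0$.
\begin{itemize}
\item If $S'_{u,0}=\varnothing$, then $\Try(u,N_1,0)$ returns $(\Cyc,0)$. This contradicts $S'_{u,p_u^\star}\neq\varnothing$, because $p_u^\star$ would then be $0$ and $S'_{u,0}$ would have to be nonempty.
\item Hence $S'_{u,0}\neq\varnothing$, and $\Try(u,N_1,0)$ does not return $\perp$, since a failed $\Collect$ would still leave $\Cyc$ unreachable only if $\Try$ returns $\perp$. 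Here I must rely on the branch structure of \cref{alg:poly-try} to conclude $p_u^\star=0$. I expect to check this carefully.
\item With $p_u^\star=0$, we get $p_x^\star=0$ by Step 2. Step 1 then gives $j^\star-j\in S'_{x,0}$, so $S'_{x,0}\neq\varnothing$ and the backup branch is never reached.
\end{itemize}
Hence $x$ is colored $\U$.
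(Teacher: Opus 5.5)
Your Steps 1--3 are sound and follow the paper's argument. The paper likewise rules out $\X$ and $\D$ using $p_x^\star\le p_u^\star$ and the fact that $j^\star-j$ is the minimum of $S'_{x,p_u^\star}$.

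\textbf{The gap is in the second bullet of Step 4.} When $S'_{u,0}\neq\varnothing$, the call $\Try(u,N_1,0)$ enters the candidate branch, not the backup branch, and it can return $\perp$. With $N=N_1$ and a positive minimum of $S'_{u,0}$, the check $\SolveHC(w,N_1,N_1/2)$ returns $(\D,0)$ immediately. The subsequent $\Collect$ call, whose budget is $\Theta(N_1^{\alpha}\log^{d_T}n)$, can then fail, e.g.\ when an attached $\HC^{T_L}$ sub-instance is larger than $N_1$. If the minimum is $0$, $\Collect(u,0,N_1,N_1)$ can fail in the same way. So nothing forces $p_u^\star=0$. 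The sentence you offer as justification does not establish it, and the conclusion of that bullet is unsupported.

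\textbf{The missing idea} is that, under the Step 4 hypothesis, the case $S'_{u,0}\neq\varnothing$ cannot occur, because all vertices on a short cycle see the same candidates.
\begin{enumerate}
\item Since $x$ receives the backup $(\Cyc,0)$ at $p_x^\star$, you have $S'_{x,p_x^\star}=\varnothing$, hence $S'_{x,0}=\varnothing$ by \cref{obs:poly-upper-bound-sample-set-monotone}.
\item Both $\Traverse(u,N_1,\cdot)$ and $\Traverse(x,N_1,\cdot)$ go once around the entire cycle, of length $m\le\lceil 2N_1^\beta\rceil$, before breaking.
\item Sampling is per vertex, so $S_u$ and $S_x$ index the same set of cycle vertices, and the tests $\Prepare(\RC(\cdot),N_1,N_1)$ coincide.
\item Therefore $S'_{u,0}=\varnothing$, and your first bullet yields the contradiction.
\end{enumerate}
Alternatively, every element of $S'_{u,0}\subseteq S'_{u,p_u^\star}$ is an index $i$ with $j^\star\le i<m$. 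This is index $i-j<m$ on $x$'s walk, so it would lie in $S'_{x,0}$, which is empty.

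This cycle-wide identification of sample sets is exactly what the paper uses when it appeals to the argument of \cref{lem:poly-upper-bound-consistency-cyc}. Your Step 1 alignment covers only indices up to $j^\star-j$ on $x$'s walk, so it does not supply this.
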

\begin{proof}
    Define $y=\DC^j(u)$. When $c\outsub(y) = \Cyc$, the cycle containing $y$ must contain $u$ as otherwise the consistency will be violated. Then from the argument in \cref{lem:poly-upper-bound-consistency-cyc}, $c\outsub(u)$ should also be $\Cyc$, a contradiction.

    Another possibility is that $c\outsub(y) = \X$. This happens only when $0\in S_{y,p_y^\star}'$, which by consistent sampling would imply $j \in S_{u,p_y^\star}'$. According to \cref{obs:poly-upper-bound-sample-set-monotone}, to make sure $j^\star > j$ is the minimum among $S_{u,p_u^\star}'$, $p_y^\star > p_u^\star$ must hold.

    However, we will prove next that $p_y^\star \le p_u^\star$, leading to a contradiction. Notice that $j^\star$ is the minimum value in $S_{u,p_u^\star}'$ implies that $j^\star-j$ is the minimum value in $S_{y,p_u^\star}'$. As a result, $\Try(y, N_1, p_u^\star)$ will agree on the same vertex $w$ as $\Try(u, N_1, p_u^\star)$, the check $\SolveHC(w, N_1, N_u/2) \ge 0$ has the same result, and the corresponding $\Collect$ call in $\Try(y,N_1,p_u^\star)$ collects a subset of bits compared to the successful $\Collect$ call in $\Try(u, N_1, p_u^\star)$. This means that $\Try(y,N_1,p_u^\star)$ returns a label rather than a failure and $p_y^\star \le p_u^\star$.

    Since we have proved $p_y^\star \le p_u^\star$, $c\outsub(y)$ cannot be $\D$ either. So the only possible color for $y$ is $\U$, concluding the statement.
\end{proof}

Now we will prove the following lemma, which is sufficient to prove that $u$ is locally consistent:

\begin{lemma}\label{lem:poly-upper-bound-consistency-u}
    For a consistent vertex $u$ of type $r_T$, suppose $S_{u,p_u^\star}'\ne\varnothing$ and $0\notin S_{u,p_u^\star}'$. Define $i^\star_u$ to be the minimum value $i$ such that $\DC^i(u)$ is inconsistent, $\DC^i(u) = \perp$ or $c\outsub(\DC^i(u)) = \X$. Then:
    \begin{itemize}
        \item[1)] $i^\star_u$ exists, and for every $0\le i<i^\star_u$, $\DC^i(u)$ is a consistent vertex of type $r_T$ colored $\U$;
        \item[2)] The $c\outsub$ constraint on $u$ is satisfied, meaning that $\DC(u)$ is either $\perp$ or colored using one of the colors in $\{\X, \U, \perp\}$, and $\LC(u)$ is not colored $\D$ if exists.
        \item[3)] If $\DC^{i^\star_u}(u)$ is colored $\X$, then
        \begin{equation}
        b\outsub(u) = b'(\DC^{i^\star_u}(u)) \oplus \left( \bigoplus_{j=0}^{i^\star_u} (b\insub(\DC^j(u)) \oplus b''(\DC^j(u)))\right),\label{eq:poly-upper-bound-consistency-eq-1}
        \end{equation}
        where $b'$ and $b''$ are defined in \cref{eq:poly-b'-def,eq:poly-b''-def}; otherwise, if $\DC^{i^\star_u}(u)$ is inconsistent or $\DC^{i^\star_u}(u) = \perp$, then
        \begin{equation}
        b\outsub(u) = \bigoplus_{j=0}^{i^\star_u-1} (b\insub(\DC^j(u)) \oplus b''(\DC^j(u))).\label{eq:poly-upper-bound-consistency-eq-2}
        \end{equation}
    \end{itemize}
\end{lemma}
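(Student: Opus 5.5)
We argue by strong induction on the length of the $\DC$ walk, more precisely on the quantity $i^\star_u$, which we first show is finite and at most $j^\star$. Fix $u$ with $S'_{u,p_u^\star}\neq\varnothing$ and $0\notin S'_{u,p_u^\star}$, let $j^\star=\min S'_{u,p_u^\star}\ge 1$ and $w=\DC^{j^\star}(u)$. By \cref{lem:poly-upper-bound-consistency-u-help-lemma}, all of $\DC^j(u)$ for $0\le j<j^\star$ are consistent type-$r_T$ vertices colored $\U$. Hence $i^\star_u\ge j^\star$. We then split according to the branch taken in $\Try(u,N_1,p_u^\star)$ at line \ref{line:alg-poly-upper-bound-prev-success-block}.

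\emph{Case A: $\SolveHC(w,N_1,N_u/2)=(\D,0)$.} We claim $c\outsub(w)=\X$, so $i^\star_u=j^\star$. Since $\SolveHC(w,N_1,N_2)$ calls $\Try(w,N_1,p)$ for $p<p_u^\star$ exactly as $\SolveHC(w,N_1,N_u/2)$ does, all of these fail, so $p_w^\star\ge p_u^\star$. Conversely, $0\in S'_{w,p_u^\star}$ by consistent sampling: $w$'s walk is a suffix of $u$'s, and membership depends only on $\Prepare(\RC(w),N_1,N_u)$. Moreover, the $\Collect(w,0,\dots)$ call is a sub-collection of the successful $\Collect(u,j^\star,N_1,N_u)$ and stays within budget. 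So $\Try(w,N_1,p_u^\star)$ succeeds with color $\X$, giving $p_w^\star=p_u^\star$ and $c\outsub(w)=\X$. Then \cref{lem:poly-upper-bound-consistency-x} gives $b\outsub(w)=b\insub(w)\oplus b''(w)\oplus b'(w)$, with $b'(w)$ equal to the $b^\star$ computed by $\Try(u,\cdot)$ (by stability, condition (i) of $\mathcal{A}_R$). Combining this with \cref{lem:poly-collect-success} for $\Collect(u,j^\star,N_1,N_u)$ yields \cref{eq:poly-upper-bound-consistency-eq-1}. The budget comparison for sub-collections is the step I expect to be the main obstacle. I would handle it by arguing that the probes of a shorter $\Collect$ are a subset of those of the longer successful one, with identical per-call costs by determinism.

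\emph{Case B: $\SolveHC(w,N_1,N_u/2)\neq(\D,0)$.} By stability over $N_2$ (\cref{lem:poly-upper-bound-stability}), this is $w$'s final label, and it is not $\D$. If it is $\X$, then $i^\star_u=j^\star$ and the bit identity follows as in Case A, using $b^\star=b\outsub(w)$. Otherwise $w$ is colored $\U$ or $\Cyc$. The $\Cyc$ case is impossible, by the argument of \cref{lem:poly-upper-bound-consistency-cyc} (a cycle through $w$ would pass through $u$). If it is $\U$, the label comes either from a backup label, in which case \cref{lem:poly-upper-bound-consistency-cyc}'s analysis gives a terminating vertex and the formula of type \cref{eq:poly-upper-bound-consistency-eq-2} for $w$; or from a successful $\Try(w,\cdot)$ with nonzero index, in which case we apply the induction hypothesis to $w$, whose $p_w^\star<p_u^\star$ (since $p_w^\star\le\log_2(N_u/2N_1)$). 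We induct on $p_u^\star$ for this case. In both subcases $i^\star_u=j^\star+i^\star_w$, part 1) holds, and the formula for $w$ prepended with the $\Collect(u,j^\star-1,\dots)$ contribution (\cref{lem:poly-collect-success}) gives \cref{eq:poly-upper-bound-consistency-eq-1} or \cref{eq:poly-upper-bound-consistency-eq-2}.

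Part 2) then follows at once. $\DC(u)$ is either some $\DC^j(u)$ with $j<j^\star$ (colored $\U$), or $w$, whose color we determined above to be $\X$ or $\U$. The successful $\Collect$ ensures that $\LC(u)$ is not colored $\D$ by \cref{lem:poly-collect-success}.
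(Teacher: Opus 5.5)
Your argument is essentially the paper's proof. It inducts on $p_u^\star$ and splits on whether $\SolveHC(w,N_1,N_u/2)$ returns $(\D,0)$. In the first case it forces $c\outsub(w)=\X$ at the same guess via consistent sampling and the sub-collection argument. In the second it uses stability to fix $w$'s label as $\X$, or as $\U$ coming from a backup label or a smaller-guess $\Try$, with the same bit bookkeeping via \cref{lem:poly-collect-success}.

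Fix your opening sentence. $i^\star_u$ is not at most $j^\star$: you yourself get $i^\star_u\ge j^\star$, and $i^\star_u=j^\star+i^\star_w$ in Case B. It also cannot be the induction parameter, because its finiteness is part of what you are proving. The induction must be on $p_u^\star$ throughout, which is how you in fact use it.
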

\begin{proof}
    We prove the statement by induction on $p_u^\star$. This means that, to show the statement for a vertex $u$, we assume that the induction hypothesis holds for every consistent vertex $u'$ of type $r_T$ with $p_{u'}^\star < p_u^\star$. Let $j^\star=\min S_{u,p_u^\star}'$ and $w=\DC^{j^\star}(u)$, then $j^\star \ge 1$. $\Try$ then checks whether $\SolveHC(w,N_1,N_u/2)\ne(\D,0)$, and enters one of the two branches starting at lines \ref{line:alg-poly-upper-bound-prev-success-block} and \ref{line:alg-poly-upper-bound-prev-fail-block}. As $N_u = 2^{p_u^\star} N_1$, this is equivalent to checking whether $p_w^\star < p_u^\star$.

    Consider first the branch where $\SolveHC(w,N_1,N_u/2)\ne(\D,0)$. Then, according to condition (i) for $\SolveHC$, the label returned by this call is the output label of $w$, and we know $p_w^\star < p_u^\star$. The case where $c\outsub(w) = \Cyc$ cannot exist: as otherwise the $\DC$ cycle from $w$ should pass $u$, and $c\outsub(u)$ should also be $\Cyc$ according to \cref{lem:poly-upper-bound-consistency-cyc}, a contradiction. $p_w^\star < p_u^\star$ implies that $c\outsub(w) \ne \D$, thus $w$ is colored $\X$ or $\U$.

    If $w$ is colored $\X$, we set $i^\star_u=j^\star$. \cref{lem:poly-upper-bound-consistency-u-help-lemma} shows that for all $0 \le j < j^\star$, $c\outsub(\DC^j(u)) = \U$, so $j^\star$ is the correct value for $i^\star_u$. If $w$ is colored $\U$, $w$ receives this label from two possible places: one is from the backup label produced by $\Traverse$, in which, according to the proof in \cref{lem:poly-upper-bound-consistency-cyc}, there exists a value $i^\star_w$ such that $\DC^{i^\star_w}(w)$ is inconsistent or $\perp$ and all vertices between $w$ and $\DC^{i^\star_w - 1}(w)$ are colored $\U$. As a result, $i^\star_u$ exists and equals $j^\star+i^\star_w$. The final case is that $S_{w,p_w^\star}' \ne \varnothing$ and $0 \not\in S_{w,p_w^\star}'$. In this case, since $p_w^\star < p_u^\star$, we can apply the induction hypothesis to get $i^\star_w$, and assign $i^\star_u=i^\star_w + j^\star$. The arguments above in sum prove item 1).
    
    The item 2 in the $c\outsub$ constraint on $u$ follows because $\DC(u)$ is either $\perp$, inconsistent, or colored $\X$ when $i^\star_u=1$, and colored $\U$ when $i^\star_u>1$; also, the successful $\Collect$ call implies that $\LC(u)$ is not colored $\D$ if it exists, so item 5 in the $c\outsub$ constraint is also satisfied. This proves item 2).

    For items 3), notice that the $\Collect$ call in this branch uses parameter $(u,j^\star-1,N_1,N_u)$, and it returns a bit $B$ because $\Try(u,N_1,p_u^\star)$ succeeds. By \cref{lem:poly-collect-success}, the algorithm returns $b\outsub(u)=b\outsub(w)\oplus B$ with
    \[
        B=\bigoplus_{i=0}^{j^\star-1}
        \left(b\insub(\DC^i(u))\oplus b''(\DC^i(u))\right).
    \]

    If $w$ is colored $\X$, then substituting \cref{eq:poly-lcl-def} for $w$ proved in \cref{lem:poly-upper-bound-consistency-x} gives \cref{eq:poly-upper-bound-consistency-eq-1}. If $w$ is colored $\U$ and receives the backup label from $\Try(w,N_1,p_w^\star)$, the proof of \cref{lem:poly-upper-bound-consistency-cyc} implies that $b\outsub(w) = \bigoplus_{i=0}^{i^\star_w - 1}(b\insub(\DC^i(w)) \oplus b''(\DC^i(w)))$, and this gives \cref{eq:poly-upper-bound-consistency-eq-2}. Finally, the case where $w$ receives a label since $S_{w,p_w^\star}' \ne \varnothing$ and $0 \not\in S_{w,p_w^\star}'$ follows from the induction hypothesis.

    It remains to consider the branch where $\SolveHC(w,N_1,N_u/2)=(\D,0).$ In this case, we know $p_w^\star \ge p_u^\star$. Since $j^\star \in S_{u,p_u^\star}'$, the consistent sampling implies that $0\in S_{w,p_u^\star}'$. Moreover, $\Collect(w,0,N_1,N_u)$ will succeed since it collects a subset of bits compared to the successful call $\Collect(u,j^\star,N_1,N_u)$. As a result, from \cref{lem:poly-upper-bound-consistency-x}, we know that $p_w^\star = p_u^\star$, the output color for $w$ will be $\X$, and its local constraints are satisfied.

    We set $i^\star_u=j^\star$, then \cref{lem:poly-upper-bound-consistency-u-help-lemma} immediately implies item 1). Also, the successful $\Collect(u,j^\star,N_1,N_u)$ call implies that $\LC(u)$ is not colored $\D$ if it exists, so the $c\outsub$ constraint on $u$ is satisfied, implying item 2).

    For item 3), notice that the successful $\Collect(u,j^\star,N_1,N_u)$ call returns a bit $B$ with
    \[
        B=\bigoplus_{j=0}^{j^\star}
        \left(b\insub(\DC^j(u))\oplus b''(\DC^j(u))\right)
    \]
    according to \cref{lem:poly-collect-success}. At the same time, condition (i) for $\SolveHC$ implies that the bit $b^\star$ obtained from $\Prepare(\RC(w), N_1, N)$ is exactly $b'(w)$. Since $\Try(u,N_1,p_u^\star)$ returns $b^\star\oplus B$, this gives \cref{eq:poly-upper-bound-consistency-eq-1}.
\end{proof}

Wrapping up, we are finally able to show condition (ii) for $\SolveHC$.

\begin{lemma}\label{lem:poly-upper-bound-consistency}
    $\SolveHC$ satisfies condition (ii) in \cref{def:poly-strong-algo}.
\end{lemma}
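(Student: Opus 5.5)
Fix $N_2 \ge N_1$ and work on the high-probability event fixed before \cref{lem:poly-upper-bound-stability}: every sub-instance output satisfies \cref{def:poly-strong-algo} for $\mathcal{A}_L$ and $\mathcal{A}_R$, and \cref{lem:poly-sampling-success} holds at every root-type vertex. I go through the vertices of $I$ using the categories of \cref{lem:poly-upper-bound-sub-instance-categorization}. Inconsistent vertices and consistent type-$\perp$ vertices are labeled exactly as in \cref{lem:poly-upper-bound-base-perp}, so their constraints hold. For a vertex in an $\HC^{T_L}$ or $\HC^{T_R}$ sub-instance, the label is the output of $\mathcal{A}_L(\cdot,\lceil N_1^{1-\beta}\rceil,2N_2)$ or $\mathcal{A}_R(\cdot,\lceil N_1^{1-\beta}\rceil,\lceil 2N_2N_1^{-\beta}\rceil)$. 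By condition (ii) for these algorithms, all $\HC^{T_L}$ or $\HC^{T_R}$ constraints hold inside the sub-instance, except that the sub-instance root may be labeled $(\D,0)$.

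Next I argue that constraints of $\HC^{T_L}$ and $\HC^{T_R}$ coincide with those of $\HC^T$ at these vertices. The type in the $T$-labeling is the same vertex of $T$, and $\LC_T,\RC_T$ agree with those of the subtree. The only new feature in $\HC^T$ is that a sub-instance root may now legitimately output $\D$, because its type is not $r_T$. This is exactly the exception that was allowed. Constraint 4 in the subtree, $t\neq r_{T_L}$, is a restriction that $\HC^T$ no longer imposes, so a label that is valid for the sub-instance is still valid in $\HC^T$. Every neighbor referenced by the constraints (parent-side references do not occur, since all constraints look only at children) lies in the same sub-instance or is $\perp$/inconsistent/type-$\perp$. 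So these vertices are fine.

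That leaves consistent vertices $u$ of type $r_T$. If $u$ outputs $(\D,0)$, only item 4 is violated, and this is permitted. Otherwise $p_u^\star$ exists, and I split according to how $\Try(u,N_1,p_u^\star)$ produced its label:
\begin{itemize}
\item the backup label, handled by \cref{lem:poly-upper-bound-consistency-cyc};
\item $0\in S'_{u,p_u^\star}$, handled by \cref{lem:poly-upper-bound-consistency-x};
\item otherwise.
\end{itemize}
In the last case, \cref{lem:poly-upper-bound-consistency-u} gives the $c\outsub$ constraint directly. For the bit, take $v=\DC(u)$. If $i^\star_u=1$, then $v$ is $\perp$, inconsistent, or colored $\X$. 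Equation \eqref{eq:poly-upper-bound-consistency-eq-1} or \eqref{eq:poly-upper-bound-consistency-eq-2} then reduces to \eqref{eq:poly-lcl-def}, using \cref{lem:poly-upper-bound-consistency-x} for $b\outsub(v)$ when $v$ is $\X$. If $i^\star_u>1$, then $v$ is colored $\U$, and I express $b\outsub(v)$ by the same formula shifted by one, taking the terminal vertex $\DC^{i^\star_u}(u)$ as common to both. Subtracting the two expressions gives $b\outsub(u)=b\insub(u)\oplus b''(u)\oplus b\outsub(v)$.

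\textbf{Main obstacle.} The delicate point is this last step: $v$'s label may come from any of the three cases, not necessarily the third. I would handle it by showing in each case that $v$'s run ends at the same terminal vertex. If $v$ got its label via the backup, the proof of \cref{lem:poly-upper-bound-consistency-cyc} gives the path formula. If $0\in S'_{v,\cdot}$, then $v$ would be $\X$, which contradicts $i^\star_u>1$. In the remaining case, \cref{lem:poly-upper-bound-consistency-u} applied to $v$ gives $i^\star_v=i^\star_u-1$, since both are defined as the first non-$\U$ vertex on the same $\DC$ path. The formulas then telescope.

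Finally, constraints at type-$\perp$ vertices and sub-instance roots that refer to root-type parents do not exist, since all constraints only look at children. With that, condition (ii) follows.
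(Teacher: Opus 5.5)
Your proposal is correct and follows the paper's proof. It uses the same split into non-root-type vertices (handled by $\Prepare$ and the induction hypothesis) and consistent type-$r_T$ vertices: the allowed $(\D,0)$ exception, then the backup, $0\in S'_{u,p_u^\star}$, and remaining cases via \cref{lem:poly-upper-bound-consistency-cyc,lem:poly-upper-bound-consistency-x,lem:poly-upper-bound-consistency-u}. You also spell out two steps the paper leaves implicit, and both are correct: that a $(\D,0)$ label on any vertex of type $r_{T_L}$ or $r_{T_R}$ (not only the sub-instance root) is legal in $\HC^T$, and the telescoping with $v=\DC(u)$ that turns the path formulas of \cref{lem:poly-upper-bound-consistency-u} into the local equation \eqref{eq:poly-lcl-def}.
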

\begin{proof}
    Fix $N_2\ge N_1$ and consider the output labeling generated by $\SolveHC(u,N_1,N_2)$. Inconsistent vertices, vertices of type $\perp$, and vertices in $\HC^{T_L}$ or $\HC^{T_R}$ sub-instances satisfy all relevant constraints by the definition of $\Prepare$ and the induction hypothesis. Hence it remains to consider a consistent vertex $u$ with $t(u)=r_T$.

    If $\SolveHC(u,N_1,N_2)$ returns $(\D,0)$, then the only possible violation is item 4 in the constraint on $c\outsub$, which is explicitly allowed in condition (ii). Otherwise $p_u^\star$ is defined. If $\Try(u,N_1,p_u^\star)$ returns the backup label from $\Traverse$, the constraints on $u$ are satisfied by \cref{lem:poly-upper-bound-consistency-cyc}. Otherwise $S_{u,p_u^\star} \ne \varnothing$. If $0\in S_{u,p_u^\star}'$, constraints on $u$ are satisfied by \cref{lem:poly-upper-bound-consistency-x}. The remaining case is that $S_{u,p_u^\star}'\ne\varnothing$ and $0\notin S_{u,p_u^\star}'$, which is handled by \cref{lem:poly-upper-bound-consistency-u}. These lemmas cover every possible cases, so condition (ii) follows.
\end{proof}

Now we proceed to conditions (iii) and (iv).

\begin{lemma}\label{lem:poly-upper-bound-success}
    $\SolveHC$ satisfies condition (iii) in \cref{def:poly-strong-algo}.
\end{lemma}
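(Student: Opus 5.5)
Fix $N_2\ge\max(N_1,n^\star)$ and a consistent vertex $u$ with $t(u)=r_T$. The plan is to exhibit one index $p\le\lceil\log_2(N_2/N_1)\rceil$ with $\Try(u,N_1,p)\ne\perp$. Since $\SolveHC$ returns the first successful $\Try$, and $\Try$ never returns $(\D,0)$, this gives $a_{u,N_1,N_2}\ne(\D,0)$. The natural candidate is the smallest $p$ with $N=2^pN_1\ge n^\star$. Then $N<2\max(N_1,n^\star)\le 2N_2$, so $p\le\lceil\log_2(N_2/N_1)\rceil$ and the loop in $\SolveHC$ reaches it.

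We split according to the behaviour of $\Traverse(u,N_1,N)$.

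\textbf{Case 1: the walk terminates within $\lceil 2N_1^\beta\rceil$ steps.} Line \ref{line:alg-poly-traverse-check-terminate} fires. If the walk returns to $u$, the backup label is $(\Cyc,0)$. Otherwise $\Collect(u,j-1,N_1,N)$ is called. All $\HC^{T_L}$ sub-instances attached to the path are disjoint subsets of $I$, so their total size is at most $n^\star\le N$. By the second part of \cref{lem:poly-collect-success}, $\Collect$ returns a bit, and the backup label is a $\U$-label. In either sub-case the backup is not $(\D,0)$. Hence, whether or not $S'_{u,p}$ is empty, $\Try$ either returns the backup or enters the $S'_{u,p}\ne\varnothing$ branch.

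\textbf{Case 2: the walk does not terminate.} By \cref{lem:poly-sampling-success}, $S_u$ contains some $j$ for which the $\HC^{T_R}$ instance attached to $\DC^j(u)$ has size at most $n^\star/N_1^\beta\le NN_1^{-\beta}\le\lceil NN_1^{-\beta}\rceil$. Applying condition (iii) for $\mathcal{A}_R$ with $N_1$-parameter $\lceil N_1^{1-\beta}\rceil\le\lceil NN_1^{-\beta}\rceil$, the call $\Prepare(\RC(\DC^j(u)),N_1,N)$ is not $(\D,0)$. If $\RC$ is $\perp$, inconsistent or of type $\perp$, the edge cases of $\Prepare$ already avoid $(\D,0)$. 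So $S'_{u,p}\ne\varnothing$.

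\textbf{Success of the $S'_{u,p}\ne\varnothing$ branch.} In either case, if $\Try$ enters this branch, it returns $\perp$ only if its $\Collect$ call returns $\perp$. That call has the form $\Collect(u,k,N_1,N)$ with $k\in\{0,j^\star-1,j^\star\}$, and every $\DC^i(u)$ with $i\le k$ is consistent, as established in the first paragraph of the proof of \cref{lem:poly-collect-success}. The attached $\HC^{T_L}$ sub-instances again have total size at most $n^\star\le N$, so the second part of \cref{lem:poly-collect-success} shows $\Collect$ returns a bit. Hence $\Try(u,N_1,p)$ outputs a label that is not $(\D,0)$.

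\textbf{Main obstacle.} The delicate point is checking the hypotheses of the invoked results at the chosen $p$. Condition (iii) for $\mathcal{A}_R$ needs its third argument to be at least both its first argument and the sub-instance size; the first holds since $N\ge N_1$ gives $NN_1^{-\beta}\ge N_1^{1-\beta}$. \cref{lem:poly-collect-success} needs $N\le 2N_2$, which holds by the choice of $p$. I also need that the sampling event of \cref{lem:poly-sampling-success} is among the assumed high-probability events; it is.
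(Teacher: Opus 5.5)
Your proof is correct and follows the paper's argument. It uses the same case split on whether the $\DC$ walk terminates (the backup label is made non-$(\D,0)$ by the second part of \cref{lem:poly-collect-success}) or not (\cref{lem:poly-sampling-success} together with condition (iii) for $\mathcal{A}_R$ forces $S'_{u,p}\neq\varnothing$), and every $\Collect$ call succeeds because the attached $\HC^{T_L}$ sub-instances have total size at most $n^\star\le N$. The only difference is the choice of guess: the paper uses the last one, $p=\lceil\log_2(N_2/N_1)\rceil$ with $N_2\le N\le 2N_2$, whereas you take the smallest $p$ with $2^pN_1\ge n^\star$, which is exactly the stronger form the paper later invokes for the root-vertex bound in \cref{lem:poly-upper-bound-probes}.
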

\begin{proof}
    Let $N_2\ge \max(N_1,n^\star)$, and let $u$ be a consistent vertex with $t(u)=r_T$. Set $p=\left\lceil \log_2(N_2/N_1)\right\rceil$ and $N=2^pN_1$, then $N_2\le N\le 2N_2$, and in particular $N\ge n^\star$. Consider the call $\Try(u,N_1,p)$, we will show that this call must return a label rather than a failure, so the output label for $u$ is not $(\D, 0)$.

    First of all, for every $\Collect$ call during $\Try(u, N_1, p)$, all $\HC^{T_L}$ sub-instances are vertex-disjoint, so their total size is at most $n^\star \le N$. According to \cref{lem:poly-collect-success}, every $\Collect$ call will succeed.

    Now consider every possibility that $u$ gets a label. If $\Traverse(u,N_1,N)$ reaches a terminating event (i.e. reaching $\perp$, an inconsistent vertex, or returning to itself) and produces a backup label, the corresponding $\Collect$ call always succeed, so a backup label other than $(\D, 0)$ is produced. This means that, when $S_{u,p}' = \varnothing$, $\Try(u, N_1, p)$ still output a label other than $(\D, 0)$. When $S_{u,p}' \ne \varnothing$, $\Try$ always outputs a label other than $(\D, 0)$ no matter the result of $\SolveHC(w, N_1, N/2)$ is.

    For the other case that $\Traverse(u,N_1,N)$ does not produce a backup label, according to \cref{lem:poly-sampling-success}, there is an index $j\in S_u$ whose attached $\HC^{T_R}$ sub-instance has size at most $n^\star/N_1^\beta$. $\Prepare(\RC(\DC^j(u)),N_1,N)$ will invoke $\mathcal{A}_R(\RC(\DC^j(u)), \lceil N_1^{1-\beta}\rceil, \lceil NN_1^{-\beta}\rceil)$. As $\lceil N_1^{1-\beta}\rceil\le \lceil NN_1^{-\beta}\rceil$ and $n^\star / N_1^\beta \le \lceil NN_1^{-\beta} \rceil$, condition (iii) for $\mathcal{A}_R$ implies that this $\mathcal{A}_R$ call returns a label other than $(\D, 0)$. Hence $S_{u,p}'$ is nonempty, and $\Try$ outputs a label other than $(\D, 0)$ following the same argument as above.
\end{proof}

\begin{lemma}\label{lem:poly-upper-bound-probes}
    $\SolveHC$ satisfies condition (iv) in \cref{def:poly-strong-algo}.
\end{lemma}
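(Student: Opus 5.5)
The plan is to bound the probes of $\SolveHC(u,N_1,N_2)$ by splitting on the category of $u$ from \cref{lem:poly-upper-bound-sub-instance-categorization}, and then to handle root-type vertices by strong induction on the recursion through $\Try$.

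\textbf{Non-root vertices.} Edge cases cost $O(1)$. A vertex in a $\HC^{T_L}$ sub-instance is answered by $\mathcal{A}_L(u,\lceil N_1^{1-\beta}\rceil,2N_2)$. By \cref{eq:poly-strong-algo-complexity-universal} for $\mathcal{A}_L$ this costs
\[
O\bigl(N_2 N_1^{(1-\beta)(p-1)}\log^{d_{T_L}} n\bigr)=O\bigl(N_2N_1^{\alpha-1}\log^{d_T}n\bigr),
\]
using \cref{eq:poly-upper-bound-(p-1)(1-beta)}. A vertex in a $\HC^{T_R}$ sub-instance is answered by $\mathcal{A}_R$ with third parameter $\lceil 2N_2N_1^{-\beta}\rceil$. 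It costs
\[
O\bigl(N_2N_1^{-\beta}N_1^{(1-\beta)(q-1)}\log^{d_T} n\bigr).
\]
Since $-\beta+(1-\beta)(q-1)=q(1-\beta)-1=\alpha-1$ by \cref{eq:poly-exponent-identities}, this is also $O(N_2N_1^{\alpha-1}\log^{d_T}n)$. When $T_L$ or $T_R$ is $\perp$, the cost is $O(1)$, which is dominated.

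\textbf{Cost of a single $\Try(u,N_1,p)$ with $N=2^pN_1$, excluding the recursive $\SolveHC(w,N_1,N/2)$ call.}
\begin{itemize}
\item $\Traverse$ walks $O(N_1^\beta)$ steps. Each of its $\Collect$ calls is capped at $O(NN_1^{\alpha-1}\log^{d_T}n)$ by the budget. This cap dominates the walk, because $N_1^\beta\le N N_1^{\alpha-1}$ whenever $\alpha-1\ge\beta-1$, i.e.\ $\alpha\ge\beta$.
\item By \cref{lem:poly-sampling-success}, $|S_u|=O(\log n)$. Each $\Prepare(\RC(\cdot),N_1,N)$ is an $\mathcal{A}_R$ call with third parameter $\lceil NN_1^{-\beta}\rceil$. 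By the computation above it costs $O(NN_1^{\alpha-1}\log^{d_{T_R}}n)$. Because $d_{T_R}\le d_T-1$, the extra $\log n$ factor from $|S_u|$ is absorbed.
\end{itemize}
So $\Try(u,N_1,p)$ costs $C\,2^pN_1^{\alpha}\log^{d_T}n$ plus the nested call.

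\textbf{The nested calls.} The nested call $\SolveHC(w,N_1,N/2)$ only runs $\Try(w,N_1,p')$ for $p'\le p-1$, and those in turn nest only to smaller $p'$. Define $F(p)$ as the worst-case cost of $\SolveHC(\cdot,N_1,2^{p}N_1/2)$ over root-type vertices. Then
\[
F(p)\le \sum_{p'<p}\bigl(C2^{p'}N_1^\alpha\log^{d_T}n+F(p')\bigr).
\]
This recursion naively blows up exponentially. This is the main obstacle: I must avoid double counting.

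I would fix this by unrolling the call tree. Each $\Try(x,N_1,p')$ triggers at most one nested $\SolveHC$ call, at a vertex further along the $\DC$ path, and that call only uses $p''<p'$. Within one top-level $\SolveHC$, I would bound the total work at level $p'$ by the number of distinct pairs $(x,p')$ touched times the per-$\Try$ cost. A cleaner route is memoization: the outputs are deterministic given the randomness, so repeated identical calls can be cached within a query. Then the distinct calls are $\Try(x,p')$ along a chain of vertices with geometrically bounded budgets, and the sum $\sum_{p'\le k}2^{p'}$ is $O(2^k)$. If memoization is not allowed, I would argue that the nested chain has depth $O(\log n)$ and that each level costs geometrically less, yielding at most an extra $\log$ factor. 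Absorbing that factor would need slack from $|S_u|$ or the depth.

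\textbf{Totals.} Summing over $p\le \lceil\log_2(N_2/N_1)\rceil$ gives $O(N_2N_1^{\alpha-1}\log^{d_T}n)$ for \cref{eq:poly-strong-algo-complexity-universal}. For \cref{eq:poly-strong-algo-complexity-root}, I would use the proof of \cref{lem:poly-upper-bound-success}. That proof shows $\Try$ succeeds at the first $p$ with $2^pN_1\ge\max(N_1,n^\star)$, so the binary lifting stops there. The geometric sum is then bounded by its last term, $O(\max(N_1,n^\star)N_1^{\alpha-1}\log^{d_T}n)$.
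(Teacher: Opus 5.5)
Everything except the nested call matches the paper's bookkeeping. That covers the $\mathcal{A}_L$ and $\mathcal{A}_R$ costs via \cref{eq:poly-upper-bound-(p-1)(1-beta)}, the per-$\Try$ cost $O(2^pN_1^{\alpha}\log^{d_T}n)$ from the $\Collect$ budgets, the $O(\log n)$ calls to $\mathcal{A}_R$ absorbed by $d_{T_R}\le d_T-1$, and the final geometric sums.

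The gap is the nested call $\SolveHC(w,N_1,N/2)$, which you flag but never bound. Your recursion does not blow up exponentially. It satisfies $F(p+1)=2F(p)+C2^pN_1^\alpha\log^{d_T}n$, so $F(p)=\Theta(p\,2^pN_1^{\alpha}\log^{d_T}n)$, which is exactly one extra factor $\lceil\log_2(N_2/N_1)\rceil$. The paper closes this step with the induction $\mathrm{B}_{\mathrm{all}}(d-1)\Rightarrow\mathrm{A}(d)\Rightarrow\mathrm{B}_{\mathrm{all}}(d)$. That is precisely your recursion with the constant hidden in $O(\cdot)$. Since $d$ runs up to $\lceil\log_2(N_2/N_1)\rceil$, the hidden constant grows with $d$ and no single one works at every step. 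So your worry is justified, and the paper does not supply the missing idea.

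Your fallback does not close the gap either, because there is no slack: $|S_u|\cdot\log^{d_{T_R}}n$ is already $\log^{d_T}n$ when $d_{T_R}=d_T-1$. Without further structure you get an extra $\log n$. That is harmless for \cref{thm:poly-HTC-complexity}, but it forces a larger power of $\log n$ in \cref{def:poly-strong-algo} and in the $\Collect$ budget, so that \cref{lem:poly-collect-success} survives.

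Caching is the right repair, and it is legitimate. $\Try(u,N_1,p)$ depends only on $(u,N_1,p)$ and the shared randomness, as used in \cref{lem:poly-upper-bound-stability}, so reusing results within a query changes no output. However, your sentence that the distinct calls form a chain summing to $\sum_{p'\le k}2^{p'}$ is the whole content of the step, and it is unproved. Caching alone does not stop nested calls from landing on many different vertices, each tried at many levels, which again gives $\Theta(k2^k)$.

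The missing invariant is the following.
\begin{itemize}
\item Call index $j$ on the $\DC$ path from $u$ \emph{available at level $r$} if $\DC^j(u)$ is sampled and $\Prepare(\RC(\DC^j(u)),N_1,2^rN_1)\ne(\D,0)$. By consistent sampling and \cref{obs:poly-upper-bound-sample-set-monotone}, this depends only on the vertex and on $r$, is monotone in $r$, and does not depend on where the traversal started.
\item Let $\phi(r)$ be the first index available at level $r$. By induction on the call tree, every nested $\SolveHC(y,N_1,2^{r-1}N_1)$ has $y=\DC^{\phi(r)}(u)$.
\item All indices before $\phi(r)$ are unavailable at level $r$, hence at every level $r'<r$. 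So $\Try(y,N_1,r')$ can only recurse into $\DC^{\phi(r')}(u)$. The traversal window may suppress a recursion but never redirects it.
\end{itemize}
Hence the distinct $\Try$ calls are $(u,r)$ for $r\le k$, together with $(\DC^{b}(u),r')$ for $r'<r_b:=\max\{r:\phi(r)=b\}$ over distinct targets $b$. Distinct targets have distinct $r_b\le k$, so the total is $O(2^kN_1^{\alpha}\log^{d_T}n)$. Your stopping argument then gives \cref{eq:poly-strong-algo-complexity-root} as well.
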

\begin{proof}
    Apart from the induction hypothesis for $\mathcal{A}_L$ and $\mathcal{A}_R$, we apply an induction on $d$ to prove the following three claims simultaneously for the current instance of size $n^\star$:
    \begin{itemize}
        \item[A($d$)] For every $0\le r\le d$,
        \[
            \Try(u,N_1,r)
            \text{ uses }
            O(2^rN_1^\alpha\log^{d_T}n)
            \text{ probes.}
        \]
        \item[B$_{\mathrm{all}}(d)$] For every $N_1\le N_2\le N_12^d$ and every vertex $u$,
        \[
            \SolveHC(u,N_1,N_2)
            \text{ uses }
            O(N_2N_1^{\alpha-1}\log^{d_T}n)
            \text{ probes.}
        \]
        \item[B$_{\mathrm{root}}(d)$] For every $N_1\le N_2\le N_12^d$ and every consistent vertex $u$ with $t(u)=r_T$,
        \[
            \SolveHC(u,N_1,N_2)
            \text{ uses }
            O(\max(N_1,n^\star)N_1^{\alpha-1}\log^{d_T}n)
            \text{ probes.}
        \]
    \end{itemize}
    The claims B$_{\mathrm{all}}(d)$ and B$_{\mathrm{root}}(d)$ for all $d\ge 0$ are exactly the two bounds in condition (iv). The base case for $d=-1$ is trivial. We show B$_{\mathrm{all}}(d-1)\Rightarrow$ A($d$), and then A($d)\Rightarrow$ B$_{\mathrm{all}}(d)$ and B$_{\mathrm{root}}(d)$, concluding the induction.

    We first show B$_{\mathrm{all}}(d-1)\Rightarrow$ A($d$). Fix $0\le r\le d$ and set $N=2^rN_1$. Consider the subroutines that $\Try(u,N_1,r)$ calls. It makes only constantly many $\Collect$ calls, each with budget 
    \[
        O(NN_1^{\alpha-1}\log^{d_T}n)
        =
        O(2^rN_1^\alpha\log^{d_T}n).
    \] 
    The $\Traverse$ call uses $O(N_1^\beta)\le O(N_1^\alpha)\le O(N_1^\alpha)$ probes. For each sampled index $j\in S_u$, the call to $\Prepare(\RC(\DC^j(u)),N_1,N)$ is either handled directly in $O(1)$ probes, or invokes $\mathcal{A}_R(\RC(\DC^j(u)),\lceil N_1^{1-\beta}\rceil,\lceil NN_1^{-\beta}\rceil)$.
    By induction hypothesis, applying \eqref{eq:poly-strong-algo-complexity-universal} for $\mathcal{A}_R$, we know that each of these calls costs
    \begin{align*}
        &O\left(NN_1^{-\beta}\cdot N_1^{(1-\beta)(q-1)}\log^{d_T - 1}n\right) \tag{$d_{T_R} \le d_T - 1$}\\
        ={}&O\left(NN_1^{\alpha-1}\log^{d_T-1}n\right). \tag{$-\beta+(1-\beta)(q-1) = \frac{-(q-p)}{1+q-p} + \frac{q-1}{1+q-p} = \frac{p-1}{1+q-p} = \alpha-1$}
    \end{align*}
    Since $|S_u|=O(\log n)$ by \cref{lem:poly-sampling-success}, all calls together cost $O(NN_1^{\alpha-1}\log^{d_T}n)$. Finally, $\Try$ may call $\SolveHC(w,N_1,N/2)$. If $r=0$, this call returns immediately because $N/2<N_1$. Otherwise, B$_{\mathrm{all}}(d-1)$ applies and gives cost 
    \[
        O((N/2)N_1^{\alpha-1}\log^{d_T}n) 
        = 
        O(NN_1^{\alpha-1}\log^{d_T}n).
    \]
    In sum, $\Try(u, N_1, r)$ has a cost of $O(2^rN_1^\alpha \log^{d_T} n)$. This proves A($d$).

    We now prove B$_{\mathrm{all}}(d)$ from A$(d)$. The initial $\Prepare(u,N_1,2N_2)$ may call one of $\mathcal{A}_L$ and $\mathcal{A}_R$. By \cref{eq:poly-strong-algo-complexity-universal} applied to both $\mathcal{A}_L$ and $\mathcal{A}_R$, the call $\mathcal{A}_L(u,\lceil N_1^{1-\beta}\rceil,2N_2)$ costs 
    \[
        O(N_2N_1^{(1-\beta)(p-1)}\log^{d_{T_L}}n)
        =
        O(N_2N_1^{\alpha-1}\log^{d_T}n), \tag{\cref{eq:poly-upper-bound-(p-1)(1-beta)}}
    \]
    and the call $\mathcal{A}_R(u, \lceil N_1^{1-\beta}\rceil,\lceil 2N_2N_1^{-\beta}\rceil)$ costs
    \[
        O(N_2N_1^{-\beta}N_1^{(1-\beta)(q-1)}\log^{d_{T_R}}n)
        =
        O(N_2N_1^{\alpha-1}\log^{d_T}n).
    \]
    If $\Prepare$ returns a label, this proves B$_{\mathrm{all}}(d)$. Otherwise, $u$ is a consistent vertex of type $r_T$, and $\SolveHC$ runs $\Try(u,N_1,r)$ over a geometric sequence of size guesses all the way up to $2N_2$. By A($d$), the total cost of all these calls is $O(N_2N_1^{\alpha-1}\log^{d_T}n)$. This proves B$_{\mathrm{all}}(d)$.

    It remains to prove B$_{\mathrm{root}}(d)$ from A$(d)$. Let $u$ be a consistent vertex with $t(u)=r_T$. Then $\Prepare(u,N_1,2N_2)$ returns $\perp$ with $O(1)$ probes. When $N_2<n^\star$, B$_{\mathrm{all}}(d)$ already implies B$_{\mathrm{root}}(d)$, so we consider the other case where $N_2\ge n^\star$. The proof of \cref{lem:poly-upper-bound-success} shows that $\Try(u,N_1,r)$ succeeds as soon as $2^rN_1 \ge n^\star$. Therefore the largest size guess used by $\SolveHC$ is $O(\max(N_1,n^\star))$, and the probe complexity forms a geometric sequence according to A($d$). As a result, they sum up to
    \[
        O(\max(N_1,n^\star)N_1^{\alpha-1}\log^{d_T}n),
    \]
    which is the desired bound. This proves B$_{\mathrm{root}}(d)$ and completes the proof of condition (iv).
\end{proof}

\begin{proof}[Proof of \cref{thm:poly-upper-bound-strong}]
    The proof is by induction on the size of $T$. The base cases are \cref{lem:poly-upper-bound-base-perp,lem:poly-upper-bound-base-bullet}. For the induction part, assume that for some tree $T$ there is a strong algorithm for $\HC^{T_L}$ and $\HC^{T_R}$, then the algorithm is given by $\SolveHC$. For the conditions, when all the events described in the start of the analysis section happens, which is of probability at least $1-n^\star/n^2$, all four conditions in \cref{def:poly-strong-algo} are satisfied shown in \cref{lem:poly-upper-bound-stability,lem:poly-upper-bound-consistency,lem:poly-upper-bound-success,lem:poly-upper-bound-probes}.
\end{proof}

\subsection{Lower bound}

We will show the lower bound via induction on a stronger claim that one specific output bit in a dedicated distribution is indistinguishable for any deterministic LCA with $o(n^{\val(T)})$ probes.

    \begin{lemma}\label{lem:poly-hard-root-bit}
    Let $T\ne\perp$ be a good ordered binary tree. For every sufficiently large $n$, there is a set $\mathcal{S}_{T,n}$ of $\HC^T$ instances of size at most $n$ with the following properties:
    \begin{enumerate}[(i)]
        \item All instances in $\mathcal{S}_{T, n}$ have the same underlying graph and $T$-labeling, in which the underlying graph is a degree-4 tree, every vertex is consistent, and there is a unique root vertex $x$ with $t(x) = r_T$ and $\P(x) = \perp$;
        \item For any instance $I$ in $\mathcal{S}_{T,n}$, there exists $b_I \in \{0,1\}$ such that any solution satisfying all $\HC^T$ constraints for $I$ must have $b\outsub(x) = b_I$. Define $\mathcal{S}_{T,n}^B$ as the set of instances $I$ in $\mathcal{S}_{T,n}$ with $b_I = B$;
        \item For any set $V_0$ of vertices of size $o(n^{\val(T)})$, there exists a bijection $f_{V_0}$ between $\mathcal{S}_{T,n}^0$ and $\mathcal{S}_{T,n}^1$, where for each $I \in \mathcal{S}_{T,n}$ and $u \in V_0$, $b\insub(u)$ is the same between $I$ and $f_{V_0}(I)$.
    \end{enumerate}
\end{lemma}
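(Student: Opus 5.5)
We prove a parametrized version by structural induction on $T$, mirroring the proof of \cref{lem:log-hard-root-bit} via \cref{lem:log-hard-root-bit-by-construction}. For a good tree $T\ne\perp$ and a size parameter $m$, we build a graph $G_{T,m}$ with a $T$-labeling and sets $\mathcal{S}^0_{T,m},\mathcal{S}^1_{T,m}$ of $b\insub$ assignments. We then show three things: $|G_{T,m}|=\Theta(m)$; every instance in $\mathcal{S}^B_{T,m}$ forces $b\outsub(x)=B$; and any $V_0$ with $|V_0|<c_T m^{\val(T)}$ admits an input-preserving bijection between $\mathcal{S}^0_{T,m}$ and $\mathcal{S}^1_{T,m}$. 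Taking $m=\Theta(n)$ with a small constant then gives the lemma.

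\textbf{Construction.} For $T=\bullet$, take a $\DC$ path $x=v_1,\dots,v_m$ of type $r_T$ and no attachments. The sets $\mathcal{S}^B$ contain all assignments with $\bigoplus_i b\insub(v_i)=B$. For general $T$, with $p=\val(T_L)$, $q=\val(T_R)$ and $\beta=(q-p)/(1+q-p)$, take a $\DC$ path $v_1,\dots,v_\ell$ of type $r_T$ with $\ell=\lceil m^\beta\rceil$. Attach a copy of $G_{T_R,m^{1-\beta}}$ to each $\RC(v_i)$, and, if $T_L\ne\perp$, a copy of $G_{T_L,m^{1-\beta}}$ to each $\LC(v_i)$. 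The resulting tree has degree at most $4$, every vertex is consistent, and its size is $\Theta(m)$. Inputs are built recursively as in the polylog case:
\begin{itemize}
\item pick a single bit $c$ and draw every right-attached sub-instance from $\mathcal{S}^{c}_{T_R}$;
\item for each $i$, pick bits $a_i$ and draw the left-attached sub-instance at $v_i$ from $\mathcal{S}^{a_i}_{T_L}$;
\item set the $b\insub(v_i)$ freely, subject to $B=c\oplus\bigoplus_i (b\insub(v_i)\oplus a_i)$.
\end{itemize}

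\textbf{Forced root bit.} By induction, every solution has $b\outsub(\LC(v_i))=a_i$ and $b\outsub(\RC(v_i))=c$, with $b\outsub=0$ for vertices of type $\perp$. Top vertices cannot be $\D$ (constraint 4) and cannot be $\Cyc$ (there is no cycle). So the path from $x$ consists of $\U$ vertices up to the first $\X$ vertex $v_j$, or runs to the end if there is none. If some $v_j$ is colored $\X$, the constraint on $\RC$ together with the induction hypothesis makes the $\RC$-instance solution valid. Expanding \cref{eq:poly-lcl-def}, the root bit equals $\bigoplus_{i\le j}(b\insub(v_i)\oplus a_i)\oplus c$, or $\bigoplus_{i\le\ell}(b\insub(v_i)\oplus a_i)$ if there is no $\X$. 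The dependence on the choice of $j$ is the obstacle. We remove it as in the overview by choosing, for each $j$, the free bits so that all prefix parities agree. We require $b\insub(v_i)\oplus a_i=0$ for $i\ge 2$, $b\insub(v_1)\oplus a_1=B\oplus c$, and $c=0$ so that the case without $\X$ agrees; the freedom then sits in the $a_i$ and $c$ is fixed. This constraint is too strong for the bijection argument. The cleaner fix is to take $b\insub$ as a free random bit and let all $a_i$ be free, while $B$ is determined by the constraints that $b\insub(v_i)\oplus a_i$ is $0$ for $i\ge2$ and $c$ is free. Then every prefix gives $b\insub(v_1)\oplus a_1\oplus c$, but the case without $\X$ gives $b\insub(v_1)\oplus a_1$. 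We therefore also need that no-$\X$ is impossible or that $c=0$. Settling this parity bookkeeping, most likely by handling the no-$\X$ case by making $\DC(v_\ell)$ terminate on a leaf sub-gadget, is the step I expect to be the main obstacle.

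\textbf{Indistinguishability.} Given $V_0$ of size $o(m^{\alpha})$, and using $\alpha=\beta+p(1-\beta)=q(1-\beta)$ from \cref{eq:poly-exponent-identities}, we flip $B$ while preserving inputs on $V_0$ in one of two ways.
\begin{itemize}
\item If some left sub-instance at $v_i$ has fewer than $c(m^{1-\beta})^p$ probed vertices, flip its $a_i$ via the inductive bijection and compensate at an unprobed top vertex; this succeeds if fewer than $\ell$ top vertices are probed. Otherwise every $v_i$ is probed or every left instance is heavily probed, which costs $\ell\cdot m^{p(1-\beta)}=m^\alpha$ probes. Pigeonhole over the $\ell$ indices makes this case split precise.
\item If every right sub-instance has fewer than $c\,m^{q(1-\beta)}=c\,m^\alpha$ probed vertices, apply the inductive bijections to all $\ell$ right sub-instances simultaneously to flip $c$.
\end{itemize}
Since $|V_0|=o(m^\alpha)$, at least one of these options is always available.
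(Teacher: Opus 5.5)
Your skeleton matches the paper: structural induction, a type-$r_T$ $\DC$ path of length $\approx m^\beta$ carrying sub-instances of size $m^{1-\beta}$, and the bookkeeping $\alpha=\beta+p(1-\beta)=q(1-\beta)$. But there is a genuine gap at exactly the step you flag, and it is not a bookkeeping detail.

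Write $d_i=b\insub(v_i)\oplus a_i$ for the contribution of $v_i$, and $R_i$ for the forced bit of the instance at $\RC(v_i)$. A solution whose first $\X$ is at $v_k$ gives $b\outsub(x)=d_1\oplus\cdots\oplus d_k\oplus R_k$. The all-$\U$ solution gives $d_1\oplus\cdots\oplus d_\ell$. The all-$\U$ solution can never be excluded: nothing in $\HC^T$ forces an $\X$. Moreover, by consistency condition 6 ($t(\DC(v))=t(v)$), anything hung below $v_\ell$ via $\DC$ is just another type-$r_T$ top vertex, so no terminal gadget helps.

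Hence (ii) holds if and only if $R_k=d_{k+1}\oplus\cdots\oplus d_\ell$ for every $k$. With one shared bit $R_k=c$, this forces $c=0$ and $d_i=0$ for all $i\ge 2$. The root bit then collapses to $d_1$. An algorithm learns $d_1$ from $b\insub(v_1)$ plus one solve of the left instance at $v_1$, i.e.\ with $\tilde O(m^{p(1-\beta)})=\tilde O(m^{\alpha-\beta})=o(m^\alpha)$ probes (a single probe if $T_L=\perp$). So (iii) cannot hold for that family.

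The missing idea is that the right bits must be \emph{suffix parities}. The paper draws the instance at $\RC(v_i)$ from $\mathcal{S}^{S_i}_{T_R}$ with $S_i=A_{i+1}\oplus\cdots\oplus A_\ell$ and $S_\ell=0$. Here $A_i$ is the contribution of $v_i$: the forced bit of the left instance (with $b\insub(v_i)=0$) when $T_L\ne\perp$, and $b\insub(v_i)$ itself otherwise. Then every legal solution gives $A_1\oplus\cdots\oplus A_k\oplus S_k=\bigoplus_i A_i$.

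Your indistinguishability step is tailored to the shared-bit design and does not survive this fix; in particular, the either/or case split is wrong. Flipping only right instances can never flip the forced bit, because that bit equals the all-$\U$ value $\bigoplus_i A_i$, which ignores them. Flipping only one contribution $A_{i^\star}$, with or without a compensating flip elsewhere on the top path, breaks $R_k=S_k$ for every $k<i^\star$. The image then leaves the family and has no forced root bit.

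The bijection must do both at once:
\begin{itemize}
\item Pick $i^\star$ whose left instance meets $V_0$ least. By averaging over the $\ell\approx m^\beta$ left instances, this intersection has size $o(m^{\alpha}/\ell)=o\bigl((m^{1-\beta})^p\bigr)$.
\item Flip $A_{i^\star}$ by the inductive $T_L$-bijection. When $T_L=\perp$, instead flip $b\insub(v_{i^\star})$ for an unprobed $v_{i^\star}$; one exists because then $\alpha=\beta$ and $|V_0|=o(\ell)$.
\item Simultaneously apply the inductive $T_R$-bijection to the right instance at every $v_k$ with $k<i^\star$. Each of these meets $V_0$ in only $o(m^\alpha)=o\bigl((m^{1-\beta})^q\bigr)$ vertices.
\end{itemize}
This flips exactly $A_{i^\star}$ and $S_1,\dots,S_{i^\star-1}$, hence the forced bit, while preserving $b\insub$ on $V_0$.
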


\begin{proof}
    We prove the lemma by induction on the size of $T$. 
    
    First, consider the base case $T=\bullet$. $\mathcal{S}_{\bullet,n}$ contains all instances with the following property: The underlying graph is a directed $\DC$ path $u_1,u_2,\ldots,u_n$, where $u_1=x$; in the $T$-labeling, all vertices have type $r_T$, all $\LC$ and $\RC$ ports are $\perp$. There is no constraint on $b\insub$ inputs on each vertex, so $\mathcal{S}_{\bullet,n}$ contains $2^n$ instances, one different $b\insub$ assignment per instance.
    
    Condition (i) is satisfied by definition. Since $r_T$ is a leaf and a root, $\X$ and $\D$ are prohibited. Moreover, no vertex can output $\Cyc$. Thus, every legal solution colors each vertex with $\U$, and \cref{eq:poly-lcl-def} gives $b_I = b\outsub(x)=\bigoplus_{i=1}^n b\insub(u_i)$. This proves condition (ii). Finally, for any $V_0$ of size $o(n)$, choose $w\notin V_0$ and map an instance to the instance obtained by flipping only $b\insub(w)$. This map is clearly a bijection between $\mathcal{S}_{\bullet,n}^0$ and $\mathcal{S}_{\bullet,n}^1$, and it preserves all input bits on $V_0$. Hence condition (iii) holds.

    Now assume that $T \ne \bullet$, and write
    \[
        p=\val(T_L),\qquad q=\val(T_R),\qquad
        \alpha=\val(T),\qquad
        \beta=\frac{q-p}{1+q-p}.
    \]
    Since $T$ is good, we have $p<q$. Let $\ell=\Theta(n^\beta)$ and $m=\Theta(n^{1-\beta})$, and we define $\mathcal{S}_{T,n}$ as all instances that can be generated by the following procedure:
    \begin{itemize}
        \item Create a directed $\DC$ path $x = u_1,u_2,\ldots,u_\ell$ of length $\ell$, all of which have type $r_T$.
        \item Generate $(A_1,\ldots,A_\ell) \in \{0,1\}^\ell$ and define $S_i=A_{i+1}\oplus A_{i+2}\oplus\cdots\oplus A_\ell$ for $0 \le i < \ell$ and $S_\ell = 0$.
        \item If $T_L=\perp$, then for every vertex $u_i (1 \le i \le \ell)$, set $\LC(u_i)=\perp$ and $b\insub(u_i)=A_i$. Otherwise, we set $b\insub(u_i)=0$ and attach to $\LC(u_i)$ an arbitrary instance drawn from $\mathcal{S}_{T_L,m}^{A_i}$.
        \item For every $i \in [1,\ell]$, attach to $\RC(u_i)$ an arbitrary instance drawn from $\mathcal{S}_{T_R,m}^{S_i}$.
    \end{itemize}
      For ``attaching an instance to a port'', we mean creating a copy of the instance, making the port point to the root of the copy, and setting the parent port of that copied root back to the vertex $u_i$. The constants hidden in $\Theta(\cdot)$ are chosen so that the final instance has at most $n$ and $\Theta(n)$ vertices. This completes the definition of $\mathcal{S}_{T,n}$.
    
    According to the induction hypothesis, the structure and $T$-labeling of the attached instance is fixed, while the structure and $T$-labeling of the top-level $\DC$ path is also fixed. Hence all instances in $\mathcal{S}_{T,n}$ have the same underlying graph and $T$-labeling, in which it is easy to prove that the graph is a tree, every vertex is consistent, and the only type-$r_T$ vertex with parent $\perp$ is $x=u_1$. Thus condition (i) holds.

    Now we consider condition (ii). Define the contribution of a vertex $u_i$, denoted as $c(u_i)$, as $b''(u_i) \oplus b\insub(u_i)$, where $b''$ is defined in \cref{eq:poly-b''-def}. According to the construction, $c(u_i) = A_i$ for each $1 \le i \le \ell$. Now, for any legal output, define $k = \min\{i \mid c\outsub(u_i) = \X\}$, and $k = \infty$ when there is no such $i$. Notice that $\Cyc$ is prohibited since there is no cycle in the instance. When $k \ne \infty$, expanding \cref{eq:poly-lcl-def} gives 
    \[
        b\outsub(x)=c(u_1) \oplus \dots \oplus c(u_k) \oplus S_k = A_1\oplus\cdots\oplus A_k\oplus S_k= \oplus_{i=1}^\ell A_i.
    \]
    If $k = \infty$, then we have
    \[
        b\outsub(x)=c(u_1) \oplus \dots \oplus c(u_\ell) = A_1\oplus\cdots\oplus A_\ell=\oplus_{i=1}^\ell A_i.
    \]
    In all, condition (ii) is satisfied by $b_I = \oplus_{i=1}^\ell A_i$.

    It remains to prove condition (iii). Define $L_i$ and $R_i$ as the $\HC^{T_L}$ and $\HC^{T_R}$ instance attached to $u_i$, with $L_i=\varnothing$ when $T_L=\perp$. For a fixed set of vertices $V_0$ of size $o(n^{\val(T)})$, let $V_i^L = V_0 \cap L_i$ and $V_i^R = V_0 \cap R_i$. We have
    \[
    |V_i^R| = o(n^\alpha) = o(n^{q(1-\beta)}) = o(m^q),
    \]
    so according to the induction hypothesis, for each $i \in [1,\ell]$ there exists a bijection $f^R_i$ between $\mathcal{S}_{T_R, m}^0$ and $\mathcal{S}_{T_R, m}^1$ satisfying condition (iii) for $V_i^R$. Additionally, when $T_L \ne \perp$, we have 
    \[
        \min_i |V_i^L| \le \frac{1}{\ell} \sum_{i=1}^\ell |V_i^L| \le \frac{o(n^\alpha)}{\ell} = o(n^{\alpha - \beta}) = o(n^{p(1-\beta)}) = o(m^p).
    \]
    Take $i^\star = \arg \min_i |V_i^L|$, then according to the induction hypothesis, there exists a bijection $f^L_{i^\star}$ between $\mathcal{S}_{T_L, m}^0$ and $\mathcal{S}_{T_L, m}^1$ satisfying condition (iii) for $V_{i^\star}^L$. For the case where $T_L = \perp$, $\alpha = q/(1+q) = \beta$, so $|V_0| = o(n^\beta)$ and there exists some $i^\star$ where $u_{i^\star} \not\in V_0$. We let $f_{i^\star}^L$ denote the operation that flips $b\insub(u_{i^\star})$.

    Now we construct the bijection $f_{V_0}$ between $\mathcal{S}_{T,n}^0$ and $\mathcal{S}_{T,n}^1$. For each instance $I$ in $\mathcal{S}_{T,n}$ with bits $A_1, A_2, \dots, A_\ell$, we flip $A_{i^\star}$ by replacing the $\HC^{T_L}$ instance attached to $u_{i^\star}$ with its image in $f_{i^\star}^L$ if $T_L\ne\perp$, or by flipping $b\insub(u_{i^\star})$ if $T_L=\perp$. For every $1 \le i < i^\star$, the value $S_i$ also flips, so we replace the $\HC^{T_R}$ instance attached to $u_i$ with its image under $f_i^R$. All other inputs are unchanged. This completes the construction.

    Since $f_{i^\star}^L$ and all $f_i^R$ are bijections, and the same components are changed when we apply the map again, $f_{V_0}$ is an bijection. It remains to check that $f_{V_0}$ flips $b_I$. According to the induction hypothesis, the map changes exactly $A_{i^\star}$ and $S_1, S_2, \dots,S_{i^\star-1}$. By following the same analysis for condition (ii), we know that all solutions for $f_{V_0}(I)$ must have $b\outsub(x) = b_I \oplus 1$, so $f_{V_0}$ is indeed a bijection between $\mathcal{S}_{T,n}^0$ and $\mathcal{S}_{T,n}^1$.

    Finally, we show that $I$ and $f_{V_0}(I)$ have the same input labels on $V_0$. For replaced $\HC^{T_L}$ and $\HC^{T_R}$ copies, this follows from the induction hypothesis applied to $V_{i^\star}^L$ and to the sets $V_i^R$. All other attached instances are identical. Vertices in the top $\DC$ chain keep the same input when $T_L \ne \perp$, and when $T_L = \perp$, the only input bit that changes is at $u_{i^\star}\notin V_0$. This concludes condition (iii).
\end{proof}

\begin{proof}[Proof of \cref{lem:poly-lower-bound}]
    Suppose for contradiction that there is a randomized LCA that solves $\HC^T$ with $o(n^{\val(T)})$ probes and succeeds in every $\HC^T$ instance of size at most $n$ with probability larger than $1/2$. Pick any distribution $\mathcal{D}$ over instances of size at most $n$. According to Yao's minimax principle, there is a deterministic LCA $\mathcal{A}$ with the same probe bound that succeeds with probability larger than $1/2$ over $\mathcal{D}$.

    Now set $\mathcal{D}$ to be a uniform distribution over $\mathcal{S}_{T,n}$. Since all instances in the support have the same graph and $T$-labeling, the transcript of $\mathcal{A}(x)$ is determined by the set of probed vertices and their $b\insub$ values. Fix any possible transcript $\tau$, and let $V(\tau)$ be the set of vertices probed in this transcript. We have $|V(\tau)|=o(n^{\val(T)})$. Applying condition (iii) of \cref{lem:poly-hard-root-bit} to $V(\tau)$ gives a bijection between $\mathcal{S}_{T,n}^0$ and $\mathcal{S}_{T,n}^1$ that preserves all input bits seen in $\tau$. Since all instances have the same probability of being chosen, and by condition (ii), a correct output must have root bit $b_I$, we have $\Pr[b_I=0 \mid \tau]=\Pr[b_I=1 \mid \tau]=1/2$, and the success probability of $\mathcal{A}$ over $\mathcal{D}$ is at most $1/2$, a contradiction.
\end{proof}

\section{Discussion and Future Directions}
\label{sec:discussion}

In this section, we provide discussions and propose several possible future directions.

\subsection{Unifying Constructions Between \texorpdfstring{$\LOCAL$}{LOCAL} and \texorpdfstring{$\Vol$}{VOLUME}}
\label{subsec:local-vs-vol}

As we have mentioned in the introduction, our construction is completely different from the previous approach to provide a similar result in the distributed $\LOCAL$ model \cite{balliu2018new}. 

The $\LOCAL$ constructions in \cite{balliu2018almost,balliu2018new} encode the execution of some special Turing machines into the graph, and the time complexity of such Turing machines will determine the round complexity of the problem. The current realization of such encodings crucially relies on the fact that each vertex reads \emph{all} vertices in its local neighborhood and checks whether the input graph follows certain nice structure locally, so the LCLs require $\Omega(n)$ probes in the $\Vol$ and LCA model.

Conversely, our construction does not yield dense round complexities in the $\LOCAL$ model for the polynomial regime. For the problem $\HC^T$, follow the same argument in \cref{sec:tech-overview}, if the left and right children have complexity $\tilde\Theta(n^p)$ and $\tilde\Theta(n^q)$, then the round complexity for a top path of length $\ell$ will be $\tilde O(\min(\ell + (n/\ell)^p, (n/\ell)^q))$. The term balancing always gives $\tilde \Theta(n^{1/(p+1)})$ round complexity for $p<q$, so the construction only gives $\tilde \Theta(n^{1/k})$ round complexity for some positive integer $k$.

As a consequence, we raise the following open question to find a possible way to integrate density results in the $\LOCAL$ and $\Vol$ model.

\begin{question}
    For any pair of rationals $0 < r/s < p/q < 1$, construct a LCL $\Pi_{r/s,p/q}$ with round complexity $\tilde \Theta(n^{r/s})$ in the $\LOCAL$ model and probe complexity $\tilde \Theta(n^{p/q})$ in the $\Vol$ or LCA model.
\end{question}

\subsection{LCAs and \texorpdfstring{$\Vol$}{VOLUME} Model}
\label{subsec:lca-vs-vol}

From \cref{obs:vol-weaker-than-lca}, the $\Vol$ model is an LCA with additional locality restrictions. However, are these additional restrictions necessary? Can we find a problem that separates LCAs and $\Vol$ algorithms? This question is trivial if we allow arbitrary problems to be under consideration (for example, consider the problem that each vertex in a graph needs to output the input label on vertex $1$). But for LCLs, this question is still open.

\begin{question}
    Provide an LCL that separates the LCA and $\Vol$ model more than polynomially, or prove that such an LCL does not exist.
\end{question}

The constraint of connectivity for the probe region is shown to be not important in \cite{goos2015non}: Any randomized LCA $\mathcal{A}$ for an LCL can be transformed into another randomized LCA $\mathcal{A}'$ with polynomial probe complexity blowup, with the additional property that $\mathcal{A}'$ always keeps its probed region a connected component in the graph. So the remaining question is to show whether the difference in randomness is important. 

The question about different randomness models is also explicitly mentioned as an open question in \cite{rosenbaum2020seeing}. There is work discussing the benefit of shared randomness in the distributed setting \cite{balliu2025shared, hadad2025shared}, and it has been shown that in the $\LOCAL$ model, shared randomness helps exponentially for certain LCLs. Yet it is currently unknown whether this is also the case in the $\Vol$ model and LCAs.

\subsection{LCLs and CSPs}
\label{subsec:lcl-vs-csp}

The definition of LCLs is almost the same as that of finite-domain CSPs, except that LCLs are defined on bounded-degree constraint graphs. A natural question is whether it is possible to utilize existing tools in CSP complexity classification for LCL complexity classification. From decades of research, it turns out that the complexity of a CSP is largely related to its \emph{polymorphism}, an important definition in universal algebra. See \cite{barto2017polymorphisms} for an introduction to the application of polymorphism in complexity theory and \cite{barto2015constraint} for a literature review. 

Previous work suggests indirect connections between universal algebra tools and distributed graph algorithms: \cite{butti2024complexity} applies universal algebra techniques to distributed CSPs \cite{yokoo1992distributed}, but the distributed model is not a standard one. An emergent line of research, connecting distributed graph algorithms with descriptive combinatorics \cite{bernshteyn2023distributed, bernshteyn2025borel, brandt2021local, grebik2023local}, may also have implicit connections to a universal algebraic approach \cite{thornton2022algebraic}. Yet, we are not aware of a directly application of universal algebra to the classification of LCL complexities in standard distributed models. As a result, we raise the following question.

\begin{question}
    Provide a way to apply a universal algebraic approach to LCL complexity classification in distributed settings, or provide evidence that these algebraic tools may not be suitable to classify LCLs.
\end{question}

\bibliographystyle{plain}
\bibliography{ref}

\end{document}